\definecolor{myurlcolor}{rgb}{0.5,0,0}
\definecolor{mycitecolor}{rgb}{0,0,1}
\definecolor{myrefcolor}{rgb}{0,0,1}
\newcommand{\define}[1]{{\bf \boldmath{#1}}}
\newcommand{\maps}{\colon}
\newcommand{\R}{{\mathbb R}}
\newcommand{\Mark}{\mathsf{Mark}}
\newcommand{\FinSet}{\mathsf{FinSet}}
\newcommand{\FinVect}{\mathsf{FinVect}}
\newcommand{\Dynam}{\mathsf{Dynam}}
\newcommand{\SemiAlgRel}{\mathsf{SemiAlgRel}}
\newcommand{\LinRel}{\mathsf{LinRel}}
\newcommand{\MMark}{\mathbb{M}\mathbf{ark}}
\newcommand{\LLinRel}{\mathbb{L}\mathbf{inRel}}
\newcommand{\lD}{\ensuremath{\mathbb{D}}}
\tikzstyle{main node} =[circle,fill=white!20,draw,font=\sffamily\Large\bfseries]
\tikzstyle{terminal}=[circle,fill=white!20,draw,font=\sffamily\Large\bfseries,color=purple,fill=none]
\let\ea\expandafter
\def\mdef#1#2{\ea\ea\ea\gdef\ea\ea\noexpand#1\ea{\ea\ensuremath\ea{#2}}}
\def\alwaysmath#1{\ea\ea\ea\global\ea\ea\ea\let\ea\ea\csname your@#1\endcsname\csname #1\endcsname
  \ea\def\csname #1\endcsname{\ensuremath{\csname your@#1\endcsname}}}
\mdef\fahat{\hat{\fa}}
\def\rightarrowtailfill@{\arrowfill@{\Yright\joinrel\relbar}\relbar\rightarrow}
\newcommand\xrightarrowtail[2][]{\ext@arrow 0055{\rightarrowtailfill@}{#1}{#2}}
\def\twoheadrightarrowfill@{\arrowfill@{\relbar\joinrel\relbar}\relbar\twoheadrightarrow}
\newcommand\xtwoheadrightarrow[2][]{\ext@arrow 0055{\twoheadrightarrowfill@}{#1}{#2}}
\def\slashedarrowfill@#1#2#3#4#5{%
  $\m@th\thickmuskip0mu\medmuskip\thickmuskip\thinmuskip\thickmuskip
   \relax#5#1\mkern-7mu%
   \cleaders\hbox{$#5\mkern-2mu#2\mkern-2mu$}\hfill
   \mathclap{#3}\mathclap{#2}%
   \cleaders\hbox{$#5\mkern-2mu#2\mkern-2mu$}\hfill
   \mkern-7mu#4$%
}
\def\rightslashedarrowfill@{%
  \slashedarrowfill@\relbar\relbar\mapstochar\rightarrow}
\newcommand\xslashedrightarrow[2][]{%
  \ext@arrow 0055{\rightslashedarrowfill@}{#1}{#2}}
\mdef\hto{\xslashedrightarrow{}}
\mdef\htoo{\xslashedrightarrow{\quad}}
\theoremstyle{plain}
\newtheorem{thm}{Theorem}[section]
\newtheorem{lem}[thm]{Lemma}
\newtheorem{prop}[thm]{Proposition}
\newtheorem{con}[thm]{Conjecture}
\newtheorem{defn}[thm]{Definition}
\theoremstyle{remark}
\def\thmqedhere{\expandafter\csname\csname @currenvir\endcsname @qed\endcsname}
\begin{document}
\sloppy

\title{Coarse-graining open Markov processes}

\maketitle
\begin{center}   
  {\em John\ C.\ Baez \\}
  \vspace{0.3cm}
  {\small
 Department of Mathematics \\
    University of California \\
  Riverside CA, USA 92521 \\ and \\
 Centre for Quantum Technologies  \\
    National University of Singapore \\
    Singapore 117543  \\    } 
  \vspace{0.4cm}
{\em Kenny Courser \\}
\vspace{0.3cm}
   {\small
   Department of Mathematics \\
  University of California \\
  Riverside CA, USA 92521 \\}
  \vspace{0.3cm}   
  {\small email:  baez@math.ucr.edu, courser@math.ucr.edu\\} 
  \vspace{0.3cm}   
  {\small \today}
  \vspace{0.3cm}   
\end{center} 

\begin{abstract}
\noindent
Coarse-graining is a standard method of extracting a simpler Markov process from a more complicated one by identifying states.  Here we extend coarse-graining to `open' Markov processes: that is, those where probability can flow in or out of certain states called `inputs' and `outputs'.  One can build up an ordinary Markov process from smaller open pieces in two basic ways: composition, where we identify the outputs of one open Markov process with the inputs of another, and tensoring, where we set two open Markov processes side by side.  In previous work, Fong, Pollard and the first author showed that these constructions make open Markov processes into the morphisms of a symmetric monoidal category.   Here we go further by constructing a symmetric monoidal \emph{double} category where the 2-morphisms include ways of coarse-graining open Markov processes.    We also extend the already known `black-boxing' functor from the category of open Markov processes to our double category.   Black-boxing sends any open Markov process to the linear relation between input and output data that holds in steady states, including nonequilibrium steady states where there is a nonzero flow of probability through the process.  To extend black-boxing to a functor between double categories, we need to prove that black-boxing is compatible with coarse-graining.  
\end{abstract}

\vfill \eject
\section{Introduction}
\label{sec:intro}

A `Markov process' is a stochastic model describing a sequence of transitions between states in which the probability of a transition depends only on the current state.   The only Markov processes we consider here are continuous-time Markov processes with a finite set of states.   Such a Markov process can be drawn as a labeled graph:
\[
\begin{tikzpicture}[->,>=stealth',shorten >=1pt,thick,scale=1]
  \node[main node] (1) at (0,2.2) {$\scriptstyle{a}$};
  \node[main node](2) at (0,-.2) {$\scriptstyle{b}$};
  \node[main node](3) at (2.83,1)  {$\scriptstyle{c}$};
  \node[main node](4) at (5.83,1) {$\scriptstyle{d}$};
  \path[every node/.style={font=\sffamily\small}, shorten >=1pt]
    (3) edge [bend left=12] node[above] {$4$} (4)
    (4) edge [bend left=12] node[below] {$2$} (3)
    (2) edge [bend left=12] node[above] {$2$} (3) 
    (3) edge [bend left=12] node[below] {$1$} (2)
    (1) edge [bend left=12] node[above] {$1/2$}(3);
\end{tikzpicture}
\]
In this example the set of states is $X = \{a,b,c,d\}$.   The numbers labeling edges
are transition rates, so the probability $\pi_i(t)$ of being in state $i \in X$ at time 
$t \in \R$ evolves according to a linear differential equation
\[    \frac{d}{dt}\, \pi_i(t) = \sum_{j \in X} H_{ij}\, \pi_j(t) \]
called the `master equation', where the matrix $H$ can be read off from the diagram:
\[ H=
\left[\begin{array}{rrrr}
    -1/2    & 0    & 0    & 0  \\
      0                   & -2   & 1   & 0 \\
     1/2   & 2     & -5  & 2 \\
      0                  & 0      & 4   & -2 \\
\end{array}\right].
\]
If there is an edge from a state $j$ to a distinct state $i$, the matrix entry $H_{ij}$ is 
the number labeling that edge, while if there is no such edge, $H_{ij} = 0$.  The diagonal
entries $H_{ii}$ are determined by the requirement that the sum of each column is zero.  This requirement says that the rate at which probability leaves a state equals the rate at which it
goes to other states.   As a consequence, the total probability is conserved:
\[    \frac{d}{dt} \sum_{i \in X} \pi_i(t) = 0 \]
and is typically set equal to $1$.

However, while this sum over all states is conserved, the same need not be true
for the sum of $\pi_i(t)$ over $i$ in a subset $Y \subset X$.   This poses 
a challenge to studying a Markov process as built from smaller parts: the parts are
not themselves Markov processes.  The solution is to describe them as `open' Markov 
processes.   These are a generalization in which probability 
can enter or leave from certain states designated as inputs and outputs:
\[
\begin{tikzpicture}[->,>=stealth',shorten >=1pt,thick,scale=1]
  \node[main node] (1) at (0,2.2) {$\scriptstyle{a}$};
  \node[main node](2) at (0,-.2) {$\scriptstyle{b}$};
  \node[main node](3) at (2.83,1)  {$\scriptstyle{c}$};
  \node[main node](4) at (5.83,1) {$\scriptstyle{d}$};
\node(input) [color=purple] at (-2,1) {\small{\textsf{inputs}}};
\node(output) [color=purple] at (7.83,1) {\small{\textsf{outputs}}};
  \path[every node/.style={font=\sffamily\small}, shorten >=1pt]
    (3) edge [bend left=12] node[above] {$4$} (4)
    (4) edge [bend left=12] node[below] {$2$} (3)
    (2) edge [bend left=12] node[above] {$2$} (3) 
    (3) edge [bend left=12] node[below] {$1$} (2)
    (1) edge [bend left=12] node[above] {$1/2$}(3);
    
\path[color=purple, very thick, shorten >=10pt, ->, >=stealth] (output) edge (4);
\path[color=purple, very thick, shorten >=10pt, ->, >=stealth, bend left] (input) edge (1);
\path[color=purple, very thick, shorten >=10pt, ->, >=stealth, bend right]
(input) edge (2);
\end{tikzpicture}
\]
In an open Markov process, probabilities change with time according to the `open master equation', a generalization of the master equation that includes inflows and outflows.  
In the above example, the open master equation is 
\[    \frac{d}{dt}\left[\begin{array}{r} \pi_a(t) \\ \pi_b(t) \\ \pi_c(t) \\ \pi_d(t) \end{array}\right]  \;\; = \; \; \left[\begin{array}{rrrr}
    -1/2    & 0    & 0    & 0  \\
      0                   & -2   & 1   & 0 \\
      1/2   & 2     & -5  & 2 \\
      0                  & 0      & 4   & -2 \\
\end{array}\right] 
\left[\begin{array}{r} \pi_a(t) \\ \pi_b(t) \\ \pi_c(t) \\ \pi_d(t) \end{array}\right]  \; + \;
\left[\begin{array}{c} I_a(t) \\ I_b(t) \\0 \\ 0 \end{array}\right] \; - \; 
\left[\begin{array}{c} 0 \\ 0 \\0 \\ O_d(t) \end{array}\right] .
\]
To the master equation we have added a term describing inflows at the states $a$ and $b$ and subtracted a term describing outflows at the state $d$.   The functions $I_a, I_b$ and $O_d$ are
not part of the data of the open Markov process.  Rather, they are arbitrary smooth real-valued functions of time.  We think of these as provided from outside---for example, though not necessarily, from the rest of a larger Markov process of which the given open Markov process is part.

Open Markov processes can be seen as morphisms in a category, since we can compose two open Markov processes by identifying the outputs of the first with the inputs of the second.   Composition lets us build a Markov process from smaller open parts---or conversely, analyze the behavior of a Markov process in terms of its parts.   The resulting category has been studied in a number of papers \cite{BF,BFP,FongThesis,PollardThesis}, but here we go further and introduce a double category to describe coarse-graining.

`Coarse-graining' is a widely used method of simplifying a Markov process by mapping its set of states $X$ onto some smaller set $X'$ in a manner that respects, or at least approximately respects, the dynamics \cite{And,Buchholz}.  Here we introduce coarse-graining for open Markov processes.   We show how to extend this notion to the case of maps $p \maps X \to X'$ that are not surjective, obtaining a general concept of morphism between open Markov processes.

Since open Markov processes are already morphisms in a category, it is natural to treat morphisms between them as morphisms between morphisms, or `2-morphisms'.  We can do this using double categories.   These were first introduced by Ehresmann \cite{Ehresmann63, Ehresmann65}, and they have long been used in topology and other branches of pure mathematics \cite{Brown1,Brown2}.  More recently they have been used to study open dynamical systems \cite{LS} and open discrete-time Markov chains \cite{Panan}.  So, it should not be surprising that they are also useful for open Markov processes.

A 2-morphism in a double category looks like this:
\[
\begin{tikzpicture}[scale=1]
\node (D) at (-4,0.5) {$A$};
\node (E) at (-2,0.5) {$B$};
\node (F) at (-4,-1) {$C$};
\node (A) at (-2,-1) {$D$};
\node (B) at (-3,-0.25) {$\Downarrow \alpha$};
\path[->,font=\scriptsize,>=angle 90]
(D) edge node [above]{$M$}(E)
(E) edge node [right]{$g$}(A)
(D) edge node [left]{$f$}(F)
(F) edge node [above]{$N$} (A);
\end{tikzpicture}
\]
While a mere category has only objects and morphisms, here we have a few more types of entities. We call $A, B, C$ and $D$ `objects', $f$ and $g$ `vertical 1-morphisms', $M$ and $N$ `horizontal 1-cells', and $\alpha$ a `2-morphism'.   We can compose vertical 1-morphisms to get new vertical 1-morphisms and compose horizontal 1-cells to get new horizontal 1-cells.  We can compose the 2-morphisms in two ways: horizontally by setting squares side by side, and vertically by setting one on top of the other.   In a `strict' double category all these forms of composition are associative.  In a `pseudo' double category, horizontal 1-cells compose in a weakly associative manner: that is, the associative law holds only up to an invertible 2-morphism, called the `associator', which obeys a coherence law.  This is just a quick sketch of the ideas; for full definitions see for example the works of Grandis and Par\'e \cite{GP1,GP2}.

 We construct a double category $\MMark$ with:
\begin{enumerate}
\item finite sets as objects, 
\item maps between finite sets as vertical 1-morphisms,
\item open Markov processes as horizontal 1-cells, 
\item morphisms between open Markov processes as 2-morphisms.
\end{enumerate}
Composition of open Markov processes is only weakly associative, so this is a
pseudo double category.  

The plan of the paper is as follows.  In Section \ref{sec:markov_processes} we define open Markov processes and steady state solutions of the open master equation.  In Section \ref{sec:coarse-graining} we introduce coarse-graining first for Markov processes and then open Markov processes. In Section \ref{sec:MMark} we construct the double category $\MMark$ described above.  We prove this is a symmetric monoidal double category in the sense of Shulman \cite{Shulman}.  This captures the fact that we can not only compose open Markov processes but also `tensor' them by setting them side by side.  For example, if we compose this open Markov process:
\[
\begin{tikzpicture}[->,>=stealth',shorten >=1pt,thick,scale=1]
  \node[main node] (4) at (0,1) {$$};
  \node[main node](5) at (2,-1) {$$};
  \node[main node](6) at (4,1) {$$};
\node(input) [color=purple] at (-2,1) {\small{\textsf{inputs}}};
\node(output) [color=purple] at (6,1) {\small{\textsf{outputs}}};
  \path[every node/.style={font=\sffamily\small}, shorten >=1pt]
    (4) edge [bend left=12] node[above] {$2$} (5)
    (4) edge [bend left=12] node[above] {$12$} (6)
    (5) edge [bend left=12] node[below] {$1$} (4) 
    (6) edge [bend left=12] node[below] {$1$} (5);
    
\path[color=purple, very thick, shorten >=10pt, ->, >=stealth] (output) edge (6);
\path[color=purple, very thick, shorten >=10pt, ->, >=stealth] (input) edge (4);
\end{tikzpicture}
\]
with the one shown before:
\[
\begin{tikzpicture}[->,>=stealth',shorten >=1pt,thick,scale=1]
  \node[main node] (1) at (0,2.2) {$$};
  \node[main node](2) at (0,-.2) {$$};
  \node[main node](3) at (2.83,1)  {$$};
  \node[main node](4) at (5.83,1) {$$};
\node(input) [color=purple] at (-2,1) {\small{\textsf{inputs}}};
\node(output) [color=purple] at (7.83,1) {\small{\textsf{outputs}}};
  \path[every node/.style={font=\sffamily\small}, shorten >=1pt]
    (3) edge [bend left=12] node[above] {$4$} (4)
    (4) edge [bend left=12] node[below] {$2$} (3)
    (2) edge [bend left=12] node[above] {$2$} (3) 
    (3) edge [bend left=12] node[below] {$1$} (2)
    (1) edge [bend left=12] node[above] {$1/2$}(3);
    
\path[color=purple, very thick, shorten >=10pt, ->, >=stealth] (output) edge (4);
\path[color=purple, very thick, shorten >=10pt, ->, >=stealth, bend left] (input) edge (1);
\path[color=purple, very thick, shorten >=10pt, ->, >=stealth, bend right]
(input) edge (2);
\end{tikzpicture}
\]
we obtain this open Markov process:
\[
\begin{tikzpicture}[->,>=stealth',shorten >=1pt,thick,scale=.9]
  \node[main node] (1) at (0,2.2) {$$};
  \node[main node](2) at (0,-.6) {$$};
  \node[main node](3) at (2.43,1)  {$$};
  \node[main node](4) at (4.83,1) {$$};
  \node[main node](5) at (6.83,-1) {$$};
  \node[main node](6) at (8.83,1) {$$};
\node(input) [color=purple] at (-2,1.1) {\small{\textsf{inputs}}};
\node(output) [color=purple] at (10.83,1.1) {\small{\textsf{outputs}}};
  \path[every node/.style={font=\sffamily\small}, shorten >=1pt]
    (3) edge [bend left=12] node[above] {$4$} (4)
    (4) edge [bend left=12] node[below] {$2$} (3)
    (2) edge [bend left=12] node[above] {$2$} (3) 
    (3) edge [bend left=12] node[below] {$1$} (2)
    (1) edge [bend left=12] node[above] {$1/2$}(3) 
    (4) edge [bend left=12] node[above] {$2$} (5)
    (4) edge [bend left=12] node[above] {$12$} (6)
    (5) edge [bend left=12] node[below] {$1$} (4) 
    (6) edge [bend left=12] node[below] {$1$} (5);
    
\path[color=purple, very thick, shorten >=10pt, ->, >=stealth] (output) edge (6);
\path[color=purple, very thick, shorten >=10pt, ->, >=stealth, bend left] (input) edge (1);
\path[color=purple, very thick, shorten >=10pt, ->, >=stealth, bend right]
(input) edge (2);
\end{tikzpicture}
\]
but if we tensor them, we obtain this:
\[
\begin{tikzpicture}[->,>=stealth',shorten >=1pt,thick,scale=1]
  \node[main node] (1) at (0,2.2) {$$};
  \node[main node](2) at (0,-.2) {$$};
  \node[main node](3) at (2.83,1)  {$$};
  \node[main node](4) at (5.83,1) {$$};
  \node[main node](4a) at (1,-1.2) {$$};
  \node[main node](5) at (3,-3) {$$};
  \node[main node](6) at (5,-1.2) {$$};
\node(input)[color=purple] at (-2,-.3) {\small{\textsf{inputs}}};
\node(output) [color=purple] at (7.83,-.3) {\small{\textsf{outputs}}};
  \path[every node/.style={font=\sffamily\small}, shorten >=1pt]
    (3) edge [bend left=12] node[above] {$4$} (4)
    (4) edge [bend left=12] node[below] {$2$} (3)
    (2) edge [bend left=12] node[above] {$2$} (3) 
    (3) edge [bend left=12] node[below] {$1$} (2)
    (1) edge [bend left=12] node[above] {$1/2$}(3) 
    (4a) edge [bend left=12] node[above] {$2$} (5)
    (4a) edge [bend left=12] node[above] {$12$} (6)
    (5) edge [bend left=12] node[below] {$1$} (4a) 
    (6) edge [bend left=12] node[below] {$1$} (5);
    
\path[color=purple, very thick, shorten >=10pt, ->, >=stealth, bend right] (output) edge (4);
\path[color=purple, very thick, shorten >=10pt, ->, >=stealth, bend left] (input) edge (1);
\path[color=purple, very thick, shorten >=10pt, ->, >=stealth] (input) edge (2);
\path[color=purple, very thick, shorten >=10pt, ->, >=stealth, bend left] (output) edge (6);
\path[color=purple, very thick, shorten >=10pt, ->, >=stealth, bend right] (input) edge (4a);
\end{tikzpicture}
\]

If we fix constant probabilities at the inputs and outputs, there typically exist solutions of the open master equation with these boundary conditions that are constant as a function of time.  These are called `steady states'.  Often these are \emph{nonequilibrium} steady states, meaning that there is a nonzero net flow of probabilities at the inputs and outputs.   For example, probability can flow through an open Markov process at a constant rate in a nonequilibrium steady state.

In previous work, Fong, Pollard and the first author studied the relation between probabilities and flows at the inputs and outputs that holds in steady state \cite{BFP,BP}.  They called the process of extracting this relation from an open Markov process `black-boxing', since it gives a way to forget the internal workings of an open system and remember only its externally observable behavior.   They proved that black-boxing is compatible with composition and tensoring.  This result can be summarized by saying that black-boxing is a symmetric monoidal functor. 

In Section \ref{sec:black-boxing} we show that black-boxing is compatible with morphisms between open Markov processes.  To make this idea precise, we prove that black-boxing gives a map from the double category $\MMark$ to another double category, called $\LLinRel$, which has:
\begin{enumerate}
\item finite-dimensional real vector spaces $U,V,W,\dots$ as objects,
\item linear maps $f \maps V \to W$ as vertical 1-morphisms from $V$ to $W$,
\item linear relations $R \subseteq V \oplus W$  as horizontal 1-cells from $V$ to $W$,
\item squares 
\[
\begin{tikzpicture}[scale=1.5]
\node (D) at (-4,0.5) {$V_1$};
\node (E) at (-2,0.5) {$V_2$};
\node (F) at (-4,-1) {$W_1$};
\node (A) at (-2,-1) {$W_2$};
\node (B) at (-3,-0.25) {};
\path[->,font=\scriptsize,>=angle 90]
(D) edge node [above]{$R \subseteq V_1 \oplus V_2$}(E)
(E) edge node [right]{$g$}(A)
(D) edge node [left]{$f$}(F)
(F) edge node [above]{$S \subseteq W_1 \oplus W_2$} (A);
\end{tikzpicture}
\]
obeying $(f \oplus g)R \subseteq S$ as 2-morphisms. 
\end{enumerate}
Here a `linear relation' from a vector space $V$ to a vector space $W$ is a linear subspace
$R \subseteq V \oplus W$.   Linear relations can be composed in the same way as relations \cite{BE}.  The double category $\LLinRel$ becomes symmetric monoidal using direct sum as the tensor product, but unlike $\MMark$ it is strict: that is, composition of linear relations is associative.

Maps between symmetric monoidal double categories are called `symmetric monoidal
double functors' \cite{Courser}.  Our main result, Thm.\ \ref{thm:main}, 
says that black-boxing gives a symmetric monoidal double functor
\[   \blacksquare \maps \MMark \to \LLinRel .\]
The hardest part is to show that black-boxing preserves composition of horizontal
1-cells: that is, black-boxing a composite of open Markov processes gives the composite 
of their black-boxings.  Luckily, for this we can adapt a previous argument \cite{BP}.  
Thus, the new content of this result concerns the vertical 1-morphisms and especially 
the 2-morphisms, which describe coarse-grainings.

An alternative approach to studying morphisms between open Markov processes uses bicategories rather than double categories \cite{Benabou,Stay}.  In Section \ref{sec:bicat} we use a result of Shulman \cite{Shulman} to construct symmetric monoidal bicategories $\bold{Mark}$ and $\bold{LinRel}$ from the symmetric monoidal double categories $\MMark$ and $\LLinRel$.  We conjecture that the black-boxing double functor determines a functor between these symmetric monoidal bicategories.  However, double categories seem to be a simpler framework for coarse-graining open Markov processes.

It is worth comparing some related work.   Fong, Pollard and the first author constructed a symmetric monoidal category where the morphisms are open Markov processes \cite{BFP,BP}.  Like us, they only consider Markov processes where time is continuous and the set of states is finite.  
However, they formalized such Markov processes in a slightly different way than we do here: they defined a Markov process to be a directed multigraph where each edge is assigned a positive number called its `rate constant'.   In other words, they defined it to be a diagram
\[ \xymatrix{ (0,\infty) & E \ar[l]_-r \ar[r]<-.5ex>_t  \ar[r] <.5ex>^s & X  }  \]
where $X$ is a finite set of vertices or `states', $E$ is a finite set of edges or `transitions' 
between states, the functions $s,t \maps E \to X$ give the source and target of each edge, and 
$r \maps E \to (0,\infty)$ gives the rate constant of each edge.   They explained how from this data
one can extract a matrix of real numbers $(H_{ij})_{i,j \in X}$ called the `Hamiltonian' of the Markov process, with two familiar properties:
\begin{enumerate}
\item $H_{ij} \geq 0$ if $i \neq j$, 
\item $\sum_{i \in X} H_{ij} = 0$ for all $j \in X$.
\end{enumerate}
A matrix with these properties is called `infinitesimal stochastic', since these conditions are equivalent to $\exp(tH)$ being stochastic for all $t \ge 0$.    

In the present work we skip the directed multigraphs and work directly with the Hamiltonians.  Thus, we define a Markov process to be a finite set $X$ together with an infinitesimal stochastic matrix $(H_{ij})_{i,j \in X}$.  This allows us to work more directly with the Hamiltonian and the all-important master equation
\[         \frac{d}{dt} \pi(t) = H \pi(t)  \]
which describes the evolution of a time-dependent probability distribution $\pi(t) \maps X \to \R$.

Clerc, Humphrey and Panangaden have constructed a bicategory \cite{Panan} with finite sets as objects, `open discrete labeled Markov processes' as morphisms, and `simulations' as 2-morphisms. In their framework, `open' has a similar meaning as it does in works listed above.  These open discrete labeled Markov processes are also equipped with a set of `actions' which represent interactions between the Markov process and the environment, such as an outside entity acting on a stochastic system.   A `simulation' is then a function between the state spaces that map the inputs, outputs and set of actions of one open discrete labeled Markov process to the inputs, outputs and set of actions of another.   

Another compositional framework for Markov processes is given by de Francesco Albasini, Sabadini and Walters \cite{Francesco} in which they construct an algebra of `Markov automata'.  A Markov automaton is a family of matrices with nonnegative real coefficients that is indexed by elements of a binary product of sets, where one set represents a set of `signals on the left interface' of the Markov automata and the other set analogously for the right interface.

\subsection*{Notation and Terminology}

Following Shulman, we use `double category' to mean `pseudo double category', and use `strict double category' to mean a double category for which horizontal composition is strictly associative and unital.   (In older literature, `double category' often refers to a strict double category.)  

It is common to use blackboard bold for the first letter of the name of a double category, and we do so here.  Bicategories are written in boldface, while ordinary categories are written in sans serif font.
Thus, three main players in this paper are a double category $\MMark$, a bicategory $\bold{Mark}$, and a category $\Mark$, all closely related.

\section{Open Markov processes}
\label{sec:markov_processes}

Before explaining open Markov processes we should recall a bit about Markov processes.  As mentioned in the Introduction, we use `Markov process' as a short term for `continuous-time Markov process with a finite set of states', and we identify any such Markov process with the
infinitesimal stochastic matrix appearing in its master equation.  We make this precise with a bit 
of terminology that is useful throughout the paper. 

Given a finite set $X$, we call a function $v \maps X \to \R$ a `vector' and call its values at points $x \in X$ its `components' $v_x$.   We define a `probability distribution' on $X$ to be a vector $\pi \maps X \to \R$ whose components are nonnegative and sum to $1$.  As usual, we use $\R^X$ to denote the vector space of functions $v \maps X \to \R$.    Given a linear operator $T \maps \R^X \to \R^Y$ we have $(T v)_i = \sum_{j \in X} T_{ij} v_j$ for some `matrix' $T \maps Y \times X \to \R$ with entries $T_{ij}$.   

\begin{defn}
Given a finite set $X$, a linear operator $H \maps \R^X \to \R^X$ is \define{infinitesimal stochastic} if
\begin{enumerate}
\item $H_{ij} \geq 0$ for $i \neq j$ and 
\item $\sum_{i \in X} H_{ij}=0$ for each $j \in X$.
\end{enumerate}
\end{defn}

The reason for being interested in such operators is that when exponentiated
they give stochastic operators.

\begin{defn} Given finite sets $X$ and $Y$, a linear operator $T \maps \R^X \to \R^Y$ is \define{stochastic} if for any probability distribution $\pi$ on $X$, $T\pi$ is a probability
distribution on $Y$.
\end{defn}

Equivalently, $T$ is stochastic if and only if 
\begin{enumerate}
\item $T_{ij} \ge 0$ for all $i \in Y$, $j \in X$ and
\item $\sum_{i \in Y} T_{ij} = 1$ for each $j \in X$.
\end{enumerate}
If we think of $T_{ij}$ as the probability for $j \in X$ to be mapped to $i \in Y$, these conditions
make intuitive sense.   Since stochastic operators are those that preserve probability distributions, the composite of stochastic operators is stochastic.  

In Lemma \ref{lem:exponentiation} we recall that a linear operator $H \maps \R^X \to \R^X$ 
is infinitesimal stochastic if and only if its exponential
\[     \exp(tH) = \sum_{n = 0}^\infty \frac{(tH)^n}{n!}  \]
is stochastic for all $t \ge 0$.  
Thus, given an infinitesimal stochastic operator $H$, for any time $t \ge 0$
we can apply the operator $\exp(tH) \maps \R^X \to \R^X$ to any probability 
distribution $\pi \in \R^X$ and get a probability distribution
\[      \pi(t) = \exp(tH) \pi. \]
These probability distributions $\pi(t)$ obey the \define{master equation}
\[       \frac{d}{dt}\pi(t) = H \pi(t) .\]
Moreover, any solution of the master equation arises this way.

All the material so far is standard \cite[Sec.\ 2.1]{Norris}.  We now turn to open Markov processes.  

\begin{defn}
We define a \define{Markov process} to be a pair $(X,H)$ where $X$ is a finite set and 
$H \maps \R^X \to \R^X$ is an infinitesimal stochastic operator.    We also call $H$ a 
Markov process \define{on} $X$. 
\end{defn}

\begin{defn}
\label{defn:open_markov_process}
We define an \define{open Markov process} to consist of finite sets $X$, $S$ and $T$ and
injections
\[
\begin{tikzpicture}[scale=1.2]
\node (D) at (-4,-0.5) {$S$};
\node (E) at (-3,.5) {$X$};
\node (F) at (-2,-0.5) {$T$};
\path[->,font=\scriptsize,>=angle 90]
(D) edge node[above,pos=0.3] {$i$}(E)
(F) edge node[above,pos=0.3] {$o$}(E);
\end{tikzpicture}
\]
together with a Markov process $(X,H)$.   We call $S$ the set of \define{inputs} and
$T$ the set of \define{outputs}.
\end{defn}

In general, a diagram of this shape in any category:
\[
\begin{tikzpicture}[scale=1.2]
\node (D) at (-4,-0.5) {$S$};
\node (E) at (-3,.5) {$X$};
\node (F) at (-2,-0.5) {$T$};
\path[->,font=\scriptsize,>=angle 90]
(D) edge node[above,pos=0.3] {$i$}(E)
(F) edge node[above,pos=0.3] {$o$}(E);
\end{tikzpicture}
\]
is called a \define{cospan}.  The objects $S$ and $T$ are called the \define{feet}, the
object $X$ is called the \define{apex}, and the morphisms $i$ and $o$ are called the \define{legs}.  We use $\FinSet$ to stand for the category of finite sets and functions.  Thus, an open Markov process is a cospan in $\FinSet$ with injections as legs and a Markov process on its apex.  We do not require that the injections have disjoint range.  We often abbreviate an open Markov process as
\[
\begin{tikzpicture}[scale=1.2]
\node (D) at (-4,-0.5) {$S$};
\node (E) at (-3,.5) {$(X,H)$};
\node (F) at (-2,-0.5) {$T$};
\path[->,font=\scriptsize,>=angle 90]
(D) edge node[above,pos=0.3] {$i$}(E)
(F) edge node[above,pos=0.3] {$o$}(E);
\end{tikzpicture}
\]
or simply $S \stackrel{i}{\rightarrow} (X,H) \stackrel{o}{\leftarrow} T$.

Given an open Markov process we can write down an `open' version of the master equation, where probability can also flow in or out of the inputs and outputs.   To work with the open
master equation we need two well-known concepts:

\begin{defn}
\label{defn:push_and_pull}
Let $f \maps A \to B$ be a map between finite sets.  The linear map $f^* \maps \R^B \to
\R^A$ sends any vector $v \in \R^B$ to its \define{pullback} along $f$, given by
\[                       f^*(v) = v \circ f  . \]
The linear map $f_* \maps \R^A \to \R^B$ sends any vector $v \in \R^A$ to its \define{pushforward} along $f$, given by
\[                       (f_*(v))(b) = \sum_{ \{a : \; f(a) = b\} } v(a)  .\]
\end{defn}
\noindent If we write $f^*$ and $f_*$ as matrices with respect to the standard bases of
$\R^A$ and $\R^B$, they are simply transposes of one another.

Now, suppose we are given an open Markov process
\[
\begin{tikzpicture}[scale=1.2]
\node (D) at (-4,-0.5) {$S$};
\node (E) at (-3,.5) {$(X,H)$};
\node (F) at (-2,-0.5) {$T$};
\path[->,font=\scriptsize,>=angle 90]
(D) edge node[above,pos=0.3] {$i$}(E)
(F) edge node[above,pos=0.3] {$o$}(E);
\end{tikzpicture}
\]
together with \define{inflows} $I \maps \R \to \R^S$ and \define{outflows} $O \maps \R \to \R^T$, arbitrary smooth functions of time.  We write the value of the inflow at $s \in S$ at time $t$ as $I_s(t)$, and similarly for outflows and other functions of time.  We say a function $v \maps \R \to \R^X$ obeys the \define{open master equation} if
\[ \frac{dv(t)}{dt} = H v(t) + i_*(I(t)) - o_*(O(t)). \]
This says that for any state $j \in X$ the time derivative of $v_j(t)$ 
takes into account not only the usual term from the master equation, but also inflows and outflows.

If the inflows and outflows are constant in time, a solution $v$ of the open master equation
that is also constant in time is called a \define{steady state}. More formally:

\begin{defn}
Given an open Markov process $S \stackrel{i}{\rightarrow} (X,H) \stackrel{o}{\leftarrow} T$ together with $I \in \R^S$ and $O \in \R^T$, a \define{steady state} with inflows $I$ and outflows $O$ is an element $v \in \R^X$ such that 
\[     H v + i_*(I) - o_*(O) = 0 .\]   
Given $v \in \R^X$ we call $i^*(v) \in \R^S$ and $o^*(v) \in \R^T$ the 
\define{input probabilities} and \define{output probabilities}, respectively.
\end{defn}

\begin{defn}
\label{defn:black-boxing}
Given an open Markov process $S \stackrel{i}{\rightarrow} (X,H) \stackrel{o}{\leftarrow} T$, we define its \define{black-boxing} to be the set
\[   \blacksquare\big(S \stackrel{i}{\rightarrow} (X,H) \stackrel{o}{\leftarrow} T\big) 
\subseteq \R^S \oplus \R^S \oplus \R^T \oplus \R^T \]
consisting of all 4-tuples $(i^*(v),I,o^*(v),O)$ where $v \in \R^X$ is some steady state
with inflows $I \in \R^S$ and outflows $O \in \R^T$.
\end{defn}

Thus, black-boxing records the relation between input probabilities, inflows, output probabilities and outflows that holds in steady state.  This is the `externally observable steady state behavior' of the open Markov process.   It has already been shown \cite{BFP,BP} that black-boxing can be seen as a functor between categories.    Here we go further and describe it as a double functor between double categories, in order to study the effect of black-boxing on morphisms between open Markov processes.

\section{Morphisms of open Markov processes}
\label{sec:coarse-graining}

There are various ways to approximate a Markov process by another Markov process on a smaller set, all of which can be considered forms of coarse-graining \cite{Buchholz}.  A common approach is to take a Markov process $H$ on a finite set $X$ and a surjection $p \maps X \to X'$ and create a Markov process on $X'$.   In general this requires a choice of `stochastic section' for $p$, defined as follows:

\begin{defn}
Given a function $p \maps X \to X'$ between finite sets, a \define{stochastic section} for
$p$ is a stochastic operator $s \maps \R^{X'} \to \R^X$ such that $p_* s = 1_{X'}$. 
\end{defn}

It is easy to check that a stochastic section for $p$ exists if and only if $p$ is a surjection.  In Lemma \ref{lem:new_markov_process} we show that given a Markov process $H$ on $X$ and a surjection $p \maps X \to X'$, any stochastic section $s \maps \R^{X'} \to \R^X$ gives a Markov process on $X'$, namely
\[            H' =  p_* H s. \]
Experts call the matrix corresponding to $p_*$ the \define{collector matrix}, and they call $s$ the \define{distributor matrix} \cite{Buchholz}.  The names help clarify what is going on.  The collector matrix, coming from the surjection $p \maps X \to X'$, typically maps many states of $X$ to each state of $X'$.  The distributor matrix, the stochastic section $s \maps \R^{X'} \to \R^X$, typically maps each state in $X'$ to a linear combination of many states in $X$.  Thus, $H' = p_* H s$ distributes each state of $X'$, applies $H$, and then collects the results.

In general $H'$ depends on the choice of $s$, but sometimes it does not:

\begin{defn} 
We say a Markov process $H$ on $X$ is \define{lumpable} with respect to 
a surjection $p \maps X \to X'$ if the operator $p_* H s$ is independent of the choice of stochastic section $s \maps \R^{X'} \to \R^X$.
\end{defn}

This concept is not new \cite{Buchholz}.   In Thm.\ \ref{thm:lumpability} we show that it is equivalent to another traditional formulation, and also to an even simpler one: $H$ is lumpable with respect to $p$ if and only if $p_* H = H' p_*$.    This equation has the advantage of making sense even when $p$ is not a surjection.   Thus, we can use it to define a more general concept of morphism between Markov processes:

\begin{defn}  
Given Markov processes $(X,H)$ and $(X',H')$, a \define{morphism of Markov
processes} $p \maps (X,H) \to (X',H')$ is a map $p \maps X \to X'$ such that
$p_* H = H' p_*$.  
\end{defn}

There is a category $\Mark$ with Markov processes as objects and the morphisms as defined above, where composition is the usual composition of functions.  But what is the meaning of such a morphism?   Using Lemma \ref{lem:exponentiation} one can check that for any Markov processes $(X,H)$ and $(X',H')$, and any map
$p \maps X \to X'$, we have
\[    p_* H = H' p_* \; \iff \;  p_* \exp(tH) = \exp(tH') p_* \textrm{ for all } t \ge 0.\]
Thus, $p$ is a morphism of Markov processes if evolving a probability distribution
on $X$ via $\exp(tH)$ and then pushing it forward along $p$ is the same as
pushing it forward and then evolving it via $\exp(tH')$.  

We can also define morphisms between \emph{open} Markov processes:

\begin{defn}
\label{defn:coarse-graining}  A \define{morphism of open Markov processes} from the open Markov process $S \stackrel{i}{\rightarrow} (X,H) \stackrel{o}{\leftarrow} T$ to the open Markov process $S' \stackrel{i'}{\rightarrow} (X',H') \stackrel{o'}{\leftarrow} T'$ is a triple of functions $f \maps S \to S'$, $p \maps X \to X'$, $g \maps T \to T'$ such that the squares in this diagram are pullbacks:
\[
\begin{tikzpicture}[scale=1.5]
\node (D) at (-4.2,-0.5) {$S$};
\node (A) at (-4.2,-2) {$S'$};
\node (B) at (-1.8,-2) {$T'$};
\node (E) at (-3,-0.5) {$X$};
\node (F) at (-1.8,-0.5) {$T$};
\node (G) at (-3,-2) {$X'$};
\path[->,font=\scriptsize,>=angle 90]
(D) edge node[above] {$i$}(E)
(A) edge node[above] {$i'$} (G)
(B) edge node [above]{$o'$} (G)
(F) edge node [above]{$o$}(E)
(D) edge node [left]{$f$}(A)
(F) edge node [right]{$g$}(B)
(E) edge node[left] {$p$}(G);
\end{tikzpicture}
\]
and $p_* H = H' p_*$.
\end{defn}

\noindent
We need the squares to be pullbacks so that in Lemma \ref{lem:black-boxing_2-morphisms} we can black-box morphisms of open Markov processes.  In Lemma \ref{lem:horizontal_composition} we show that horizontally composing these morphisms preserves this pullback property.  But to do this, we need the horizontal arrows in these squares to be injections.  This explains the conditions in Defs.\ \ref{defn:open_markov_process} and \ref{defn:coarse-graining}.

We often abbreviate a morphism of open Markov processes as
\[
\begin{tikzpicture}[scale=1.5]
\node (D) at (-4.2,-0.5) {$S$};
\node (A) at (-4.2,-2) {$S'$};
\node (B) at (-1.8,-2) {$T'$};
\node (E) at (-3,-0.5) {$(X,H)$};
\node (F) at (-1.8,-0.5) {$T$};
\node (G) at (-3,-2) {$(X',H')$};
\path[->,font=\scriptsize,>=angle 90]
(D) edge node[above] {$i_1$}(E)
(A) edge node[above] {$i'_1$} (G)
(B) edge node [above]{$o'_1$} (G)
(F) edge node [above]{$o_1$}(E)
(D) edge node [left]{$f$}(A)
(F) edge node [right]{$g$}(B)
(E) edge node[left] {$p$}(G);
\end{tikzpicture}
\]

As an example, consider the following diagram:
\[
\begin{tikzpicture}[->,>=stealth',shorten >=1pt,auto,node distance=3.7cm,
thick,main node/.style={circle,fill=white!20,draw,font=\sffamily\Large\bfseries},terminal/.style={circle,fill=blue!20,draw,font=\sffamily\Large\bfseries}]]
\node[main node](1) {$\scriptstyle{a}$};
\node[main node](3) [below right=0.7cm and 2.4cm of 1] {$\scriptstyle{b_2}$};
\node[main node](5) [above right=0.7cm and 2.4cm of 1] {$\scriptstyle{b_1}$};
\node[main node](4) [above right=0.7cm and 2.4cm of 3] {$\scriptstyle{c}$};
\node(input) [left=1.25 cm of 1,color=purple] {\small{\textsf{inputs}}};
\node(output) [right=1.25 cm of 4,color=purple] {\small{\textsf{outputs}}};
\path[every node/.style={font=\sffamily\small}, shorten >=1pt]
(3) edge [bend right =15] node[below] {$6$} (4)
(5) edge [bend left =15] node[above] {$6$} (4)
(1) edge [bend left =15] node[above] {$8$} (5)
(5) edge [bend right =15] node[left] {$4$} (3)
(1) edge [bend right =15] node[below] {$7$} (3);
\path[color=purple, very thick, shorten >=10pt, ->, >=stealth] (output) edge (4);
\path[color=purple, very thick, shorten >=10pt, ->, >=stealth] (input) edge (1);
\end{tikzpicture}
\]
This is a way of drawing an open Markov process $S \stackrel{i}{\rightarrow} (X,H) \stackrel{o}{\leftarrow} T$ where $X = \{a,b_1,b_2,c \}$, $S$ and $T$ are one-element sets, $i$ maps the one element of $S$ to $a$, and $o$ maps the one element of $T$ to $c$.   As explained in Section \ref{sec:intro}, we can read off the infinitesimal stochastic operator $H \maps \R^X \to \R^X$ from this diagram and obtain
\[ H=
\left[\begin{array}{rrrr}
  \!  -15    & 0    & 0    & 0  \\
      8     & -10 & 0    & 0 \\
      7     & 4    & -6   & 0 \\
      0     & 6    & 6    & 0 \\
\end{array}\right].
\]
The resulting open master equation is
\[    \frac{d}{dt}\left[\begin{array}{r} v_{a_1}(t) \\ v_{b_1}(t) \\ v_{b_2}(t) \\ v_{c_1}(t) \end{array}\right]  \;\; = \; \; 
\left[\begin{array}{rrrr}
  \!  -15    & 0    & 0    & 0  \\
      8     & -10 & 0    & 0 \\
      7     & 4    & -6   & 0 \\
      0     & 6    & 6    & 0 \\
\end{array}\right]
\left[\begin{array}{r} v_{a_1}(t) \\ v_{b_1}(t) \\ v_{b_2}(t) \\ v_{c_1}(t) \end{array}\right]  \; + \;
\left[\begin{array}{c} I(t) \\ 0 \\0 \\ 0 \end{array}\right] \; - \; 
\left[\begin{array}{c} 0 \\ 0 \\0 \\ O(t) \end{array}\right] .  \]
Here $I$ is an arbitrary smooth 
function of time describing the inflow at the one point of $S$, and $O$ is a similar function
describing the outflow at the one point of $T$.

Suppose we want to simplify this open Markov process by identifying the states $b_1$ and
$b_2$.    To do this we take $X' = \{a,b,c \}$ and define $p \maps X \to X'$ by
\[    p(a) = a, \quad p(b_1) = p(b_2) = b, \quad p(c) = c .\]
To construct the infinitesimal stochastic operator $H' \maps \R^{X'} \to \R^{X'}$ for the
simplified open Markov process we need to choose a stochastic section $s \maps \R^{X'} \to \R^X$ for $p$, for example
\[ s=
\left[\begin{array}{rrr}
1 & 0 & 0  \\
0 & 1/3 & 0 \\
0 & 2/3 & 0 \\
0 & 0 & 1  \\
\end{array}\right].
\]
This says that if our simplified Markov process is in the state $b$, we assume the original Markov
process has a $1/3$ chance of being in state $b_1$ and a $2/3$ chance of being in state $b_2$. 
The operator $H' = p_* H s$ is then
\[ H' =
\left[\begin{array}{rrr}
    -15    & 0   & 0 \\
    15     & -6& 0\\
    0       & 6 & 0\\
    \end{array}\right] .
\]
It may be difficult to justify the assumptions behind our choice of stochastic section, but 
the example at hand has a nice feature: $H'$ is actually independent of this choice.   In other words, $H$ is lumpable with respect to $p$.  The reason is explained in Thm.\ \ref{thm:lumpability}.  Suppose we partition $X$ into blocks, each the inverse image of some point of $X'$.   Then $H$ is lumpable with respect to $p$ if and only if when we sum the rows in each block of $H$, all the columns within any given block of the resulting matrix are identical.  This matrix is $p_* H$:
\[ H=
\left[\begin{array}{r;{2pt/2pt}rr;{2pt/2pt}r}
  \!  -15    & 0    & 0    & 0  \\ \hdashline[2pt/2pt]
      8     & -10 & 0    & 0 \\ 
      7     & 4    & -6   & 0 \\  \hdashline[2pt/2pt]
      0     & 6    & 6    & 0 \\ 
\end{array}\right]  \implies 
p_* H = 
\left[\begin{array}{r;{2pt/2pt}rr;{2pt/2pt}r}
  \!  -15    & 0    & 0    & 0  \\ \hdashline[2pt/2pt]
      15     & -6 & -6    & 0 \\  \hdashline[2pt/2pt]
      0     & 6    & 6    & 0 \\ 
\end{array}\right]. 
\] 
While coarse-graining is of practical importance even in the absence of lumpability, the lumpable case is better behaved, so we focus on this case. 

So far we have described a morphism of Markov processes $p \maps (X,H) \to (X',H')$, but 
together with identity functions on the inputs $S$ and outputs $T$ this defines a morphism
of open Markov processes, going from the above open Markov process to this one:
\[
\begin{tikzpicture}[->,>=stealth',shorten >=1pt,auto,node distance=3.7cm,
  thick,main node/.style={circle,fill=white!20,draw,font=\sffamily\Large\bfseries},terminal/.style={circle,fill=blue!20,draw,font=\sffamily\Large\bfseries}]]
  \node[main node](1) {$\scriptstyle{a}$};
  \node[main node](3) [right=2.4cm of 1]  {$\scriptstyle{b}$};
  \node[main node](4) [right=2.4cm of 3] {$\scriptstyle{c}$};
\node(input) [left=1.25 cm of 1,color=purple] {\small{\textsf{inputs}}};
\node(output) [right=1.25 cm of 4,color=purple] {\small{\textsf{outputs}}};
\path[every node/.style={font=\sffamily\small}, shorten >=1pt]
    (3) edge [bend left =15] node[above] {$6$} (4) 
    (1) edge [bend left =15] node[above] {$15$} (3);
\path[color=purple, very thick, shorten >=10pt, ->, >=stealth] (output) edge (4);
\path[color=purple, very thick, shorten >=10pt, ->, >=stealth] (input) edge (1);
\end{tikzpicture}
\]
The open master equation for this new coarse-grained open Markov process is
\[    \frac{d}{dt}\left[\begin{array}{r} v_a(t) \\ v_b(t) \\ v_c(t) \end{array}\right]  \;\; = \; \; 
\left[\begin{array}{rrr}
    -15    & 0   & 0 \\
    15     & -6& 0\\
    0       & 6 & 0\\
    \end{array}\right]
\left[\begin{array}{r} v_a(t) \\ v_b(t) \\ v_c(t) \end{array}\right]  \; + \;
\left[\begin{array}{c} I(t) \\ 0 \\ 0 \end{array}\right] \; - \; 
\left[\begin{array}{c} 0 \\ 0 \\ O(t) \end{array}\right] .  \]

In Section \ref{sec:MMark} we construct a double category $\MMark$ with open Markov processes as horizontal 1-cells and morphisms between these as 2-morphisms.  This double category is our main object of study.   First, however, we should prove the results mentioned above.   For this it is helpful to recall a few standard concepts:

\begin{defn}
A \define{1-parameter semigroup of operators} is a collection of linear operators $U(t) \maps V \to V$ on a vector space $V$, one for each $t \in [0,\infty)$, such that
\begin{enumerate}
\item $U(0)=1$ and 
\item $U(s+t)=U(s)U(t)$ for all $s,t \in [0,\infty)$.
If $V$ is finite-dimensional we say the collection $U(t)$ is \define{continuous} if $t \mapsto U(t) v$ is continuous for each $v \in V$.
\end{enumerate}
\end{defn}

\begin{defn}
Let $X$ be a finite set. A \define{Markov semigroup} is a continuous 1-parameter semigroup
$U(t) \maps \R^X \to \R^X$ such that $U(t)$ is stochastic for each $t \in [0,\infty)$.
\end{defn}

\begin{lem}
\label{lem:exponentiation}
Let $X$ be a finite set and $U(t) \maps \R^X \to \R^X$ a Markov semigroup. Then $U(t)= \exp(tH)$ for a unique infinitesimal stochastic operator $H \maps \R^X \to \R^X$, which is given by
\[  Hv = \left. \frac{d}{dt} U(t)v \right|_{t=0} \]
for all $v \in \R^X$. Conversely, given an infinitesimal stochastic operator $H$, then $\exp(tH)=U(t)$ is a Markov semigroup. 
\end{lem}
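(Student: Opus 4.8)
The plan is to prove the two directions separately. The forward direction rests on the classical structure theory of one-parameter semigroups of matrices, after which the infinitesimal stochastic conditions drop out of elementary limits; the converse rests on two short manipulations of the power series for $\exp(tH)$, and a final remark reconciles the two constructions.

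For the forward direction, I would first show that a continuous one-parameter semigroup $U(t)$ on the finite-dimensional space $\R^X$ is automatically $C^1$ and equals the exponential of its generator. The standard device is to set $V_\epsilon = \int_0^\epsilon U(s)\,ds$. Since continuity of $U$ at $0$ gives $\tfrac{1}{\epsilon} V_\epsilon \to U(0) = 1$ as $\epsilon \to 0^+$, for all sufficiently small $\epsilon > 0$ the operator $V_\epsilon$ is invertible; and the semigroup law gives $U(t) V_\epsilon = \int_t^{t+\epsilon} U(s)\,ds$, whose right-hand side is $C^1$ in $t$ by the fundamental theorem of calculus. Hence $U(t) = \left(\int_t^{t+\epsilon} U(s)\,ds\right) V_\epsilon^{-1}$ is $C^1$. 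Differentiating the semigroup law $U(t+s)=U(t)U(s)$ in either slot at $s=0$ then gives $U'(t) = H U(t) = U(t) H$ with $H := \left.\tfrac{d}{dt} U(t)\right|_{t=0}$, and the unique solution of $U'(t) = H U(t)$, $U(0) = 1$, is $U(t) = \exp(tH)$; this formula also exhibits $Hv = \left.\tfrac{d}{dt}U(t)v\right|_{t=0}$ and forces the uniqueness of $H$. It then remains to check that $H$ is infinitesimal stochastic. For $i \ne j$ we have $H_{ij} = \lim_{t \to 0^+} t^{-1}\big(U(t)_{ij} - \delta_{ij}\big) = \lim_{t \to 0^+} t^{-1} U(t)_{ij} \ge 0$, since every stochastic operator has nonnegative matrix entries; and since $\sum_{i \in X} U(t)_{ij} = 1$ for all $t$ and all $j$, differentiating at $t=0$ yields $\sum_{i \in X} H_{ij} = 0$.

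For the converse, let $H$ be infinitesimal stochastic. That $\exp(tH)$ is a continuous one-parameter semigroup is immediate from the power series: $\exp(0\cdot H) = 1$, $\exp\big((s+t)H\big) = \exp(sH)\exp(tH)$ because $sH$ and $tH$ commute, and $t \mapsto \exp(tH)v$ is real-analytic, hence continuous. To see $\exp(tH)$ is stochastic for $t \ge 0$, note first that the column-sum condition $\sum_{i \in X} H_{ij} = 0$ says precisely that the linear functional $\sigma(v) = \sum_{i \in X} v_i$ satisfies $\sigma \circ H = 0$; applying $\sigma$ term by term to $\exp(tH) = 1 + tH + \tfrac{1}{2}(tH)^2 + \cdots$ gives $\sigma \circ \exp(tH) = \sigma$, i.e.\ every column of $\exp(tH)$ sums to $1$. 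For nonnegativity, pick $c \ge \max_{i \in X}(-H_{ii})$ (recall $H_{ii} = -\sum_{i' \ne i} H_{i'i} \le 0$), so that the operator $H + c\,1$ has all matrix entries $\ge 0$; then each power $(H + c\,1)^n$ has nonnegative entries, so $\exp(t(H + c\,1)) = \sum_{n \ge 0} \tfrac{t^n}{n!}(H + c\,1)^n$ has nonnegative entries for $t \ge 0$, and since $H$ commutes with $c\,1$ we get $\exp(tH) = e^{-tc}\exp(t(H + c\,1))$, a nonnegative scalar times a nonnegative matrix. Thus $\exp(tH)$ is stochastic, so $U(t) = \exp(tH)$ is a Markov semigroup. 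Finally, the two constructions are mutually inverse: differentiating $\exp(tH)$ at $t=0$ recovers $H$, and the uniqueness in the forward direction shows that the generator extracted from a Markov semigroup is the one used to build it.

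I expect the only real obstacle to be the first step of the forward direction---promoting mere continuity of $U(t)$ to $C^1$-smoothness and to the exponential form---for which one needs the averaging/invertibility argument sketched above (or an appeal to the standard theory of finite-dimensional strongly continuous semigroups). Everything after that is routine: a pair of one-sided limits for the infinitesimal stochastic conditions, and power-series bookkeeping with nonnegativity and column sums for the converse.
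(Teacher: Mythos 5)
Your proof is correct and is essentially the standard argument: the paper itself does not write out a proof but defers to Engel--Nagel for the fact that a continuous one-parameter semigroup on a finite-dimensional space is differentiable and of exponential form, and to Norris for the equivalence between $U(t)$ stochastic and $H$ infinitesimal stochastic, and your averaging trick, one-sided limits, and the $\exp(tH)=e^{-tc}\exp(t(H+c\,1))$ device are exactly the arguments found in those references. (Your limit computation showing the generator is infinitesimal stochastic is also what the paper isolates separately as Lemma \ref{lem:differentiation}.)
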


\begin{proof}  This is well-known.   For a proof that every continuous one-parameter semigroup of 
operators $U(t)$ on a finite-dimensional vector space $V$ is in fact differentiable and of the form $\exp(tH)$ where $Hv = \left. \frac{d}{dt} U(t)v \right|_{t=0}$, see Engel and Nagel \cite[Sec.\ I.2]{EngelNagel}.  For a proof that $U(t)$ is then a Markov semigroup if and only if $H$ is infinitesimal stochastic, see Norris \cite[Thm.\ 2.1.2]{Norris}. \end{proof}

\begin{lem}
\label{lem:differentiation}
Let $U(t) \maps \R^X \to \R^X$ be a differentiable family of stochastic operators defined for $t \in [0,\infty)$ and having $U(0)=1$. Then $\left. \frac{d}{dt} U(t) \right|_{t=0}$ is infinitesimal stochastic.
\end{lem}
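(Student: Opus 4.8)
The plan is to argue entrywise with the matrix of $H := \left.\frac{d}{dt}U(t)\right|_{t=0}$ and verify the two defining conditions of an infinitesimal stochastic operator directly. Since $X$ is finite, $\R^X$ is finite-dimensional, so differentiability of the family $U(t)$ is equivalent to differentiability of each matrix entry $t \mapsto U(t)_{ij}$, with $H_{ij} = \left.\frac{d}{dt}U(t)_{ij}\right|_{t=0}$. Note that, because the family is only defined for $t \in [0,\infty)$, this derivative at $t = 0$ is a one-sided (right-hand) derivative; that one-sidedness is exactly what will be exploited for the off-diagonal inequality. Observe also that the semigroup property is not needed here, only stochasticity of each $U(t)$ and the normalization $U(0) = 1$.

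First I would check the condition $\sum_{i \in X} H_{ij} = 0$ for each $j \in X$. Each $U(t)$ is stochastic, so $\sum_{i \in X} U(t)_{ij} = 1$ for all $t \ge 0$. Differentiating this constant function of $t$ at $t = 0$ and pulling the (finite) sum through the derivative gives $\sum_{i \in X} H_{ij} = 0$.

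Next I would check that $H_{ij} \ge 0$ for $i \ne j$. For such $i,j$ we have $U(0)_{ij} = (1_{\R^X})_{ij} = 0$, while stochasticity of $U(t)$ gives $U(t)_{ij} \ge 0$ for every $t \ge 0$. Hence the difference quotient $\frac{U(t)_{ij} - U(0)_{ij}}{t} = \frac{U(t)_{ij}}{t}$ is nonnegative for all $t > 0$, and letting $t \to 0^+$ yields $H_{ij} \ge 0$. Together with the previous paragraph, this says precisely that $H$ is infinitesimal stochastic.

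There is no serious obstacle in this argument; the only point needing a little care is the observation that the relevant derivative at $t = 0$ is a right-hand derivative, so that the sign of the difference quotients for $t > 0$ controls the sign of $H_{ij}$. This is what makes the off-diagonal condition emerge as the inequality $H_{ij}\ge 0$ in the correct direction rather than as an equality.
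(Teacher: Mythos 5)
Your proof is correct and takes essentially the same route as the paper: both arguments observe that the difference quotient $(U(t)-U(0))/t$ has nonnegative off-diagonal entries and zero column sums for $t>0$, and that these closed conditions pass to the one-sided limit $t \to 0^+$. Your entrywise presentation and the explicit remark about the right-hand derivative are just a slightly more spelled-out version of the paper's statement that $(U(t)-1)/t$ is infinitesimal stochastic for $t>0$.
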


\begin{proof}
Let $H=\left. \frac{d}{dt} U(t) \right|_{t=0}  = \lim_{t \to 0^+} (U(t)-1)/t$. As $U(t)$ is stochastic, its entries are nonnegative and the column sum of any particular column is 1. Then the column sum of any particular column of $U(t)-1$ will be 0 with the off-diagonal entries being nonnegative. Thus $U(t)-1$ is infinitesimal stochastic for all $t \geq 0$, as is $(U(t)-1)/t$, from which it follows that $\lim_{t\to 0^+} (U(t)-U(0))/t = H$ is infinitesimal stochastic.
\end{proof}

\begin{lem}
\label{lem:new_markov_process}
Let $p \maps X \to X'$ be a function between finite sets with a stochastic section $s \maps \R^{X'} \to \R^X$, and let $H \maps \R^X \to \R^X$ be an infinitesimal stochastic operator. Then $H' = p_* H s \maps \R^{X'} \to \R^{X'}$ is also infinitesimal stochastic.
\end{lem}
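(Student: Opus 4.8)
The plan is to avoid computing matrix entries of $H' = p_* H s$ directly and instead deduce the result from Lemma \ref{lem:differentiation}. The key idea is to exhibit $H'$ as the derivative at $t=0$ of a differentiable family of stochastic operators. Consider
\[ U(t) = p_* \exp(tH) s \maps \R^{X'} \to \R^{X'}, \qquad t \in [0,\infty). \]
I would first check that each $U(t)$ is stochastic. The operator $p_*$ is stochastic: its matrix entries $(p_*)_{x'x}$ equal $1$ when $p(x)=x'$ and $0$ otherwise, hence are nonnegative, and each column sums to $1$ because every $x \in X$ has a unique image $p(x) \in X'$. The operator $\exp(tH)$ is stochastic for all $t \ge 0$ by Lemma \ref{lem:exponentiation}, since $H$ is infinitesimal stochastic. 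And $s$ is stochastic by hypothesis. Since a composite of stochastic operators is stochastic, $U(t)$ is stochastic for all $t \ge 0$.

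Next I would verify the remaining hypotheses of Lemma \ref{lem:differentiation}. The map $t \mapsto \exp(tH)$ is differentiable (real-analytic), and composing on either side with the fixed linear maps $p_*$ and $s$ preserves differentiability, so $t \mapsto U(t)$ is differentiable. Moreover $U(0) = p_* \exp(0) s = p_* s = 1_{X'}$, which is precisely the defining property of a stochastic section. Thus $U$ is a differentiable family of stochastic operators on $\R^{X'}$ with $U(0) = 1$, so Lemma \ref{lem:differentiation} applies: $\left.\frac{d}{dt} U(t)\right|_{t=0}$ is infinitesimal stochastic. Finally, differentiating $U(t) = p_* \exp(tH) s$ at $t=0$, using $\left.\frac{d}{dt}\exp(tH)\right|_{t=0} = H$ together with linearity of $p_*$ and $s$, gives $\left.\frac{d}{dt} U(t)\right|_{t=0} = p_* H s = H'$. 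Hence $H'$ is infinitesimal stochastic.

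I do not expect a serious obstacle. The only subtle points are that $p_*$ is itself stochastic (a one-line check) and that $U(0) = 1$, which is exactly where the stochastic-section condition $p_* s = 1_{X'}$ is used — without it the argument fails. As a fallback, one can instead check the two defining conditions of an infinitesimal stochastic operator directly: condition (ii) holds because $\sum_{i \in X'} H'_{ij} = \sum_{x \in X} (Hs)_{xj} = \sum_{x'} \big(\sum_x H_{xx'}\big) s_{x'j} = 0$, using that each column of $H$ sums to zero; and for condition (i) one observes that $p_* s = 1_{X'}$ together with $s_{x'j} \ge 0$ forces $s_{xj} = 0$ whenever $p(x) \ne j$, so in the expansion $H'_{ij} = \sum_{x:\, p(x)=i} \sum_{x'} H_{xx'} s_{x'j}$ with $i \ne j$ the only possibly negative terms, namely $H_{xx} s_{xj}$ with $p(x) = i \ne j$, vanish, leaving a sum of nonnegative terms.
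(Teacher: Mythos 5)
Your proposal is correct and follows essentially the same route as the paper: both define $U(t) = p_*\exp(tH)s$, note it is a differentiable family of stochastic operators with $U(0) = p_*s = 1_{X'}$, and apply Lemma \ref{lem:differentiation} to its derivative at $t=0$. Your explicit check that $U(0)=1$ (the point where the stochastic-section hypothesis enters) is a detail the paper leaves implicit, but the argument is the same.
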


\begin{proof}
Lemma \ref{lem:exponentiation} implies that $\exp(tH)$ is stochastic for all $t \ge 0$.    For any map $p \maps X \to X'$ the operator $p_* \maps \R^X \to \R^{X'}$ is easily seen to be stochastic, and $s$ is stochastic by assumption.  Thus, $U(t) = p_* \exp(tH) s$ is stochastic for all $t \ge 0$.    Differentiating, we conclude that 
\[   \left. \frac{d}{dt} U(t) \right|_{t=0}= \left. \frac{d}{dt} p_* \exp(tH) s \right|_{t=0} =
\left. p_* \exp(tH)H s \right|_{t=0} = p_* H s \] 
is infinitesimal stochastic by Lemma \ref{lem:differentiation}.
\end{proof}

We can now give some conditions equivalent to lumpability.   The third is widely found in the literature \cite{Buchholz} and the easiest to check in examples.   It makes use of  the standard basis vectors $e_j \in \R^X$ associated to the elements $j$ of any finite set $X$.    The surjection $p \maps X \to X'$ defines a partition on $X$ where two states $j,j' \in X$ lie in the same block of the partition if and only if $p(j) = p(j')$.   The elements of $X'$ correspond to these blocks.   The third condition for lumpability says that $p_* H$ has the same effect on two basis vectors $e_j$ and $e_{j'}$ when $j$ and $j'$ are in the same block.  As mentioned in the example above, this condition
says that if we sum the rows in each block of $H$, all the columns in any given block of the resulting matrix $p_* H$ are identical.

\begin{thm}
\label{thm:lumpability}
Let $p \maps X \to X'$ be a surjection of finite sets and let $H$ be a Markov process on $X$.  Then the following conditions are equivalent:
\begin{enumerate}
\item $H$ is lumpable with respect to $p$.
\item There exists a linear operator $H' \maps \R^{X'} \to \R^{X'}$ such that $p_* H = H' p_*$.
\item $p_* H e_j = p_* H e_{j'}$ for all $j,j' \in X$ such that $p(j)=p(j')$.   
\end{enumerate}
When these conditions hold there is a unique operator $H' \maps \R^{X'} \to \R^{X'}$ 
such that $p_* H = H' p_*$, it is given by $H' = p_* H s$ for any stochastic section $s$ of $p$, and it is infinitesimal stochastic.
\end{thm}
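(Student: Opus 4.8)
The plan is to prove the cycle of implications (i) $\Rightarrow$ (iii) $\Rightarrow$ (ii) $\Rightarrow$ (i) and then read off the supplementary claims from the arguments along the way.

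First the purely algebraic directions. For (iii) $\Rightarrow$ (ii): since $p$ is a surjection, every $k \in X'$ has nonempty fiber $p^{-1}(k)$, so I can define a linear operator $H' \maps \R^{X'} \to \R^{X'}$ on basis vectors by $H' e_k := p_* H e_j$ for an arbitrary $j \in p^{-1}(k)$, and condition (iii) is exactly what makes this independent of the choice of $j$. Then $H' p_* e_j = H' e_{p(j)} = p_* H e_j$ for all $j \in X$, so $H' p_* = p_* H$. The converse (ii) $\Rightarrow$ (iii) is immediate: if $p(j) = p(j')$ then $p_* e_j = e_{p(j)} = e_{p(j')} = p_* e_{j'}$, hence $p_* H e_j = H' p_* e_j = H' p_* e_{j'} = p_* H e_{j'}$. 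For (ii) $\Rightarrow$ (i): given $p_* H = H' p_*$ and any stochastic section $s$, right-multiplying by $s$ and using $p_* s = 1_{X'}$ gives $p_* H s = H' p_* s = H'$; in particular $p_* H s$ does not depend on $s$, so $H$ is lumpable. Note this same computation establishes the formula $H' = p_* H s$ asserted at the end of the theorem.

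The only implication requiring an idea is (i) $\Rightarrow$ (iii), and here is the trick. Suppose $H$ is lumpable and fix $j, j' \in X$ with $p(j) = p(j') =: k$. Choose a set-theoretic section $\sigma \maps X' \to X$ of $p$ (which exists by surjectivity), and let $\sigma'$ and $\sigma''$ be the modifications agreeing with $\sigma$ everywhere except that $\sigma'(k) = j$ and $\sigma''(k) = j'$; these are still set-theoretic sections of $p$. Then $\sigma'_*, \sigma''_* \maps \R^{X'} \to \R^X$ are stochastic operators with $p_* \sigma'_* = (p \circ \sigma')_* = 1_{X'}$ and likewise for $\sigma''$, so both are stochastic sections of $p$. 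By lumpability $p_* H \sigma'_* = p_* H \sigma''_*$, and applying both sides to $e_k$ yields $p_* H e_j = p_* H e_{j'}$, which is (iii). (Equivalently, one fixes a single stochastic section and perturbs it along the vector $e_j - e_{j'} \in \ker p_*$; the content is that the convex set of stochastic sections is large enough to detect the action of $p_* H$ on each fiber of $p$.)

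Finally the supplementary statements. For uniqueness of $H'$: if $H'_1 p_* = H'_2 p_* = p_* H$ then $(H'_1 - H'_2) p_* = 0$, and $p_*$ is surjective because $p$ is (its image contains every $e_k = p_* e_j$ for $j \in p^{-1}(k)$), so $H'_1 = H'_2$. The formula $H' = p_* H s$ for any stochastic section $s$ was obtained in the proof of (ii) $\Rightarrow$ (i). And $H' = p_* H s$ is infinitesimal stochastic by Lemma~\ref{lem:new_markov_process}, since $H$ is infinitesimal stochastic and $s$ is a stochastic section. I expect (i) $\Rightarrow$ (iii) to be the main obstacle, though only a mild one: everything else is bookkeeping with $p_*$, $s$, and $\ker p_*$.
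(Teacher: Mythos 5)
Your proposal is correct and follows essentially the same route as the paper: the cycle (i)\,$\Rightarrow$\,(iii)\,$\Rightarrow$\,(ii)\,$\Rightarrow$\,(i), with the key step (i)\,$\Rightarrow$\,(iii) obtained by exhibiting two stochastic sections sending $e_k$ to $e_j$ and $e_{j'}$ respectively (the paper picks these abstractly as operators sending each $e_i$ to a probability distribution supported on the fiber, while you realize them concretely as pushforwards of set-theoretic sections --- the same idea). The supplementary claims are handled the same way, via $H' = H' p_* s = p_* H s$ and Lemma~\ref{lem:new_markov_process}.
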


\begin{proof}
$(i) \implies (iii)$.    Suppose that $H$ is lumpable with respect to $p$.  Thus, $p_* H s \maps \R^{X'} \to \R^{X'}$ is independent of the choice of stochastic section $s \maps \R^{X'} \to \R^X$.    Such a stochastic section is simply an arbitrary linear operator that maps each basis vector $e_i \in \R^{X'}$ to a probability distribution on $X$ supported on the set $\{j \in X: p(j) = i\}$.   Thus, for any $j,j' \in X$ with $p(j)=p(j')=i$, we can find stochastic sections $s,s' \maps \R^{X'} \to \R^X$ such that $s(e_i) = e_j$ and $s'(e_i)=e_{j'}$.   Since $p_* Hs = p_* Hs'$, we have
\[  p_* H e_j = p_* Hs(e_i)=p_* Hs'(e_i) = p_* H e_{j'}. \]

$(iii) \implies (ii)$.  Define $H' \maps \R^{X'} \to \R^{X'}$ on basis vectors
$e_i \in \R^{X'}$ by setting 
\[      H' e_i = p_* H e_j \]
for any $j$ with $p(j)=i$.   Note that $H'$ is well-defined: since $p$ is a surjection such $j$ exists, and since $H$ is lumpable, $H'$ is independent of the choice of such $j$.    Next, note that for any $j \in X$, if we let $p(j) = i$ we have $p_* H e_j = H' e_i = H' p_* e_j$.  Since the vectors $e_j$ form a basis for $\R^X$, it follows that $p_* H = H' p_*$.

$(ii) \implies (i)$.    Suppose there exists an operator $H' \maps \R^{X'} \to \R^{X'}$ such that $p_* H = H' p_*$.  Choose such an operator; then for any stochastic section $s$ for $p$ we have
\[    p_* Hs = H'p_* s = H' .\]
It follows that $p_* Hs$ is independent of the stochastic section $s$, so $H$ is lumpable with respect to $p$.

Suppose that any, hence all, of conditions $(i), (ii), (iii)$ hold.   Suppose that $H' \maps \R^{X'} \to \R^{X'}$ is an operator with $p_* H = H' p_*$.  Then the argument in the previous paragraph shows that $H' = p_* H s$ for any stochastic section $s$ of $p$.  Thus $H'$ is unique, and by Lemma \ref{lem:new_markov_process} it is infinitesimal stochastic.
\end{proof}

\section{A double category of open Markov processes}
\label{sec:MMark}

In this section we construct a symmetric monoidal double category $\MMark$ with open Markov processes as horizontal 1-cells and morphisms between these as 2-morphisms.   Symmetric monoidal double categories were introduced by Shulman \cite{Shulman} and applied to various examples from engineering by the second author \cite{Courser}.   We refer the reader to those papers for the basic definitions, since they are rather long.

The pieces of the double category $\MMark$ work as follows: 
\begin{enumerate}
\item An object is a finite set. 
\item A vertical 1-morphism $f \maps S \to S'$ is a map between finite sets.
\item A horizontal 1-cell is an open Markov process 
\[
\begin{tikzpicture}[scale=1.5]
\node (D) at (-4,0) {$S$};
\node (E) at (-3,0) {$(X,H)$};
\node (F) at (-2,0) {$T$.};
\path[->,font=\scriptsize,>=angle 90]
(D) edge node[above] {$i$}(E)
(F) edge node[above] {$o$}(E);
\end{tikzpicture}
\]
In other words, it is a pair of injections $S \stackrel{i}{\rightarrow} X \stackrel{o}{\leftarrow} T$
together with a Markov process $H$ on $X$.
\item A 2-morphism is a morphism of open Markov processes
\[
\begin{tikzpicture}[scale=1.5]
\node (D) at (-4.2,-0.5) {$S$};
\node (A) at (-4.2,-2) {$S'$};
\node (B) at (-1.8,-2) {$T'$.};
\node (E) at (-3,-0.5) {$(X,H)$};
\node (F) at (-1.8,-0.5) {$T$};
\node (G) at (-3,-2) {$(X',H')$};
\path[->,font=\scriptsize,>=angle 90]
(D) edge node[above] {$i_1$}(E)
(A) edge node[above] {$i'_1$} (G)
(B) edge node [above]{$o'_1$} (G)
(F) edge node [above]{$o_1$}(E)
(D) edge node [left]{$f$}(A)
(F) edge node [right]{$g$}(B)
(E) edge node[left] {$p$}(G);
\end{tikzpicture}
\]
In other words, it is a triple of maps $f,p,g$ such that these squares are pullbacks:
\[
\begin{tikzpicture}[scale=1.5]
\node (D) at (-4.2,-0.5) {$S$};
\node (A) at (-4.2,-2) {$S'$};
\node (B) at (-1.8,-2) {$T'$,};
\node (E) at (-3,-0.5) {$X$};
\node (F) at (-1.8,-0.5) {$T$};
\node (G) at (-3,-2) {$X'$};
\path[->,font=\scriptsize,>=angle 90]
(D) edge node[above] {$i_1$}(E)
(A) edge node[above] {$i'_1$} (G)
(B) edge node [above]{$o'_1$} (G)
(F) edge node [above]{$o_1$}(E)
(D) edge node [left]{$f$}(A)
(F) edge node [right]{$g$}(B)
(E) edge node[left] {$p$}(G);
\end{tikzpicture}
\]
and $H' p_* = p_* H$.
\end{enumerate}

Composition of vertical 1-morphisms in $\MMark$ is straightforward.  So is vertical 
composition of 2-morphisms, since we can paste two pullback squares and get a new
pullback square.   Composition of horizontal 1-cells is a bit more subtle.   Given 
open Markov processes
\begin{equation}
\label{eq:composable}
\begin{tikzpicture}[scale=1.5]
\node (D) at (-4,0) {$S$};
\node (E) at (-3,0) {$(X,H)$};
\node (F) at (-2,0) {$T,$};
\path[->,font=\scriptsize,>=angle 90]
(D) edge node[above] {$i_1$}(E)
(F) edge node[above] {$o_1$}(E);
\end{tikzpicture}
\qquad
\begin{tikzpicture}[scale=1.5]
\node (D) at (-4,0) {$T$};
\node (E) at (-3,0) {$(Y,G)$};
\node (F) at (-2,0) {$U$};
\path[->,font=\scriptsize,>=angle 90]
(D) edge node[above] {$i_2$}(E)
(F) edge node[above] {$o_2$}(E);
\end{tikzpicture}
\end{equation}
we first compose their underlying cospans using a pushout:
\[
    \xymatrix{
      && X +_T Y \\
      & X \ar[ur]^{j} && Y \ar[ul]_{k} \\
      \quad S\quad \ar[ur]^{i_1} && T \ar[ul]_{o_1} \ar[ur]^{i_2} &&\quad U \quad \ar[ul]_{o_2}
    }
\]
Since monomorphisms are stable under pushout in a topos, the legs of this new cospan are
again injections, as required.   We then define the composite open Markov process to be
\[
\begin{tikzpicture}[scale=1.5]
\node (D) at (-3.5,0) {$S$};
\node (E) at (-2,0) {$(X +_T Y, H \odot G)$};
\node (F) at (-0.5,0) {$U$};
\path[->,font=\scriptsize,>=angle 90]
(D) edge node[above] {$j i_1$}(E)
(F) edge node[above] {$k o_2$}(E);
\end{tikzpicture}
\]
where 
\begin{equation}
\label{eq:odot} 
H \odot G = j_*  H  j^* + k_* G  k^* .
\end{equation}
Here we use both pullbacks and pushforwards along the maps $j$ and $k$, as defined in Def.\ 
\ref{defn:push_and_pull}.  To check that $H \odot G$ is a Markov process on $X +_T Y$ we need to check that $j_* H j^*$ and $k_* G k^*$, and thus their sum, are infinitesimal stochastic:

\begin{lem} 
\label{lem:push-pull}
Suppose that $f \maps X \to Y$ is any map between finite sets.   If $H \maps \R^X \to \R^X$ is infinitesimal stochastic, then $f_* H f^* \maps \R^Y \to \R^Y$ is infinitesimal stochastic.
\end{lem}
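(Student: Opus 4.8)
The plan is to verify the two defining conditions of an infinitesimal stochastic operator directly, by computing the matrix entries of $f_* H f^*$ with respect to the standard bases of $\R^X$ and $\R^Y$. Since $f^*$ and $f_*$ are transposes of one another, as observed after Definition \ref{defn:push_and_pull}, we have $(f^*)_{jy} = \delta_{f(j),y}$ and $(f_*)_{yj} = \delta_{f(j),y}$ for $j \in X$ and $y \in Y$. Multiplying the three matrices, one finds that the $(y,y')$ entry of $f_* H f^*$ is the sum of $H_{jk}$ over all $j$ in the fiber $f^{-1}(y)$ and all $k$ in the fiber $f^{-1}(y')$. With this formula in hand the two conditions fall out.

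First I would check condition (i): nonnegativity of off-diagonal entries. If $y \neq y'$, the fibers $f^{-1}(y)$ and $f^{-1}(y')$ are disjoint, so every term $H_{jk}$ in the sum above has $j \neq k$, and hence $H_{jk} \geq 0$ because $H$ is infinitesimal stochastic; so the sum is $\geq 0$. This also covers the case where $y'$ is not in the image of $f$: then $f^{-1}(y')$ is empty and the whole column vanishes. Next I would check condition (ii): each column of $f_* H f^*$ sums to zero. Summing the $(y,y')$ entry over all $y \in Y$ simply removes the constraint on the index $j$, turning the double fiber-sum into $\sum_{k \in f^{-1}(y')} \sum_{j \in X} H_{jk}$; each inner sum is $0$ since $H$ is infinitesimal stochastic, so the column sum is $0$. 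This proves the lemma, and since a sum of infinitesimal stochastic operators is again infinitesimal stochastic, it follows that $H \odot G = j_* H j^* + k_* G k^*$ is infinitesimal stochastic, as needed for horizontal composition in $\MMark$.

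There is no serious obstacle here; the only thing to be careful about is the index bookkeeping in the matrix product and, in particular, not assuming that $f$ is surjective. It is worth noting that the semigroup argument used for Lemma \ref{lem:new_markov_process} does \emph{not} transfer: although $f_*$ is stochastic, $f^*$ is not (the entries of a column of $f^*$ sum to the cardinality of a fiber, not to $1$), so $f_* \exp(tH) f^*$ need not be stochastic and there is no shortcut through exponentials. Hence the direct computation above really is the natural route.
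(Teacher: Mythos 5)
Your proof is correct and follows essentially the same route as the paper: compute the matrix entries $(f_* H f^*)_{yy'} = \sum_{j \in f^{-1}(y),\, k \in f^{-1}(y')} H_{jk}$ and read off both defining conditions from the corresponding properties of $H$. The paper states the final verification in one line where you spell it out (and your aside about why the exponential argument of Lemma \ref{lem:new_markov_process} does not apply here is accurate), but the argument is the same.
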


\begin{proof}
Using Def.\ \ref{defn:push_and_pull}, we see that the matrix elements of $f^*$ and $f_*$ are
given by
\[               (f^*)_{ji} = (f_*)_{ij} = \left\{ \begin{array}{cc} 1 & f(j) = i \\  0 & \textrm{otherwise} 
                                   \end{array} \right.
\]
for all $i \in Y$, $j \in X$.  Thus, $f_* H f^*$ has matrix entries
\[               (f_* H f^*)_{ii'} = \sum_{j,j': \;  f(j) = i, f(j') = i'} H_{jj'}  .\]
To show that $f_* H f^*$ is infinitesimal stochastic we need to show that its off-diagonal entries are nonnegative and its columns sum to zero.  By the above formula, these follow from the same facts for $H$.
\end{proof}

Another formula for horizontal composition is also useful.  Given the composable open Markov processes in Eq.\ (\ref{eq:composable}) we can take the copairing of the maps $j \maps X \to X +_T Y$ and $k \maps Y \to X +_T Y$ and get a map $\ell \maps X + Y \to X +_T Y$.   Then 
\begin{equation}
\label{eq:odot2}
H \odot G = \ell_* (H \oplus G) \ell^*  
\end{equation}
where $H \oplus G \maps \R^{X + Y} \to \R^{X + Y}$ is the direct sum of the operators $H$ and $G$.   This is easy to check from the definitions.

Horizontal composition of 2-morphisms is even subtler:

\begin{lem}
\label{lem:horizontal_composition}
Suppose that we have horizontally composable 2-morphisms as follows:
\[
\begin{tikzpicture}[scale=1.5]
\node (D) at (-4.2,-0.5) {$S$};
\node (A) at (-4.2,-2) {$S'$};
\node (B) at (-1.8,-2) {$T'$};
\node (E) at (-3,-0.5) {$(X,H)$};
\node (F) at (-1.8,-0.5) {$T$};
\node (D') at (-1.2,-0.5) {$T$};
\node (C) at (-1.2,-2) {$T'$};
\node (H) at (1.2,-2) {$U'$};
\node (E') at (0,-0.5) {$(Y,G)$};
\node (F') at (1.2,-0.5) {$U$};
\node (G) at (-3,-2) {$(X',H')$};
\node (G') at (0,-2) {$(Y',G')$};
\path[->,font=\scriptsize,>=angle 90]
(D) edge node[above] {$i_1$}(E)
(A) edge node[above] {$i'_1$} (G)
(B) edge node [above]{$o'_1$} (G)
(F) edge node [above]{$o_1$}(E)
(D) edge node [left]{$f$}(A)
(F) edge node [right]{$g$}(B)
(E) edge node[left] {$p$}(G)
(D') edge node [above]{$i_2$}(E')
(F') edge node [above]{$o_2$}(E')
(D') edge node [left]{$g$}(C)
(C) edge node [above]{$i'_2$} (G')
(H) edge node [above]{$o'_2$} (G')
(F') edge node[right] {$h$}(H)
(E') edge node[left] {$q$}(G');
\end{tikzpicture}
\]
Then there is a 2-morphism 
\[
\begin{tikzpicture}[scale=1.5]
\node (D) at (-5.5,0) {$S$};
\node (A) at (-5.5,-1.5) {$S'$};
\node (B) at (-0.5,-1.5) {$U'$};
\node (E) at (-3,0) {$(X+_T Y,H \odot G)$};
\node (F) at (-0.5,0) {$U$};
\node (G) at (-3,-1.5) {$(X' +_{T'} Y', H' \odot G')$};
\path[->,font=\scriptsize,>=angle 90]
(D) edge node [above]{$i_3$}(E)
(F) edge node [above]{$o_3$}(E)
(D) edge node [left]{$f$}(A)
(F) edge node [right]{$h$}(B)
(E) edge node[left] {$p+_g q$}(G)
(A) edge node [above]{$i'_3$} (G)
(B) edge node [above]{$o'_3$} (G);
\end{tikzpicture}
\]
whose underlying diagram of finite sets is
\[
\begin{tikzpicture}[scale=1.5]
\node (D) at (-6,0) {$S$};
\node (A) at (-6,-1.5) {$S'$};
\node (E) at (-4.8,0) {$X$};
\node (E') at (-3.5,0) {$X+_T Y$};
\node (F) at (-2.2,0) {$Y$};
\node (F') at (-1,0) {$U$};
\node (G) at (-4.8,-1.5) {$X'$};
\node (G') at (-3.5,-1.5) {$X' +_{T'} Y'$};
\node (H) at (-2.2,-1.5) {$Y'$};
\node (H') at (-1,-1.5) {$U'$,};
\path[->,font=\scriptsize,>=angle 90]
(D) edge node [above]{$i_1$}(E)
(E) edge node [above]{$j$}(E')
(F) edge node [above]{$k$} (E')
(F') edge node [above] {$o_2$} (F)
(D) edge node [left]{$f$}(A)
(E') edge node[left] {$p+_g q$}(G')
(F') edge node [right]{$h$} (H')
(A) edge node [above]{$i'_1$} (G)
(G) edge node [above]{$j'$} (G')
(H) edge node [above]{$k'$} (G')
(H') edge node [above]{$o'_2$} (H);
\end{tikzpicture}
\]
where $j,k,j',k'$ are the canonical maps from $X,Y,X',Y'$, respectively,
to the pushouts $X +_T Y$ and $X' +_{T'} Y'$.
\end{lem}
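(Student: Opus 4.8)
The plan is to build the claimed 2-morphism in three stages: first produce the vertical map $p +_g q$ on the pushout, then show the two squares in its underlying diagram are pullbacks, and finally verify the intertwining relation between $H \odot G$ and $H' \odot G'$.

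First I would construct $p +_g q \maps X +_T Y \to X' +_{T'} Y'$ from the universal property of the pushout $X +_T Y$. The cocone I feed in is the pair $j'p \maps X \to X' +_{T'} Y'$ and $k'q \maps Y \to X' +_{T'} Y'$; these agree on $T$ because $p o_1 = o_1' g$ and $q i_2 = i_2' g$ (commutativity of the right square of the first 2-morphism and the left square of the second), together with $j' o_1' = k' i_2'$ (the defining equation of the pushout $X' +_{T'} Y'$). This yields the unique map with $(p +_g q) j = j' p$ and $(p +_g q) k = k' q$, which are precisely the left- and right-hand squares of the stated underlying diagram; the remaining commuting squares in that diagram are obtained by precomposing with $i_1$ and $o_2$, and with $f$ and $h$.

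Next I would show that the two squares
\[
\begin{array}{ccc}
X & \xrightarrow{j} & X +_T Y \\
\downarrow & & \downarrow \\
X' & \xrightarrow{j'} & X' +_{T'} Y'
\end{array}
\qquad\qquad
\begin{array}{ccc}
Y & \xrightarrow{k} & X +_T Y \\
\downarrow & & \downarrow \\
Y' & \xrightarrow{k'} & X' +_{T'} Y'
\end{array}
\]
(with vertical maps $p,\ p +_g q$ and $q,\ p +_g q$) are pullbacks. Since $\FinSet$ is a topos it is adhesive, so the pushout square defining $X' +_{T'} Y'$—a pushout of the injections $o_1'$ and $i_2'$—is a van Kampen square. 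Applying the van Kampen property to the evident commutative cube whose bottom face is that pushout and whose top face is the pushout defining $X +_T Y$: the two faces adjacent to the pushout apex $T'$ are exactly the right square of the first 2-morphism and the left square of the second, both pullbacks by hypothesis, and the top face is a pushout; hence the two faces adjacent to $X' +_{T'} Y'$, namely the squares displayed above, are pullbacks. Pasting then finishes the underlying diagram: the left square of the composite 2-morphism factors as the given pullback $S \to X$ over $S' \to X'$ followed by the $j/j'$ pullback just obtained, so it is a pullback by the pasting law, and likewise on the output side. This van Kampen step is the crux of the argument; one could instead grind through an explicit description of pushouts and pullbacks in $\FinSet$, but invoking adhesivity is cleanest.

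Finally I would verify $(p +_g q)_* (H \odot G) = (H' \odot G')(p +_g q)_*$. Using $H \odot G = j_* H j^* + k_* G k^*$ and functoriality of pushforward, $(p +_g q)_* j_* = (j' p)_* = j'_* p_*$, so the first summand becomes $j'_* p_* H j^* = j'_* H' p_* j^*$ by the intertwining relation $p_* H = H' p_*$. Now the Beck--Chevalley property for the pullback square $X /(X +_T Y)/ X' /(X' +_{T'} Y')$—a direct computation in $\FinSet$—gives $p_* j^* = j'^* (p +_g q)_*$, so the first summand equals $j'_* H' j'^* (p +_g q)_*$; the second summand equals $k'_* G' k'^* (p +_g q)_*$ by the same reasoning applied to $q_* G = G' q_*$ and the $k/k'$ pullback square. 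Summing, $(p +_g q)_*(H \odot G) = \big(j'_* H' j'^* + k'_* G' k'^*\big)(p +_g q)_* = (H' \odot G')(p +_g q)_*$. Together with the pullback squares established above (and with $H \odot G$, $H' \odot G'$ infinitesimal stochastic by Lemma~\ref{lem:push-pull}), this exhibits the desired 2-morphism.
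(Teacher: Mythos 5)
Your proposal is correct and follows essentially the same route as the paper: decompose the outer squares as pastings, use adhesivity of $\FinSet$ (the van Kampen property of the pushout along the monos $o'_1$, $i'_2$ in the commutative cube) to get the pullback faces, and then combine $(p+_g q)_* j_* = j'_* p_*$, the intertwining hypotheses, and the Beck--Chevalley identity for the new pullback faces to verify $(p+_g q)_*(H\odot G) = (H'\odot G')(p+_g q)_*$. The only addition is your explicit construction of $p+_g q$ from the universal property, which the paper leaves implicit.
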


\begin{proof}
To show that we have defined a 2-morphism, we first check that the squares in the above diagram of finite sets are pullbacks.  Then we show that $(p +_g q)_* (H \odot G) = (H' \odot G') (p +_g q)_*$.

For the first part, it suffices by the symmetry of the situation to consider the left square.   We can write it as a pasting of two smaller squares:
\[
\begin{tikzpicture}[scale=1.5]
\node (D) at (-6,0) {$S$};
\node (A) at (-6,-1.5) {$S'$};
\node (E) at (-4.8,0) {$X$};
\node (E') at (-3.5,0) {$X+_T Y$};
\node (G) at (-4.8,-1.5) {$X'$};
\node (G') at (-3.5,-1.5) {$X' +_{T'} Y'$};
\path[->,font=\scriptsize,>=angle 90]
(D) edge node [above]{$i_1$}(E)
(E) edge node [above]{$j$}(E')
(D) edge node [left]{$f$}(A)
(E) edge node[left] {$p$}(G)
(E') edge node[right] {$p+_g q$}(G')
(A) edge node [above]{$i'_1$} (G)
(G) edge node [above]{$j'$} (G');
\end{tikzpicture}
\]
By assumption the left-hand smaller square is a pullback, so it suffices to prove this for the right-hand one.   For this we use that fact that $\FinSet$ is a topos and thus an adhesive category \cite{LackSobocinski1,LackSobocinski2}, and consider this commutative cube:
\[
\begin{tikzpicture}
\node (T) at (2,3) {$T$};
\node (T') at (2,0) {$T'$};
\node (XTY) at (8,3) {$X +_T Y$};
\node (X) at (6,3.75) {$X$};
\node (Y) at (4,2.25) {$Y$};
\node (X'T'Y') at (8,0) {$X' +_{T'} Y'$};
\node (X') at (6,0.75) {$X'$};
\node (Y') at (4,-0.75) {$Y'$};
\draw [->] (T) edge node[above] {$o_1$} (X);
\draw [->] (T) edge node[above,pos=0.6] {$i_2$} (Y);
\draw [->] (T') edge node[above,pos=0.7] {$o'_1$} (X');
\draw [->] (T') edge node[below,pos=0.4] {$i'_2$} (Y');
\draw [->] (X) edge node[left,pos=0.7] {$p$} (X');
\draw [->] (X) edge node[above] {$j$} (XTY);
\draw [->] (Y) edge node[above,pos=0.4] {$k$} (XTY);
\draw [->] (XTY) edge node[right] {$p +_g q$}(X'T'Y');
\draw [->] (X') edge node[above] {$j'$} (X'T'Y');
\draw [->] (Y') edge node[below] {$k'$} (X'T'Y');

\draw [->] (T) edge[white,line width=4pt] (T');
\draw [->] (T) edge node[left] {$g$} (T');
\draw [->] (Y) edge[white,line width=6pt] (XTY);
\draw [->] (Y) edge (XTY);
\draw [->] (Y) edge[white,line width=6pt] (Y');
\draw [->] (Y) edge node[left,pos=0.4] {$q$} (Y');
\end{tikzpicture}
\]
By assumption the top and bottom faces are pushouts, the two left-hand vertical faces are
pullbacks, and the arrows $o'_1$ and $i_2'$ are monic.   In an adhesive category, this implies that the two right-hand vertical faces are pullbacks as well.   One of these is the square in question.

To show that $(p +_g q)_* (H \odot G) = (H' \odot G') (p +_g q)_*$, we again use the 
above cube.   Because its two right-hand vertical faces commute, we have
\[    (p +_g q)_* j_* = j'_* p_*  \quad \textrm{and} \quad (p +_g q)_* k_* = k'_* q_* \]
so using the definition of $H \odot G$ we obtain
\[ 
\begin{array}{ccl}
(p +_g q)_* (H \odot G) &=& (p +_g q)_* ( j_* H j^* + k_* G  k^*) \\
&=& (p +_g q)_*  j_* H  j^* \; + \; (p +_g q)_*  k_* G  k^* \\
&=& j'_* p_* H j^* \;\; + \; \; k'_* q_* G k^* . 
\end{array}
\]
By assumption we have
\[     p_* H = H' p_*  \quad \textrm{and} \quad q_* G = G' q_*  \]
so we can go a step further, obtaining
\[  (p +_g q)_* (H \odot G) =  j'_* H' p_* j^* \; + \; k'_* G' q_* k^* .\]
Because the two right-hand vertical faces of the cube are pullbacks, Lemma \ref{lem:beck-chevalley} below implies that
\[         p_* j^* = j'^* (p +_g q)_* \quad \textrm{and} \quad 
            q_* k^* = k'^* (p +_g q)_* .
\]
Using these, we obtain
\[ \begin{array}{ccl}
 (p +_g q)_* (H \odot G) 
 &=&  j'_* H'   j'^* (p +_g q)_* \; + \; k'_* G'  k'^* (p +_g q)_* \\
 &=& (j'_* H' j'^* + k'_* G' k'^*) (p +_g q)_* \\
 &=& (H' \odot G') (p +_g q)_* 
\end{array}
\]
completing the proof.
\end{proof}

The following lemma is reminiscent of the Beck--Chevalley condition for adjoint functors:

\begin{lem}
\label{lem:beck-chevalley} 
Given a pullback square in $\FinSet$:
\[
\begin{tikzpicture}[scale=1.5]
\node (D) at (-4,0) {$A$};
\node (E) at (-3,0) {$B$};
\node (G) at (-3,-1) {$D$};
\node (A) at (-4,-1) {$C$};
\path[->,font=\scriptsize,>=angle 90]
(D) edge node [left]{$g$}(A)
(D) edge node [above] {$f$}(E)
(A) edge node[left] [below] {$k$} (G)
(E) edge node[right] {$h$}(G);
\end{tikzpicture}
\]
the following square of linear operators commutes:
\[
\begin{tikzpicture}[scale=1.5]
\node (D) at (-4,0) {$\R^A$};
\node (E) at (-3,0) {$\R^B$};
\node (G) at (-3,-1) {$\R^D$};
\node (A) at (-4,-1) {$\R^C$};
\path[->,font=\scriptsize,>=angle 90]
(D) edge node [left]{$g_*$}(A)
(E) edge node [above] {$f^*$}(D)
(G) edge node[left] [below] {$k^*$} (A)
(E) edge node[right] {$h_*$}(G);
\end{tikzpicture}
\]
\end{lem}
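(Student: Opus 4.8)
The plan is to verify the identity entrywise in the standard bases, using only the concrete description of a pullback in $\FinSet$ together with the matrix formulas for pushforward and pullback from Def.~\ref{defn:push_and_pull}. Recall that for a map $f \maps A \to B$ one has $(f_*)_{ba} = (f^*)_{ab} = 1$ when $f(a) = b$ and $0$ otherwise, for $a \in A$, $b \in B$.

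First I would expand both composites. For $c \in C$ and $b \in B$,
\[ (g_* f^*)_{cb} = \sum_{a \in A} (g_*)_{ca}\,(f^*)_{ab} = \#\{\, a \in A : g(a) = c \text{ and } f(a) = b \,\}, \]
while, summing over $d \in D$,
\[ (k^* h_*)_{cb} = \sum_{d \in D} (k^*)_{cd}\,(h_*)_{db} = \#\{\, d \in D : k(c) = d = h(b) \,\} = \begin{cases} 1 & h(b) = k(c) \\ 0 & \text{otherwise.} \end{cases} \]
Thus the lemma reduces to the claim that for every $b \in B$ and $c \in C$ there is exactly one $a \in A$ with $f(a) = b$ and $g(a) = c$ when $h(b) = k(c)$, and no such $a$ otherwise. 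The second case is immediate from commutativity of the square: if such an $a$ existed, then $h(b) = h(f(a)) = k(g(a)) = k(c)$. For the first case I would invoke the universal property of the pullback applied to the one-point set: when $h(b) = k(c)$, the maps picking out $b$ and $c$ form a cone over $B \xrightarrow{h} D \xleftarrow{k} C$, so the pullback square supplies a unique map from the one-point set into $A$, that is, a unique $a \in A$ with $f(a) = b$ and $g(a) = c$. (Concretely, identifying $A$ with $\{(b,c) \in B \times C : h(b) = k(c)\}$ and $f,g$ with the two projections turns this into a tautology.)

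I do not expect a genuine obstacle here; the only points requiring care are bookkeeping ones. One must keep track of the fact that in the square of linear operators the arrows $f^*$ and $k^*$ run \emph{opposite} to $f$ and $k$, so that the composite really is $g_* \circ f^*$ as a map $\R^B \to \R^C$; and one must use the pullback in the direction that yields existence-\emph{and}-uniqueness of the lifted element, not merely the commutativity already guaranteed by the square. Once the entrywise identity above is in hand, $g_* f^* = k^* h_*$ follows because a linear operator between spaces $\R^B$ and $\R^C$ is determined by its matrix in the standard bases.
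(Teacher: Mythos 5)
Your proof is correct and follows essentially the same route as the paper's: both reduce the identity $g_* f^* = k^* h_*$ to the combinatorial fact that, for fixed $b \in B$ and $c \in C$, there is exactly one $a \in A$ with $f(a) = b$ and $g(a) = c$ when $h(b) = k(c)$ and none otherwise, which is exactly the pullback property (the paper phrases this as $f$ restricting to a bijection $\{a : g(a) = c\} \to \{b : h(b) = k(c)\}$, you phrase it entrywise as a count). The only difference is cosmetic: matrix entries versus evaluating the operators on a vector at a point.
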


\begin{proof}
Choose $v \in \R^B$ and $c \in C$.  Then
\[ 
\begin{array}{ccl}
(g_* f^* (v))(c) &=& \displaystyle{ \sum_{a: g(a) = c} v(f(a)), }\\ \\
(k^* h_* (v))(c) &=& \displaystyle{ \sum_{b:  h(b) = k(c)}  v(b) ,  }
\end{array}
\]
so to show $g_* f^* = k^* h_*$ it suffices to show that $f$ restricts to a bijection
\[      f \maps \{ a \in A :  g(a) = c \} \stackrel{\sim}{\longrightarrow} \{ b \in B: h(b) = k(c) \} .\] 
On the one hand, if $a \in A$ has $g(a) = c$ then $b = f(a)$ has
$h(b) = h(f(a)) = k(g(a)) = k(c)$, so the above map is well-defined.  On the other hand, 
if $b \in B$ has $h(b) = k(c)$, then by the definition of pullback there exists a unique $a \in A$
such that $f(a) = b$ and $g(a) = c$, so the above map is a bijection.
\end{proof}

\begin{thm}
There exists a double category $\MMark$ as defined above.
\end{thm}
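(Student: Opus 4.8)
The plan is to verify the axioms of a pseudo double category directly, using the double category of cospans in the topos $\FinSet$ as scaffolding and feeding in Lemmas~\ref{lem:push-pull}, \ref{lem:horizontal_composition} and~\ref{lem:beck-chevalley} to control the Markov-process decorations. First I would note that objects and vertical $1$-morphisms form the category $\FinSet$, so $\MMark_0$ needs no checking. Next I would check that horizontal $1$-cells and $2$-morphisms form a category $\MMark_1$ under vertical composition: two pullback squares paste to a pullback square, and if $p_* H = H' p_*$ and $p'_* H' = H'' p'_*$ then $(p'p)_* H = p'_* p_* H = p'_* H' p_* = H'' p'_* p_* = H'' (p'p)_*$, so vertical composites of $2$-morphisms are again $2$-morphisms; the identity $2$-morphism on a horizontal $1$-cell has all-identity squares. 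Sending a horizontal $1$-cell to its two feet and a $2$-morphism to its two vertical legs then gives source and target functors $\MMark_1 \to \FinSet$.

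\textbf{Horizontal composition and units.} Fixing a choice of pushout for each composable pair of cospans, Eq.~(\ref{eq:odot}) defines $H \odot G$, which is infinitesimal stochastic by Lemma~\ref{lem:push-pull} applied to $j$ and $k$, and whose legs are injections since monomorphisms are stable under pushout in a topos. The horizontal unit on $S$ is $S \xrightarrow{1_S} (S,0) \xleftarrow{1_S} S$, the zero operator being infinitesimal stochastic; the unit $2$-morphism on a vertical $1$-morphism $f \maps S \to S'$ has all legs equal to $f$ and trivially satisfies the pullback and intertwining conditions, yielding a functor $U \maps \FinSet \to \MMark_1$. Horizontal composites of $2$-morphisms are supplied by Lemma~\ref{lem:horizontal_composition}; that $\odot$ is functorial, i.e.\ respects vertical composition and identity $2$-morphisms, follows from the universal property of the pushout, since the maps on pushouts induced by composable (resp.\ identity) cospan maps compose (resp.\ are identities). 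Thus $\odot \maps \MMark_1 \times_{\FinSet} \MMark_1 \to \MMark_1$ is a functor commuting suitably with source and target.

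\textbf{Associator, unitors and coherence.} For the associator I would use the canonical comparison isomorphism $\phi \maps (X +_T Y) +_U Z \xrightarrow{\sim} X +_T (Y +_U Z)$ of iterated pushouts together with identity maps on the feet $S$ and $V$; its squares are pullbacks because built from isomorphisms and identities, and it intertwines the $\odot$-operators by Eq.~(\ref{eq:odot2}): iterating that formula writes the two bracketings as $m_*(H \oplus G \oplus K)m^*$ and $m'_*(H \oplus G \oplus K)m'^*$ for quotient maps $m \maps X+Y+Z \to (X+_T Y)+_U Z$ and $m' \maps X+Y+Z \to X+_T(Y+_U Z)$ with $\phi m = m'$; since $\phi$ is a bijection one has $\phi^*\phi_* = 1 = \phi_*\phi^*$, so $\phi_*\bigl((H\odot G)\odot K\bigr) = \bigl(H\odot(G\odot K)\bigr)\phi_*$, i.e.\ $\phi$ is an invertible $2$-morphism. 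The left and right unitors come the same way from the canonical isomorphisms $S +_S X \cong X$ and $X +_T T \cong X$. Because a $2$-morphism of $\MMark$ is determined by its underlying triple of maps, the intertwining equation being a property, the naturality of these special $2$-isomorphisms and the pentagon and triangle equations reduce to the corresponding statements for the pushout comparisons in $\FinSet$; the same universal-property bookkeeping for iterated pushouts versus pullbacks gives the interchange law between horizontal and vertical composition of $2$-morphisms, with Lemma~\ref{lem:beck-chevalley} matching the push--pull composites on its two sides. Assembling all of this produces the pseudo double category $\MMark$.

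\textbf{Main obstacle.} I expect the crux to be the third step: showing that the purely combinatorial comparison isomorphisms of pushouts lift to invertible $2$-morphisms of $\MMark$ that intertwine the decorating Markov operators, and that coherence transfers --- this is exactly where Eq.~(\ref{eq:odot2}) and Lemma~\ref{lem:beck-chevalley} must be applied with care, whereas everything else is either routine or already packaged in Lemmas~\ref{lem:push-pull} and~\ref{lem:horizontal_composition}. Alternatively one could present the construction as an instance of a general theorem building a double category of decorated or structured cospans, reducing the work to checking that the Markov decorations transform correctly under composition --- but that Markov-specific compatibility is essentially the same computation.
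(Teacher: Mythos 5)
Your proposal is correct and follows essentially the same route as the paper: build $\MMark_0$ and $\MMark_1$, define $u$, $s$, $t$ and $\odot$ via chosen pushouts, use Lemmas~\ref{lem:push-pull} and~\ref{lem:horizontal_composition} for the decorations, check interchange by noting that a $2$-morphism is determined by its underlying maps, and inherit the associator, unitors and coherence from the double category of cospans in $\FinSet$. The one place you do more than the paper is in explicitly verifying via Eq.~(\ref{eq:odot2}) that the associator and unitors intertwine the $\odot$-operators (the paper delegates this to the cospan double category and the reference \cite{Courser}), and your computation there is correct and mirrors the paper's later argument for the interchanger $\hat{\chi}$.
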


\begin{proof}
Let $\MMark_0$, the `category of objects', consist of finite sets and functions.  Let 
$\MMark_1$ the `category of arrows', consist of open Markov processes and morphisms between these:
\[
\begin{tikzpicture}[scale=1.5]
\node (D) at (-4.2,-0.5) {$S$};
\node (A) at (-4.2,-2) {$S'$};
\node (B) at (-1.8,-2) {$T'$.};
\node (E) at (-3,-0.5) {$(X,H)$};
\node (F) at (-1.8,-0.5) {$T$};
\node (G) at (-3,-2) {$(X',H')$};
\path[->,font=\scriptsize,>=angle 90]
(D) edge node[above] {$i_1$}(E)
(A) edge node[above] {$i'_1$} (G)
(B) edge node [above]{$o'_1$} (G)
(F) edge node [above]{$o_1$}(E)
(D) edge node [left]{$f$}(A)
(F) edge node [right]{$g$}(B)
(E) edge node[left] {$p$}(G);
\end{tikzpicture}
\]
To make $\MMark$ into a double category we need to specify the identity-assigning functor 
\[   u \maps \MMark_0 \to \MMark_1, \]
the source and target functors
\[   s,t \maps \MMark_1 \to \MMark_0, \]
and the composition functor
\[ \odot \maps \MMark_1 \times_{\MMark_0} \MMark_1 \to \MMark_1 .\] 
These are given as follows.

For a finite set $S$, $u(S)$ is given by
\[
\begin{tikzpicture}[scale=1.5]
\node (D) at (-4,0) {$S$};
\node (E) at (-3,0) {$(S,0_S)$};
\node (F) at (-2,0) {$S$};
\path[->,font=\scriptsize,>=angle 90]
(D) edge node[above] {$1_S$}(E)
(F) edge node[above] {$1_S$}(E);
\end{tikzpicture}
\]
where $0_S$ is the zero operator from $\R^S$ to $\R^S$.   For a map 
$f \maps S \to S'$ between finite sets, $u(f)$ is given by
\[
\begin{tikzpicture}[scale=1.5]
\node (D) at (-4,0.5) {$S$};
\node (E) at (-3,0.5) {$(S,0_S)$};
\node (F) at (-2,0.5) {$S$};
\node (A) at (-4,-1) {$S'$};
\node (B) at (-2,-1) {$S'$};
\node (G) at (-3,-1) {$(S',0_{S'})$};
\path[->,font=\scriptsize,>=angle 90]
(D) edge node {$$}(E)
(F) edge node {$$}(E)
(D) edge node [left]{$f$}(A)
(F) edge node [right]{$f$}(B)
(A) edge node {$$} (G)
(B) edge node {$$} (G)
(E) edge node[left] {$f$}(G);
\end{tikzpicture}
\]
The source and target functors $s$ and $t$ map a Markov process $S \stackrel{i}{\rightarrow} (X,H) \stackrel{o}{\leftarrow} T$ to $S$ and $T$, respectively, and they map a morphism of open Markov processes 
\[
\begin{tikzpicture}[scale=1.5]
\node (D) at (-4.2,-0.5) {$S$};
\node (A) at (-4.2,-2) {$S'$};
\node (B) at (-1.8,-2) {$T'$};
\node (E) at (-3,-0.5) {$(X,H)$};
\node (F) at (-1.8,-0.5) {$T$};
\node (G) at (-3,-2) {$(X',H')$};
\path[->,font=\scriptsize,>=angle 90]
(D) edge node[above] {$i_1$}(E)
(A) edge node[above] {$i'_1$} (G)
(B) edge node [above]{$o'_1$} (G)
(F) edge node [above]{$o_1$}(E)
(D) edge node [left]{$f$}(A)
(F) edge node [right]{$g$}(B)
(E) edge node[left] {$p$}(G);
\end{tikzpicture}
\]
to $f \maps S \to S'$ and $g \maps T \to T'$, respectively.   The composition functor $\odot$ maps the pair of open Markov processes 
\[
\begin{tikzpicture}[scale=1.5]
\node (D) at (-4.2,-0.5) {$S$};
\node (E) at (-3,-0.5) {$(X,H)$};
\node (F) at (-1.8,-0.5) {$T$};
\node (D') at (-1.2,-0.5) {$T$};
\node (E') at (0,-0.5) {$(Y,G)$};
\node (F') at (1.2,-0.5) {$U$};
\path[->,font=\scriptsize,>=angle 90]
(D) edge node[above] {$i_1$}(E)
(F) edge node[above] {$o_1$}(E)
(D') edge node[above] {$i_2$}(E')
(F') edge node [above]{$o_2$}(E');
\end{tikzpicture}
\]
to their composite
\[
\begin{tikzpicture}[scale=1.5]
\node (D) at (-3.5,0) {$S$};
\node (E) at (-2,0) {$(X +_T Y, H \odot G)$};
\node (F) at (-0.5,0) {$U$};
\path[->,font=\scriptsize,>=angle 90]
(D) edge node[above] {$j i_1$}(E)
(F) edge node[above] {$k o_2$}(E);
\end{tikzpicture}
\]
defined as in Eq.\ (\ref{eq:odot}), and it maps the pair of morphisms of open
Markov processes
\[
\begin{tikzpicture}[scale=1.5]
\node (D) at (-4.2,-0.5) {$S$};
\node (A) at (-4.2,-2) {$S'$};
\node (B) at (-1.8,-2) {$T'$};
\node (E) at (-3,-0.5) {$(X,H)$};
\node (F) at (-1.8,-0.5) {$T$};
\node (D') at (-1.2,-0.5) {$T$};
\node (C) at (-1.2,-2) {$T'$};
\node (H) at (1.2,-2) {$U'$};
\node (E') at (0,-0.5) {$(Y,G)$};
\node (F') at (1.2,-0.5) {$U$};
\node (G) at (-3,-2) {$(X',H')$};
\node (G') at (0,-2) {$(Y',G')$};
\path[->,font=\scriptsize,>=angle 90]
(D) edge node[above] {$i_1$}(E)
(A) edge node[above] {$i'_1$} (G)
(B) edge node [above]{$o'_1$} (G)
(F) edge node [above]{$o_1$}(E)
(D) edge node [left]{$f$}(A)
(F) edge node [right]{$g$}(B)
(E) edge node[left] {$p$}(G)
(D') edge node [above]{$i_2$}(E')
(F') edge node [above]{$o_2$}(E')
(D') edge node [left]{$g$}(C)
(C) edge node [above]{$i'_2$} (G')
(H) edge node [above]{$o'_2$} (G')
(F') edge node[right] {$h$}(H)
(E') edge node[left] {$q$}(G');
\end{tikzpicture}
\]
to their horizontal composite as defined as in Lemma \ref{lem:horizontal_composition}.

It is easy to check that $u,s$ and $t$ are functors.  To prove that $\odot$ is a functor, the main thing we need to check is the interchange law.    Suppose we have four morphisms of open Markov processes as follows:
\[
\begin{tikzpicture}[scale=1.5]
\node (D) at (-4.2,-0.5) {$S$};
\node (A) at (-4.2,-2) {$S'$};
\node (B) at (-1.8,-2) {$T'$};
\node (E) at (-3,-0.5) {$(X,H)$};
\node (F) at (-1.8,-0.5) {$T$};
\node (D') at (-1.2,-0.5) {$T$};
\node (C) at (-1.2,-2) {$T'$};
\node (H) at (1.2,-2) {$U'$};
\node (E') at (0,-0.5) {$(Y,G)$};
\node (F') at (1.2,-0.5) {$U$};
\node (G) at (-3,-2) {$(X',H')$};
\node (G') at (0,-2) {$(Y',G')$};
\node (D'') at (-4.2,-2.5) {$S'$};
\node (I) at (-4.2,-4) {$S''$};
\node (J) at (-1.8,-4) {$T''$};
\node (E'') at (-3,-2.5) {$(X',H')$};
\node (F'') at (-1.8,-2.5) {$T'$};
\node (D''') at (-1.2,-2.5) {$T'$};
\node (E''') at (0,-2.5) {$(Y',G')$};
\node (F''') at (1.2,-2.5) {$U'$};
\node (K) at (-1.2,-4) {$T''$};
\node (L) at (1.2,-4) {$U''$};
\node (G'') at (-3,-4) {$(X'',H'')$};
\node (G''') at (0,-4) {$(Y'',G'')$};
\path[->,font=\scriptsize,>=angle 90]
(D) edge node {$$}(E)
(A) edge node {$$} (G)
(B) edge node {$$} (G)
(F) edge node {$$}(E)
(D) edge node [left]{$f$}(A)
(F) edge node [right]{$g$}(B)
(E) edge node[left] {$p$}(G)
(D') edge node {$$}(E')
(F') edge node {$$}(E')
(D') edge node [left]{$g$}(C)
(C) edge node {$$} (G')
(H) edge node {$$} (G')
(F') edge node[right] {$h$}(H)
(E') edge node[left] {$q$}(G')
(D'') edge node {$$}(E'')
(F'') edge node {$$}(E'')
(D'') edge node [left]{$f'$}(I)
(F'') edge node [right]{$g'$}(J)
(I) edge node {$$} (G'')
(J) edge node {$$} (G'')
(E'') edge node[left] {$p'$}(G'')
(D''') edge node {$$}(E''')
(F''') edge node {$$}(E''')
(D''') edge node [left]{$g'$}(K)
(F''') edge node [right]{$h'$}(L)
(K) edge node {$$} (G''')
(L) edge node {$$} (G''')
(E''') edge node[left] {$q'$}(G''');
\end{tikzpicture}
\]
Composing horizontally gives
\[
\begin{tikzpicture}[scale=1.5]
\node (D) at (-5.5,0) {$S$};
\node (A) at (-5.5,-1.5) {$S'$};
\node (B) at (-0.5,-1.5) {$U'$};
\node (C) at (-5.5,-4) {$S''$};
\node (H) at (-.5,-4) {$U''$,};
\node (E) at (-3,0) {$(X+_T Y,H \odot G)$};
\node (F) at (-0.5,0) {$U$};
\node (G) at (-3,-1.5) {$(X' +_{T'} Y', H' \odot G')$};
\node (D'') at (-5.5,-2.5) {$S'$};
\node (E'') at (-3,-2.5) {$(X' +_{T'} Y', H' \odot G')$};
\node (F'') at (-0.5,-2.5) {$U'$};
\node (G'') at (-3,-4) {$(X'' +_{T''} Y'', H'' \odot G'')$};
\path[->,font=\scriptsize,>=angle 90]
(D) edge node {$$}(E)
(F) edge node {$$}(E)
(D) edge node [left]{$f$}(A)
(F) edge node [right]{$h$}(B)
(E) edge node[left] {$p+_g q$}(G)
(A) edge node {$$} (G)
(B) edge node {$$} (G)
(D'') edge node {$$}(E'')
(F'') edge node {$$}(E'')
(C) edge node {$$} (G'')
(H) edge node {$$} (G'')
(D'') edge node [left]{$f'$}(C)
(F'') edge node [right]{$h'$}(H)
(E'') edge node[left] {$p'+_{g'}q'$}(G'');
\end{tikzpicture}
\]
and then composing vertically gives
\[
\begin{tikzpicture}[scale=1.5]
\node (D) at (-5.5,0.5) {$S$};
\node (A) at (-5.5,-1) {$S''$};
\node (B) at (-0.5,-1) {$U''$.};
\node (E) at (-3,.5) {$(X+_T Y,H \odot G)$};
\node (F) at (-0.5,0.5) {$U$};
\node (G) at (-3,-1) {$(X'' +_{T''} Y'', H'' \odot G'')$};
\path[->,font=\scriptsize,>=angle 90]
(D) edge node {$$}(E)
(F) edge node {$$}(E)
(D) edge node [left]{$f' \circ f$}(A)
(F) edge node [right]{$h' \circ h$}(B)
(A) edge node {$$} (G)
(B) edge node {$$} (G)
(E) edge node[left] {$(p' +_{g'}q') \circ (p+_g q)$}(G);
\end{tikzpicture}
\]
Composing vertically gives
\[
\begin{tikzpicture}[scale=1.5]
\node (D) at (-4.75,0.5) {$S$};
\node (E) at (-3.5,0.5) {$(X,H)$};
\node (F) at (-2.25,0.5) {$T$};
\node (D') at (-1,0.5) {$T$};
\node (E') at (0.25,.5) {$(Y,G)$};
\node (F') at (1.5,0.5) {$U$};
\node (A) at (-4.75,-1) {$S''$};
\node (B) at (-2.25,-1) {$T''$};
\node (C) at (-1,-1) {$T''$};
\node (H) at (1.5,-1) {$U''$,};
\node (G) at (-3.5,-1) {$(X'',H'')$};
\node (G') at (0.25,-1) {$(Y'',G'')$};
\path[->,font=\scriptsize,>=angle 90]
(D) edge node {$$}(E)
(F) edge node {$$}(E)
(D) edge node [left]{$f' \circ f$}(A)
(F) edge node [right]{$g' \circ g$}(B)
(A) edge node {$$} (G)
(B) edge node {$$} (G)
(E) edge node[left] {$p' \circ p$}(G)
(D') edge node {$$}(E')
(F') edge node {$$}(E')
(D') edge node [left]{$g' \circ g$}(C)
(F') edge node [right]{$h' \circ h$}(H)
(C) edge node {$$} (G')
(H) edge node {$$} (G')
(E') edge node[right] {$q' \circ q$}(G');
\end{tikzpicture}
\]
and then composing horizontally gives
\[
\begin{tikzpicture}[scale=1.5]
\node (D) at (-5.5,0.5) {$S$};
\node (A) at (-5.5,-1) {$S''$};
\node (B) at (-0.5,-1) {$U''$.};
\node (E) at (-3,0.5) {$(X+_T Y,H \odot G)$};
\node (F) at (-0.5,0.5) {$U$};
\node (G) at (-3,-1) {$(X'' +_{T''} Y'', H'' \odot G'')$};
\path[->,font=\scriptsize,>=angle 90]
(D) edge node {$$}(E)
(F) edge node {$$}(E)
(D) edge node [left]{$f' \circ f$}(A)
(F) edge node [right]{$h' \circ h$}(B)
(A) edge node {$$} (G)
(B) edge node {$$} (G)
(E) edge node[left] {$(p' \circ p) +_{(g' \circ g)} (q' \circ q)$}(G);
\end{tikzpicture}
\]
The only apparent difference between the two results is the map in the middle: one has $(p' +_{g'}q') \circ (p+_g q)$ while the other has $(p' \circ p) +_{(g' \circ g)} (q' \circ q)$.  But these are in fact the same map, so the interchange law holds. 

The functors $u, s, t$ and $\circ$ obey the necessary relations
\[      s u = 1 = t  u  \]
and the relations saying that the source and target of a composite behave as they should.
Lastly, we have three natural isomorphisms: the associator, left unitor, and right unitor,
which arise from the corresponding natural isomorphisms for the double category of finite sets, functions, cospans of finite sets, and maps of cospans.   The triangle and pentagon equations 
hold in $\MMark$ because they do in this simpler double category \cite{Courser}.
\end{proof}

Next we give $\MMark$ a symmetric monoidal structure.  We call the tensor product `addition'.   Given objects $S,S' \in \MMark_0$ we define their sum $S+S'$ using a chosen coproduct in $\FinSet$.   The unit for this tensor product in $\MMark_0$ is the empty set.  We can similarly define the sum of morphisms in $\MMark_0$, since given maps $f \maps S \to T$ and $f' \maps S' \to T'$ there is a natural map $f + f' \maps S + S' \to T + T'$.   Given two objects in $\MMark_1$:
\[
\begin{tikzpicture}[scale=1.5];
\node (D) at (-4.4,0.5) {$S_1$};
\node (E) at (-3.2,0.5) {$(X_1,H_1)$};
\node (F) at (-2,0.5) {$T_1$};
\node (A) at (-1,0.5) {$S_2$};
\node (B) at (0.2,0.5) {$(X_2,H_2)$};
\node (C) at (1.4,0.5) {$T_2$};
\path[->,font=\scriptsize,>=angle 90]
(D) edge node [above] {$i_1$}(E)
(F) edge node [above] {$o_1$}(E)
(A) edge node [above] {$i_2$} (B)
(C) edge node [above] {$o_2$} (B);
\end{tikzpicture}
\]
we define their sum to be
\[
\begin{tikzpicture}[scale=1.5];
\node (D) at (-5.5,0.5) {$S_1+S_2$};
\node (E) at (-3,0.5) {$(X_1 + X_2,H_1 \oplus H_2)$};
\node (F) at (-0.5,0.5) {$T_1+T_2$};
\path[->,font=\scriptsize,>=angle 90]
(D) edge node [above] {$i_1 + i_2$}(E)
(F) edge node [above] {$o_1 + o_2$}(E);
\end{tikzpicture}
\]
where $H_1 \oplus H_2 \maps \R^{X_1 + X_2} \to \R^{X_1 + X_2}$ is the direct sum of the operators $H_1$ and $H_2$.  The unit for this tensor product in $\MMark_1$ is $\emptyset \rightarrow (\emptyset, 0_\emptyset) \leftarrow \emptyset$ where $0_\emptyset \maps \R^\emptyset \to \R^\emptyset$ is the zero operator.  Finally, given two morphisms in $\MMark_1$:
\[
\begin{tikzpicture}[scale=1.5]
\node (D) at (-4.75,0.5) {$S_1$};
\node (A) at (-4.75,-1) {$S_1'$};
\node (B) at (-2.25,-1) {$T_1'$};
\node (C) at (-1.25,-1) {$S_2'$};
\node (H) at (1.25,-1) {$T_2'$};
\node (E) at (-3.5,0.5) {$(X_1,H_1)$};
\node (F) at (-2.25,0.5) {$T_1$};
\node (D') at (-1.25,0.5) {$S_2$};
\node (E') at (0,.5) {$(X_2,H_2)$};
\node (F') at (1.25,0.5) {$T_2$};
\node (G) at (-3.5,-1) {$(X'_1,H'_1)$};
\node (G') at (0,-1) {$(X'_2,H'_2)$};
\path[->,font=\scriptsize,>=angle 90]
(D) edge node [above] {$i_1$}(E)
(F) edge node [above] {$o_1$}(E)
(D) edge node [left]{$f_1$}(A)
(F) edge node [right]{$g_1$}(B)
(A) edge node [above] {$i'_1$} (G)
(B) edge node [above] {$o'_1$} (G)
(E) edge node[left] {$p_1$}(G)
(D') edge node [above] {$i_2$}(E')
(F') edge node [above] {$o_2$}(E')
(D') edge node [left]{$f_2$}(C)
(F') edge node [right]{$g_2$}(H)
(C) edge node [above] {$i'_2$} (G')
(H) edge node [above] {$o'_2$} (G')
(E') edge node[right]{$p_2$}(G');
\end{tikzpicture}
\]
we define their sum to be
\[
\begin{tikzpicture}[scale=1.5]
\node (D) at (-6,0.5) {$S_1+S_2$};
\node (A) at (-6,-1) {$S_1' + S_2'$};
\node (B) at (0,-1) {$T_1' + T_2'.$};
\node (E) at (-3,.5) {$(X_1+X_2,H_1 \oplus H_2)$};
\node (F) at (0,0.5) {$T_1+T_2$};
\node (G) at (-3,-1) {$(X'_1+X'_2, H'_1\oplus H'_2)$};
\path[->,font=\scriptsize,>=angle 90]
(D) edge node [above] {$i_1 + i_2$}(E)
(F) edge node [above] {$o_1 + o_2$}(E)
(D) edge node [left]{$f_1 + f_2$}(A)
(F) edge node [right]{$g_1+g_2$}(B)
(A) edge node [above] {$i'_1 + i'_2$} (G)
(B) edge node [above] {$o'_1 + o'_2$} (G)
(E) edge node[left] {$p_1 + p_2$}(G);
\end{tikzpicture}
\]
We complete the description of $\MMark$ as a symmetric monoidal double category
in the proof of this theorem:

\begin{thm}
\label{thm:MMark_symmetric_monoidal}
The double category $\MMark$ can be given a symmetric monoidal structure with the above properties.
\end{thm}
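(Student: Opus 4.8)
The plan is to verify directly that $\MMark$ satisfies Shulman's definition of a symmetric monoidal double category \cite{Shulman}. Concretely, one must equip the category of objects $\MMark_0$ and the category of arrows $\MMark_1$ with symmetric monoidal structures in such a way that the structure functors $s,t \maps \MMark_1 \to \MMark_0$, $u \maps \MMark_0 \to \MMark_1$, and $\odot \maps \MMark_1 \times_{\MMark_0} \MMark_1 \to \MMark_1$ become symmetric monoidal functors, that the associator and unitor constraints of the pseudo double category are monoidal natural transformations, and that the resulting monoidal, braiding, and unit constraints on $\MMark$ obey Shulman's coherence axioms. The tensor product, unit, and the tensor of morphisms have all been described above the statement; what remains is the verification.

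First I would handle $\MMark_0 = \FinSet$, which is symmetric monoidal under disjoint union with unit $\emptyset$ and the standard associator, braiding, and unitors. Next I would treat $\MMark_1$. Its objects are open Markov processes and its morphisms are the pullback-square maps of Def.\ \ref{defn:coarse-graining}. One checks that $H_1 \oplus H_2$ is infinitesimal stochastic whenever $H_1, H_2$ are (immediate from the two defining inequalities), so the tensor of two open Markov processes is again one, and that the tensor of two morphisms of open Markov processes is again one. The latter needs two observations: that a coproduct of two pullback squares in $\FinSet$ is again a pullback square (coproducts in $\FinSet$ are disjoint and stable under pullback), and that $(p_1 + p_2)_*(H_1 \oplus H_2) = (H_1' \oplus H_2')(p_1 + p_2)_*$, which follows from $p_{k*}H_k = H_k' p_{k*}$ together with the compatibility of $\oplus$ with pushforward along the coproduct injections. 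The associator, braiding, and unitors on $\MMark_1$ then have as underlying data the corresponding bijections of apices, and the requirement that these be morphisms of open Markov processes reduces to the evident naturality of $\oplus$ under those bijections.

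Then I would check that the structure functors are symmetric monoidal. The functors $s$ and $t$ are \emph{strict} symmetric monoidal, since selecting a foot of a cospan manifestly commutes with disjoint union. The functor $u$ is symmetric monoidal via the canonical isomorphism $u(S + S') \cong u(S) + u(S')$, valid because $0_{S+S'} = 0_S \oplus 0_{S'}$ and the legs are identities. The essential point is that $\odot$ is symmetric monoidal: one needs a natural isomorphism
\[ (M_1 \odot N_1) + (M_2 \odot N_2) \;\cong\; (M_1 + M_2) \odot (N_1 + N_2) \]
of open Markov processes lying over the canonical isomorphism $(X_1 +_{T_1} Y_1) + (X_2 +_{T_2} Y_2) \cong (X_1 + X_2) +_{T_1 + T_2} (Y_1 + Y_2)$, which exists because pushouts commute with coproducts; one then checks that $(H_1 \odot G_1) \oplus (H_2 \odot G_2)$ is carried to $(H_1 \oplus H_2) \odot (G_1 \oplus G_2)$ under this isomorphism. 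Using the formula $H \odot G = \ell_*(H \oplus G)\ell^*$ of Eq.\ (\ref{eq:odot2}) together with the compatibility of $\oplus$ with $\ell_*$ and $\ell^*$, this is a direct computation.

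Finally I would check that the associator, left unitor, and right unitor of the pseudo double category $\MMark$ — which, as noted in the proof of the preceding theorem, are induced levelwise from those of the double category of cospans in $\FinSet$ — are monoidal natural transformations, that the monoidal associator, braiding, and unitors assembled from $\MMark_0$ and $\MMark_1$ are double-natural, and that Shulman's list of coherence diagrams commutes. Each of these reduces either to a coherence identity already holding in the symmetric monoidal double category of cospans of finite sets, treated in \cite{Courser}, or to an identity among operators built from $\oplus$, $j_*, j^*, k_*, k^*$ and their primed analogues, of the same routine character as the computation in Lemma \ref{lem:horizontal_composition}. I expect the main obstacle to be purely organizational: Shulman's definition carries a long list of coherence axioms, and the real content is recognizing that every one of them factors through (a) the already-established symmetric monoidal structure on cospans of finite sets and (b) the naturality, associativity, and symmetry of the matrix direct sum under pushforward and pullback along coproduct-injection and pushout maps.
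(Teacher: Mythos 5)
Your proposal is correct and follows essentially the same route as the paper: equip $\MMark_0$ and $\MMark_1$ with symmetric monoidal structures inherited from $(\FinSet,+)$ and the direct sum of operators, observe that $s,t$ are strict symmetric monoidal, construct the globular isomorphisms $\chi$ and $\mu$, and establish the key identity $(H_1\oplus H_2)\odot(G_1\oplus G_2) \cong (H_1\odot G_1)\oplus(H_2\odot G_2)$ via the formula $H\odot G = \ell_*(H\oplus G)\ell^*$ of Eq.\ (\ref{eq:odot2}), deferring the remaining coherence axioms to routine verification. The only difference is that you make explicit the check that the tensor of two 2-morphisms is again a 2-morphism (coproducts of pullback squares and compatibility of $\oplus$ with pushforward), which the paper leaves implicit.
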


\begin{proof}
First we complete the description of $\MMark_0$ and $\MMark_1$ as symmetric monoidal categories. The symmetric monoidal category $\MMark_0$ is just the category of finite sets with a chosen  coproduct of each pair of finite sets providing the symmetric monoidal structure.   We have described the tensor product in $\MMark_1$, which we call `addition', so now we need to introduce the associator, unitors, and braiding, and check that they make $\MMark_1$ into a symmetric monoidal category. 

Given three objects in $\MMark_1$
\[
\begin{tikzpicture}[scale=1.5];
\node (D) at (-4,0.5) {$S_1$};
\node (E) at (-3,0.5) {$(X_1,H_1)$};
\node (F) at (-2,0.5) {$T_1$};
\node (A) at (-1,0.5) {$S_2$};
\node (B) at (0,0.5) {$(X_2,H_2)$};
\node (C) at (1,0.5) {$T_2$};
\node (G) at (2,0.5) {$S_3$};
\node (H) at (3,0.5) {$(X_3,H_3)$};
\node (I) at (4,0.5) {$T_3$};
\path[->,font=\scriptsize,>=angle 90]
(D) edge node {$$}(E)
(F) edge node {$$}(E)
(A) edge node {$$} (B)
(C) edge node {$$} (B)
(G) edge node {$$} (H)
(I) edge node {$$} (H);
\end{tikzpicture}
\]
tensoring the first two and then the third results in
\[
\begin{tikzpicture}[scale=1.5];
\node (D) at (-6,0.5) {$(S_1+S_2)+S_3$};
\node (E) at (-3,0.5) {$((X_1+X_2)+X_3,(H_1 \oplus H_2) \oplus H_3)$};
\node (F) at (0,0.5) {$(T_1+T_2)+T_3$};
\path[->,font=\scriptsize,>=angle 90]
(D) edge node {$$}(E)
(F) edge node {$$}(E);
\end{tikzpicture}
\]
whereas tensoring the last two and then the first results in
\[
\begin{tikzpicture}[scale=1.5];
\node (D) at (-6,0.5) {$S_1+(S_2+S_3)$};
\node (E) at (-3,0.5) {$(X_1+(X_2+X_3),H_1 \oplus (H_2 \oplus H_3))$};
\node (F) at (0,0.5) {$T_1+(T_2+T_3)$.};
\path[->,font=\scriptsize,>=angle 90]
(D) edge node {$$}(E)
(F) edge node {$$}(E);
\end{tikzpicture}
\]
The associator for $\MMark_1$ is then given as follows:
\[
\begin{tikzpicture}[scale=1.5]
\node (D) at (-4,0.5) {$(S_1+S_2)+S_3$};
\node (E) at (-1,.5) {$((X_1+X_2)+X_3,(H_1 \oplus H_2) \oplus H_3)$};
\node (F) at (2,0.5) {$(T_1+T_2)+T_3$};
\node (G) at (-1,-1.0) {$(X_1+(X_2+X_3),H_1 \oplus (H_2 \oplus H_3))$};
\node (A) at (-4,-1.0) {$S_1+(S_2+S_3)$};
\node (B) at (2,-1.0) {$T_1+(T_2+T_3)$};
\path[->,font=\scriptsize,>=angle 90]
(D) edge node [left]{$a$}(A)
(F) edge node [right]{$a$}(B)
(D) edge node {$$}(E)
(F) edge node {$$}(E)
(A) edge node[left] {$$} (G)
(B) edge node[left] {$$} (G)
(E) edge node[left] {$a$}(G);
\end{tikzpicture}
\]
where $a$ is the associator in $(\FinSet, +)$.  If we abbreviate an object $S \rightarrow (X,H) \leftarrow T$ of $\MMark_1$ as $(X,H)$, and denote the associator for $\MMark_1$ as $\alpha$, the pentagon identity says that this diagram commutes:
\[
\begin{tikzpicture}[scale=1.5];
\node (D) at (-5.5,-0.5) {$(((X_1,H_1) \oplus (X_2,H_2)) \oplus (X_3,H_3)) \oplus (X_4,H_4)$};
\node (E) at (-3,.5) {$((X_1,H_1) \oplus (X_2,H_2)) \oplus ((X_3,H_3) \oplus (X_4,H_4))$};
\node (F) at (-0.5,-0.5) {$(X_1,H_1) \oplus ((X_2,H_2) \oplus ((X_3,H_3) \oplus (X_4,H_4)))$};
\node (A) at (-0.5,-1.5) {$(X_1,H_1) \oplus (((X_2,H_2) \oplus (X_3,H_3)) \oplus (X_4,H_4))$};
\node (B) at (-5.5,-1.5) {$((X_1,H_1) \oplus ((X_2,H_2) \oplus (X_3,H_3))) \oplus (X_4,H_4)$};
\path[->,font=\scriptsize,>=angle 90]
(D) edge node[above] {$\alpha$}(E)
(E) edge node [above]{$\alpha$}(F)
(D) edge node[below,left] {$\alpha \oplus 1_{(X_4,H_4)}$} (B)
(B) edge node [above]{$\alpha$} (A)
(A) edge node [right]{$1_{(X_1,H_1)} \oplus \alpha$} (F);

\end{tikzpicture}
\]
which is clearly true.  Recall that the monoidal unit for $\MMark_1$ is given by  $\emptyset \rightarrow (\emptyset, 0_\emptyset) \leftarrow \emptyset$. The left and right unitors for $\MMark_1$, denoted $\lambda$ and $\rho$, are given respectively by the following 2-morphisms:
\[
\begin{tikzpicture}[scale=1.5]
\node (D) at (-5,0.5) {$\emptyset + S$};
\node (A) at (-5,-1.0) {$S$};
\node (B) at (-2,-1.0) {$T$};
\node (C) at (-0.5,-1.0) {$S$};
\node (H) at (2.5,-1.0) {$T$};
\node (E) at (-3.5,.5) {$(\emptyset + X,0_\emptyset \oplus H)$};
\node (F) at (-2,0.5) {$\emptyset + T$};
\node (D') at (-0.5,0.5) {$S + \emptyset$};
\node (E') at (1,.5) {$(X + \emptyset,H \oplus 0_\emptyset)$};
\node (F') at (2.5,0.5) {$T + \emptyset$};
\node (G) at (-3.5,-1.0) {$(X,H)$};
\node (G') at (1,-1.0) {$(X,H)$};
\path[->,font=\scriptsize,>=angle 90]
(D) edge node {$$}(E)
(F) edge node {$$}(E)
(D) edge node [left]{$\ell$}(A)
(F) edge node [right]{$\ell$}(B)
(A) edge node {$$} (G)
(B) edge node {$$} (G)
(E) edge node[left] {$\ell$}(G)
(D') edge node {$$}(E')
(F') edge node {$$}(E')
(D') edge node [left]{$r$}(C)
(F') edge node [right]{$r$}(H)
(C) edge node {$$} (G')
(H) edge node {$$} (G')
(E') edge node[left]{$r$}(G');
\end{tikzpicture}
\]
where $\ell$ and $r$ are the left and right unitors in $\FinSet$.
The left and right unitors and associator for $\MMark_1$ satisfy the triangle identity:
\[
\begin{tikzpicture}[scale=1.5];
\node (D) at (-2.5,-0.5) {$((X,H) \oplus (\emptyset,0_\emptyset)) \oplus (Y,G)$};
\node (E) at (-1,0.5) {$(X,H) \oplus (Y,G)$};
\node (F) at (0.5,-0.5) {$(X,H) \oplus ((\emptyset,0_\emptyset) \oplus (Y,G))$.};
\path[->,font=\scriptsize,>=angle 90]
(D) edge node[above] {$\rho \oplus 1\;$}(E)
(F) edge node [above]{$\; 1 \oplus \lambda$}(E)
(D) edge node [above]{$\alpha$} (F);
\end{tikzpicture}
\]
The braiding in $\MMark_1$ is given as follows:
\[
\begin{tikzpicture}[scale=1.5]
\node (D) at (-6,0.5) {$S_1+S_2$};
\node (A) at (-6,-1.0) {$S_2+S_1$};
\node (B) at (-1,-1.0) {$T_2+T_1$};
\node (E) at (-3.5,.5) {$(X_1,H_1) \oplus (X_2,H_2)$};
\node (F) at (-1,0.5) {$T_1+T_2$};
\node (G) at (-3.5,-1.0) {$(X_2,H_2) \oplus (X_1,H_1)$};
\path[->,font=\scriptsize,>=angle 90]
(D) edge node {$$}(E)
(F) edge node {$$}(E)
(D) edge node [left]{$b_{S_1,S_2}$}(A)
(F) edge node [left]{$b_{T_1,T_2}$}(B)
(A) edge node {$$} (G)
(B) edge node {$$} (G)
(E) edge node[left] {$b_{X_1,X_2}$}(G);
\end{tikzpicture}
\] 
where $b$ is the braiding in $(\FinSet,+)$.   It is easy to check that the braiding in $\MMark_1$ is its own inverse and obeys the hexagon identity, making $\MMark_1$ into a symmetric monoidal category.

The source and target functors $s,t \maps \MMark_1 \to \MMark_0$ are strict symmetric monoidal functors, as required.  To make $\MMark$ into a symmetric monoidal double category we must also give it two other pieces of structure.  One, called $\chi$, says how the composition of horizontal 1-cells interacts with the tensor product in the category of arrows.  The other, called $\mu$, says how the identity-assigning functor $u$ relates the tensor product in the category of objects to the tensor product in the category of arrows. We now define these two isomorphisms.

Given horizontal 1-cells 
\[
\begin{tikzpicture}[scale=1.5];
\node (D) at (-4,-1.5) {$S_1$};
\node (E) at (-3,-1.5) {$(X_1,H_1)$};
\node (F) at (-2,-1.5) {$T_1$};
\node (A) at (-1,-1.5) {$T_1$};
\node (B) at (0,-1.5) {$(Y_1,G_1)$};
\node (C) at (1,-1.5) {$U_1$};
\node (D') at (-4,-2) {$S_2$};
\node (E') at (-3,-2) {$(X_2,H_2)$};
\node (F') at (-2,-2) {$T_2$};
\node (A') at (-1,-2) {$T_2$};
\node (B') at (0,-2) {$(Y_2,G_2)$};
\node (C') at (1,-2) {$U_2$};
\path[->,font=\scriptsize,>=angle 90]
(D) edge node [above] {$$}(E)
(A) edge node [above] {$$} (B)
(C) edge node [above] {$$} (B)
(F) edge node [above] {$$}(E)
(D') edge node [above] {$$}(E')
(A') edge node [above] {$$} (B')
(C') edge node [above] {$$} (B')
(F') edge node [above] {$$}(E');
\end{tikzpicture}
\]
the horizontal composites of the top two and the bottom two are given, respectively, by
\[
\begin{tikzpicture}[scale=1.5];
\node (D) at (-3,-0.5) {$S_1$};
\node (E) at (-1.5,-0.5) {$(X_1 +_{T_1} Y_1,H_1 \odot G_1)$};
\node (F) at (0,-0.5) {$U_1$};
\node (A) at (1,-0.5) {$S_2$};
\node (B) at (2.5,-0.5) {$(X_2 +_{T_2} Y_2,H_2 \odot G_2)$};
\node (C) at (4,-0.5) {$U_2$.};
\path[->,font=\scriptsize,>=angle 90]
(D) edge node [above] {$$}(E)
(A) edge node [above] {$$} (B)
(C) edge node [above] {$$} (B)
(F) edge node [above] {$$}(E);
\end{tikzpicture}
\]
`Adding' the left two and right two, respectively, we obtain
\[
\begin{tikzpicture}[scale=1.5];
\node (D) at (-3.5,-0.5) {$S_1+S_2$};
\node (E) at (-1.875,-0.5) {$(X_1 + X_2, H_1 \oplus H_2)$};
\node (F) at (-0.25,-0.5) {$T_1+T_2$};
\node (A) at (1,-0.5) {$T_1+T_2$};
\node (B) at (2.625,-0.5) {$(Y_1 + Y_2, G_1 \oplus G_2)$};
\node (C) at (4.25,-0.5) {$U_1+U_2$.};
\path[->,font=\scriptsize,>=angle 90]
(D) edge node [above] {$$}(E)
(A) edge node [above] {$$} (B)
(C) edge node [above] {$$} (B)
(F) edge node [above] {$$}(E);
\end{tikzpicture}
\]
Thus the sum of the horizontal composites is 
\[
\begin{tikzpicture}[scale=1.5];
\node (D) at (-7.5,0.5) {$S_1+S_2$};
\node (E) at (-4,.5) {$((X_1 +_{T_1} Y_1) + (X_2 +_{T_2} Y_2),(H_1 \odot G_1) \oplus (H_2 \odot G_2))$};
\node (F) at (-0.5,0.5) {$U_1 + U_2$};
\path[->,font=\scriptsize,>=angle 90]
(D) edge node [above] {$$}(E)
(F) edge node [above] {$$} (E);
\end{tikzpicture}
\]
while the horizontal composite of the sums is
\[
\begin{tikzpicture}[scale=1.5];
\node (D) at (-7.5,0.5) {$S_1+S_2$};
\node (E) at (-4,0.5) {$((X_1+X_2)+_{T_1 + T_2} (Y_1+Y_2),(H_1 \oplus H_2) \odot (G_1 \oplus G_2))$};
\node (F) at (-0.5,0.5) {$U_1 + U_2$.};
\path[->,font=\scriptsize,>=angle 90]
(D) edge node [above] {$$}(E)
(F) edge node [above] {$$} (E);
\end{tikzpicture}
\]
The required globular 2-isomorphism $\chi$ between these is
\[
\begin{tikzpicture}[scale=1.5]
\node (D) at (-7,0.5) {$S_1+S_2$};
\node (A) at (-7,-1.0) {$S_1+S_2$};
\node (B) at (0,-1.0) {$U_1+U_2$};
\node (E) at (-3.5,.5) {$((X_1,H_1) \odot (Y_1,G_1)) \oplus ((X_2,H_2) \odot (Y_2,G_2))$};
\node (F) at (0,0.5) {$U_1+U_2$};
\node (G) at (-3.5,-1.0) {$ ((X_1,H_1) \oplus (X_2,H_2)) \odot ((Y_1,G_1) \oplus (Y_2,G_2))$};
\path[->,font=\scriptsize,>=angle 90]
(D) edge node {$$}(E)
(F) edge node {$$}(E)
(D) edge node [left]{$1_{S_1 + S_2}$}(A)
(F) edge node [right]{$1_{U_1 + U_2}$}(B)
(A) edge node {$$} (G)
(B) edge node {$$} (G)
(E) edge node[left] {$\hat{\chi}$}(G);
\end{tikzpicture}
\] 
where $\hat{\chi}$ is the bijection 
\[ \hat{\chi} \maps (X_1 +_{T_1} Y_1) + (X_2 +_{T_2} Y_2) \to 
(X_1 + X_2) +_{T_1 + T_2} (Y_1 + Y_2)\]
obtained from taking the colimit of the diagram
\begin{center}
\begin{tikzpicture}[->,>=stealth',node distance=1.1cm, auto]
 \node(A1) {$S_{1}$};
 \node(C1) [above of=A1,right of=A1] {$X_{1}$};
 \node(B1) [below of=C1,right of=C1] {$T_{1}$};
 \node(D1) [above of=B1,right of=B1] {$Y_{1}$};
 \node(E1) [below of=D1,right of=D1] {$U_{1}$};
 \node(A2) [right of=E1] {$S_{2}$};
 \node(C2) [above of=A2,right of=A2] {$X_{2}$};
 \node(B2) [below of=C2,right of=C2] {$T_{2}$};
 \node(D2) [above of=B2,right of=B2] {$Y_{2}$};
 \node(E2) [below of=D2,right of=D2] {$U_{2}$};
 \draw[->] (A1) to node [swap]{$$} (C1);
 \draw[->] (B1) to node [swap]{$$} (C1);
 \draw[->] (B1) to node [swap]{$$} (D1);
 \draw[->] (E1) to node [swap]{$$} (D1);
 \draw[->] (A2) to node [swap]{$$} (C2);
 \draw[->] (B2) to node [swap]{$$} (C2);
 \draw[->] (B2) to node [swap]{$$} (D2);
 \draw[->] (E2) to node [swap]{$$} (D2);
\end{tikzpicture}
\end{center}
in two different ways.  We call $\chi$ `globular' because its source and target 1-morphisms
are identities.   We need to check that $\chi$ indeed defines a 2-isomorphism in 
$\MMark$.

To do this, we need to show that
\begin{equation}
\label{eq:chi}
((H_1 \oplus H_2) \odot (G_1 \oplus G_2))\, \hat{\chi}_*  =  \hat{\chi}_* \, ((H_1 \odot G_1) \oplus (H_2 \odot G_2)) .
\end{equation}
To simplify notation, let $K =  (X_1 +_{T_1} Y_1) + (X_2 +_{T_2} Y_2)$ and 
$K'=(X_1 + X_2) +_{T_1 + T_2} (Y_1 + Y_2)$ so that $\hat{\chi} \colon K \stackrel{\sim}{\to} K'$.
Let 
\[   q \maps X_1 + X_2 + Y_1 + Y_2 \to K , \quad 
      q' \maps X_1 + X_2 + Y_1 + Y_2 \to K'  \]
be the canonical maps coming from the definitions of $K$ and $K'$ as colimits, and note that
\[  q' = \hat{\chi} q \]
by the universal property of the colimit.   A calculation using Eq.\ (\ref{eq:odot2}) implies that
\[    (H_1 \odot G_1) \oplus (H_2 \odot G_2) =
q_* \, ((H_1 \oplus H_2) \oplus (G_1 \oplus G_2)) \, q^* \]
and similarly 
\[ (H_1 \oplus H_2) \odot (G_1 \oplus G_2) 
= q'_* ((H_1 \oplus H_2) \oplus (G_1 \oplus G_2)) q'^*. \]
Together these facts give
\[  \begin{array}{ccl}  (H_1 \oplus H_2) \odot (G_1 \oplus G_2) 
&=& \hat{\chi}_* q_* \, ((H_1 \oplus H_2) \oplus (G_1 \oplus G_2)) \, q^* \hat{\chi}^* \\
&=&  \hat{\chi}_* \, ((H_1 \odot G_1) \oplus (H_2 \odot G_2)\, \hat{\chi}^*  .
\end{array}  \]
and since $\hat{\chi}$ is a bijection, $\hat{\chi}^*$ is the inverse of $\hat{\chi}_*$, 
so Eq.\ (\ref{eq:chi}) follows.

For the other globular 2-isomorphism, if $S$ and $T$ are finite sets, then $u(S+T)$ is given by
\[
\begin{tikzpicture}[scale=1.5]
\node (D) at (-5,-0.5) {$S+T$};
\node (E) at (-3,-0.5) {$(S+T,0_{S+T})$};
\node (F) at (-1,-0.5) {$S+T$};
\path[->,font=\scriptsize,>=angle 90]
(D) edge node[above] {$1_{S+T}$}(E)
(F) edge node[above] {$1_{S+T}$}(E);
\end{tikzpicture}
\]
while $u(S) \oplus u(T)$ is given by
\[
\begin{tikzpicture}[scale=1.5]
\node (D) at (-5,-0.5) {$S+T$};
\node (E) at (-3,-0.5) {$(S+T,0_S \oplus 0_T)$};
\node (F) at (-1,-0.5) {$S+T$};
\path[->,font=\scriptsize,>=angle 90]
(D) edge node[above] {$1_S + 1_T$}(E)
(F) edge node[above] {$1_S + 1_T$}(E);
\end{tikzpicture}
\]
so there is a globular 2-isomorphism $\mu$ between these, namely the identity 2-morphism.
All the commutative diagrams in the definition of symmetric monoidal double category \cite{Shulman} can be checked in a straightforward way.  
\end{proof}

\section{Black-boxing for open Markov processes}
\label{sec:black-boxing}

The general idea of `black-boxing' is to take a system and forget everything except the relation
between its inputs and outputs, as if we had placed it in a black box and were unable to see its inner workings.  Previous work of Pollard and the first author \cite{BP} constructed a black-boxing functor $\blacksquare \maps \Dynam \to \SemiAlgRel$ where $\Dynam$ is a category of finite sets and `open dynamical systems' and $\SemiAlgRel$ is a category of finite-dimensional real vector spaces and relations defined by polynomials and inequalities.   When we black-box such an open dynamical system, we obtain the relation between inputs and outputs that holds in steady state. 

A special case of an open dynamical system is an open Markov process as defined in this paper.  Thus, we could restrict the black-boxing functor $\blacksquare \maps \Dynam \to \SemiAlgRel$ to a category $\mathsf{Mark}$ with finite sets as objects and open Markov processes as morphisms.  Since the steady state behavior of a Markov process is \emph{linear}, we would get a functor $\blacksquare \maps \Mark \to \LinRel$ where $\LinRel$ is the category of finite-dimensional real vector spaces and \emph{linear} relations \cite{BE}.   However, we will go further and define black-boxing on the double category $\MMark$.   This will exhibit the relation between black-boxing and morphisms between open Markov processes.

To do this, we promote $\LinRel$ to a double category $\LLinRel$ with:
\begin{enumerate}
\item finite-dimensional real vector spaces $U,V,W,\dots$ as objects,
\item linear maps $f \maps V \to W$ as vertical 1-morphisms from $V$ to $W$,
\item linear relations $R \subseteq V \oplus W$  as horizontal 1-cells from $V$ to $W$,
\item squares 
\[
\begin{tikzpicture}[scale=1.5]
\node (D) at (-4,0.5) {$V_1$};
\node (E) at (-2,0.5) {$V_2$};
\node (F) at (-4,-1) {$W_1$};
\node (A) at (-2,-1) {$W_2$};
\node (B) at (-3,-0.25) {};
\path[->,font=\scriptsize,>=angle 90]
(D) edge node [above]{$R \subseteq V_1 \oplus V_2$}(E)
(E) edge node [right]{$g$}(A)
(D) edge node [left]{$f$}(F)
(F) edge node [above]{$S \subseteq W_1 \oplus W_2$} (A);
\end{tikzpicture}
\]
obeying $(f \oplus g)R \subseteq S$ as 2-morphisms. 
\end{enumerate}
The last item deserves some explanation.  A preorder is a category such that for any 
pair of objects $x,y$ there exists at most one morphism $\alpha \maps x \to y$.  When such
a morphism exists we usually write $x \le y$.  Similarly there is a kind of double category for which  given any `frame'
\[
\begin{tikzpicture}[scale=1]
\node (D) at (-4,0.5) {$A$};
\node (E) at (-2,0.5) {$B$};
\node (F) at (-4,-1) {$C$};
\node (A) at (-2,-1) {$D$};
\node (B) at (-3,-0.25) {};
\path[->,font=\scriptsize,>=angle 90]
(D) edge node [above]{$M$}(E)
(E) edge node [right]{$g$}(A)
(D) edge node [left]{$f$}(F)
(F) edge node [above]{$N$} (A);
\end{tikzpicture}
\]
there exists at most one 2-morphism 
\[
\begin{tikzpicture}[scale=1]
\node (D) at (-4,0.5) {$A$};
\node (E) at (-2,0.5) {$B$};
\node (F) at (-4,-1) {$C$};
\node (A) at (-2,-1) {$D$};
\node (B) at (-3,-0.25) {$\Downarrow \alpha$};
\path[->,font=\scriptsize,>=angle 90]
(D) edge node [above]{$M$}(E)
(E) edge node [right]{$g$}(A)
(D) edge node [left]{$f$}(F)
(F) edge node [above]{$N$} (A);
\end{tikzpicture}
\]
filling this frame. For lack of a better term let us call this a \define{degenerate} double category.  
Item (iv) implies that $\LLinRel$ will be degenerate in this sense.

In $\LLinRel$, composition of vertical 1-morphisms is the usual composition of linear maps, while  
composition of horizontal 1-cells is the usual composition of linear relations.  Since composition of
linear relations obeys the associative and unit laws strictly, $\LLinRel$ will be a \emph{strict} double category.   Since $\LLinRel$ is degenerate, there is at most one way to define the vertical composite of 2-morphisms 
\[
\begin{tikzpicture}[scale=1.5]
\node (D) at (-4,0) {$U_1$};
\node (E) at (-2,0) {$U_2$};
\node (F) at (-4,-1.5) {$V_1$};
\node (A) at (-2,-1.5) {$V_2$};
\node (B) at (-3,-0.75) {$\Downarrow \alpha$};
\node (C) at (-4,-3) {$W_1$};
\node (G) at (-2,-3) {$W_2$};
\node (H) at (-3,-2.25) {$\Downarrow \beta$};
\node (I) at (-1,-1.5) {$=$};
\node (J) at (0,-0.5) {$U_1$};
\node (K) at (2,-0.5) {$U_2$};
\node (L) at (0,-2.5) {$W_1$};
\node (M) at (2,-2.5) {$W_2$};
\node (O) at (1,-1.5) {$\Downarrow \beta \alpha$};
\path[->,font=\scriptsize,>=angle 90]
(D) edge node [above]{$R \subseteq U_1 \oplus U_2$}(E)
(E) edge node [right]{$g$}(A)
(D) edge node [left]{$f$}(F)
(F) edge node [left] {$f'$}(C)
(C) edge node [above] {$T \subseteq W_1 \oplus W_2$} (G)
(A) edge node [right] {$g'$} (G)
(F) edge node [above]{$S \subseteq V_1 \oplus V_2$} (A)
(J) edge node [above] {$R \subseteq U_1 \oplus U_2$} (K)
(K) edge node [right] {$g' g$} (M)
(J) edge node [left] {$f' f$} (L)
(L) edge node [above] {$T \subseteq W_1 \oplus W_2$} (M);
\end{tikzpicture}
\]
so we need merely check that a 2-morphism $\beta\alpha$ filling the frame at right exists.  This amounts to noting that
\[     (f \oplus g)R \subseteq S, \; (f' \oplus g')S \subseteq T \; \implies \;
(f' \oplus g')(f \oplus g)R \subseteq T .\]
Similarly, there is at most one way to define the horizontal composite of 2-morphisms
\[
\begin{tikzpicture}[scale=1.5]
\node (D) at (-3.5,0) {$V_1$};
\node (E) at (-2,0) {$V_2$};
\node (F) at (-3.5,-1.5) {$W_1$};
\node (A) at (-2,-1.5) {$W_2$};
\node (B) at (-2.75,-0.75) {$\Downarrow \alpha$};
\node (J) at (-0.5,0) {$V_3$};
\node (L) at (-0.5,-1.5) {$W_3$};
\node (O) at (-1.25,-0.75) {$\Downarrow \alpha'$};
\node (I) at (0.25,-0.75) {$=$};
\node (D') at (1,0) {$V_1$};
\node (E') at (2.5,0) {$V_3$};
\node (F') at (1,-1.5) {$W_1$};
\node (A') at (2.5,-1.5) {$W_3$};
\node (B') at (1.75,-0.75) {$\Downarrow \alpha' \circ \alpha$};
\path[->,font=\scriptsize,>=angle 90]
(D) edge node [above]{$R \subseteq V_1 \oplus V_2$}(E)
(E) edge node [right]{$g$}(A)
(D) edge node [left]{$f$}(F)
(F) edge node [above]{$S \subseteq W_1 \oplus W_2$} (A)
(E) edge node [above]{$R' \subseteq V_2 \oplus V_3$} (J)
(J) edge node [right]{$h$} (L)
(A) edge node [above]{$S' \subseteq W_2 \oplus W_3$} (L)
(D') edge node [above]{$R'R \subseteq V_1 \oplus V_3$}(E')
(D') edge node [left]{$f$}(F')
(F') edge node [above]{$S'S \subseteq W_1 \oplus W_3$} (A')
(E') edge node [right]{$h$} (A');
\end{tikzpicture}
\]
so we need merely check that a filler $\alpha' \circ \alpha$ exists, which amounts to noting that
\[   (f \oplus g)R \subseteq S , \; (g \oplus h)R' \subseteq S' \; \implies \;
(f \oplus h)(R'R) \subseteq S'S . \]

\begin{thm}
\label{thm:LLinRel}
There exists a strict double category $\LLinRel$ with the above properties.
\end{thm}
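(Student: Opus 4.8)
The plan is to present $\LLinRel$ in the standard form of a strict double category: a `category of objects' $\LLinRel_0$, a `category of arrows' $\LLinRel_1$, source and target functors $s,t \maps \LLinRel_1 \to \LLinRel_0$, an identity-assigning functor $u \maps \LLinRel_0 \to \LLinRel_1$, and a horizontal composition functor $\odot \maps \LLinRel_1 \times_{\LLinRel_0} \LLinRel_1 \to \LLinRel_1$, subject to $su = 1 = tu$, compatibility of $s,t$ with $\odot$, and the strict associativity and unit laws for $\odot$. I take $\LLinRel_0$ to be the category of finite-dimensional real vector spaces and linear maps. I take $\LLinRel_1$ to have as objects triples $(V,R,W)$ with $R \subseteq V \oplus W$ a linear relation, and as morphisms from $(V_1,R,V_2)$ to $(W_1,S,W_2)$ the pairs $(f \maps V_1 \to W_1,\, g \maps V_2 \to W_2)$ satisfying $(f \oplus g)R \subseteq S$, composed componentwise by composition of linear maps. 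The implication $(f \oplus g)R \subseteq S,\ (f' \oplus g')S \subseteq T \Rightarrow (f' \oplus g')(f \oplus g)R \subseteq T$ recorded before the theorem shows composition in $\LLinRel_1$ is well defined, and it inherits associativity and unitality from linear maps, so $\LLinRel_1$ is a category; the functors $s,t$ read off the feet $V,W$ and the components $f,g$, and are plainly functorial.

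For the identity-assigning functor I would set $u(V) = (V,\Delta_V,V)$, where $\Delta_V = \{(v,v) : v \in V\} \subseteq V \oplus V$, and $u(f) = (f,f)$; this is a legitimate square since $(f \oplus f)\Delta_V = \{(fv,fv) : v \in V\} \subseteq \Delta_W$, and $u$ is functorial with $su = 1 = tu$. For horizontal composition I set $(U,R,V) \odot (V,S,W) = (U,SR,W)$, where $SR$ is the usual composite of relations; one checks in the standard way that $SR$ is again a linear subspace, so this is a linear relation. On a composable pair of squares $\alpha = (f,h) \maps (U,R,V) \to (U',R',V')$ and $\beta = (h,g) \maps (V,S,W) \to (V',S',W')$ I set $\beta \odot \alpha = (f,g)$, and the implication $(f \oplus h)R \subseteq R',\ (h \oplus g)S \subseteq S' \Rightarrow (f \oplus g)(SR) \subseteq S'R'$ recorded before the theorem shows this is a legitimate square. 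Compatibility of $s$ and $t$ with $\odot$ is immediate from the definitions.

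The remaining points are formal. Functoriality of $\odot$, i.e.\ the interchange law, holds because both sides of $(\beta' \circ \beta) \odot (\alpha' \circ \alpha) = (\beta' \odot \alpha') \circ (\beta \odot \alpha)$ evaluate componentwise to $(f'f,\, g'g)$, and $\odot$ clearly preserves identity squares; alternatively, since $\LLinRel$ is degenerate, any two squares with the same frame coincide, so interchange is automatic. Strict associativity of $\odot$ on objects is the classical associativity of composition of relations, $(TS)R = T(SR)$, and on morphisms $\odot$ is componentwise composition of linear maps, hence strictly associative; strict unitality on objects follows from $\Delta_V R = R = R\, \Delta_U$ for $R \subseteq U \oplus V$, and on morphisms from the componentwise identity laws for linear maps. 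Because horizontal composition is strictly associative and unital, the associator and unitors are identities and no further coherence conditions remain, so $\LLinRel$ is a strict double category.

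I do not anticipate a serious obstacle: the only inputs with mathematical content are the well-definedness, associativity, and unitality of composition of linear relations --- standard, cf.\ the discussion of linear relations in the paper --- and the closure of the square-filling condition $(f \oplus g)R \subseteq S$ under the two kinds of composition, which was already recorded in the material preceding the theorem. The rest is routine bookkeeping, packaging these observations into the functors $s,t,u,\odot$ and noting that degeneracy of $\LLinRel$ collapses every coherence diagram.
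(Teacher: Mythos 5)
Your proposal is correct and follows essentially the same route as the paper: you build $\LLinRel_0$ and $\LLinRel_1$ identically, take $u(V)$ to be the diagonal (the paper's ``identity map $1_V$'' viewed as a relation), define $\odot$ by composition of relations, and invoke the two implications recorded before the theorem together with degeneracy to dispose of all equations between squares. You supply somewhat more detail than the paper (explicit unit laws $\Delta_V R = R = R\,\Delta_U$ and the componentwise interchange computation), but nothing in substance differs.
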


\begin{proof}
The category of objects $\LLinRel_0$ has finite-dimensional real vector spaces as objects and linear maps as morphisms.  The category of arrows $\LLinRel_1$ has linear relations as objects 
and squares 
\[
\begin{tikzpicture}[scale=1.5]
\node (D) at (-4,0.5) {$V_1$};
\node (E) at (-2,0.5) {$V_2$};
\node (F) at (-4,-1) {$W_1$};
\node (A) at (-2,-1) {$W_2$};
\node (B) at (-3,-0.25) {};
\path[->,font=\scriptsize,>=angle 90]
(D) edge node [above]{$R \subseteq V_1 \oplus V_2$}(E)
(E) edge node [right]{$g$}(A)
(D) edge node [left]{$f$}(F)
(F) edge node [above]{$S \subseteq W_1 \oplus W_2$} (A);
\end{tikzpicture}
\]
with $(f \oplus g)R \subseteq S$ as morphisms.  The source and target functors $s,t \maps \LLinRel_1 \to \LLinRel_0$ are clear. The identity-assigning functor $u  \maps \LLinRel_0 \to \LLinRel_1$ sends a finite-dimensional real vector space $V$ to the identity map $1_V$ and a linear map $f \maps V \to W$ to the unique 2-morphism
\[
\begin{tikzpicture}[scale=1.5]
\node (D) at (-3.75,0.5) {$V$};
\node (E) at (-2.25,0.5) {$V$};
\node (F) at (-3.75,-1) {$W$};
\node (A) at (-2.25,-1) {$W$.};
\node (B') at (-3,-0.25) {};
\path[->,font=\scriptsize,>=angle 90]
(D) edge node [above]{$1_V$}(E)
(E) edge node [right]{$f$}(A)
(D) edge node [left]{$f$}(F)
(F) edge node [above]{$1_W$} (A);
\end{tikzpicture}
\]
The composition functor $\odot \maps \LLinRel_1 \times_{\LLinRel_0} \LLinRel_1 \to \LLinRel_1$
acts on objects by the usual composition of linear relations, and it acts on 2-morphisms by horizontal composition as described above.  These functors can be shown to obey all the axioms of a double category.   In particular, because $\LLinRel$ is degenerate, all the required equations between 2-morphisms, such as the interchange law, hold automatically.
\end{proof}

Next we make $\LLinRel$ into a symmetric monoidal double category.  To do this, we first give $\LLinRel_0$ the structure of a symmetric monoidal category.  We do this using a specific choice of direct sum for each pair of finite-dimensional real vector spaces as the tensor product, and a specific 0-dimensional vector space as the unit object.   Then we give $\LLinRel_1$ a symmetric monoidal structure as follows.  Given linear relations $R_1 \subseteq V_1 \oplus W_1$ and $R_2 \subseteq V_2 \oplus W_2$, we define their direct sum by
\[  R_1 \oplus R_2 = \{(v_1,v_2,w_1,w_2) : \; (v_1,w_1) \in R_1, (v_2,w_2) \in R_2 \} \subseteq V_1 \oplus V_2 \oplus W_1 \oplus W_2. \]
Given two 2-morphisms in $\LLinRel_1$:
\[
\begin{tikzpicture}[scale=1.5]
\node (D) at (-4,0.5) {$V_1$};
\node (E) at (-2,0.5) {$V_2$};
\node (F) at (-4,-1) {$W_1$};
\node (A) at (-2,-1) {$W_2$};
\node (D') at (-0.5,0.5) {$V'_1$};
\node (E') at (1.5,0.5) {$V'_2$};
\node (F') at (-0.5,-1) {$W'_1$};
\node (A') at (1.5,-1) {$W'_2$};
\node (B') at (0.5,-0.25) {$\Downarrow \alpha'$};
\node (B) at (-3,-0.25) {$\Downarrow \alpha$};
\path[->,font=\scriptsize,>=angle 90]
(D) edge node [above]{$R \subseteq V_1 \oplus V_2$}(E)
(E) edge node [right]{$g$}(A)
(D) edge node [left]{$f$}(F)
(F) edge node [above]{$S \subseteq W_1 \oplus W_2$} (A)
(D') edge node [above]{$R' \subseteq V'_1 \oplus V'_2$}(E')
(E') edge node [right]{$g'$}(A')
(D') edge node [left]{$f'$}(F')
(F') edge node [above]{$S' \subseteq W'_1 \oplus W'_2$} (A');
\end{tikzpicture}
\]
there is at most one way to define their direct sum
\[
\begin{tikzpicture}[scale=1.5]
\node (D) at (-3.5,0.5) {$V_1 \oplus V'_1$};
\node (E) at (-.5,0.5) {$V_2 \oplus V'_2$};
\node (F) at (-3.5,-1) {$W_1 \oplus W'_1$};
\node (A) at (-.5,-1) {$W_2 \oplus W'_2$};
\node (B') at (-2,-0.25) {$\Downarrow \alpha \oplus \alpha'$};
\path[->,font=\scriptsize,>=angle 90]
(D) edge node [above]{$R \oplus R' \subseteq V_1 \oplus V'_1 \oplus V_2 \oplus V'_2$}(E)
(E) edge node [right]{$g \oplus g'$}(A)
(D) edge node [left]{$f \oplus f'$}(F)
(F) edge node [above]{$S \oplus S' \subseteq W_1 \oplus W'_1 \oplus W_2 \oplus W'_2$} (A);
\end{tikzpicture}
\]
because $\LLinRel$ is degenerate.   To show that $\alpha \oplus \alpha'$ exists, we need
merely note that
\[    (f \oplus g) R \subseteq S, \; (f' \oplus g')R' \subseteq S' \; \implies \; 
(f \oplus f' \oplus g \oplus g') (R \oplus R') \subseteq S \oplus S'.  \]

\begin{thm}
\label{thm:LLinRel2}
The double category $\LLinRel$ can be given the structure of a symmetric monoidal double category with the above properties.
\end{thm}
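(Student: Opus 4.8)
The plan is to complete the symmetric monoidal structures on the categories $\LLinRel_0$ and $\LLinRel_1$, to check that the source and target functors $s,t \maps \LLinRel_1 \to \LLinRel_0$ are strict symmetric monoidal functors, and then to produce the two globular comparison 2-isomorphisms $\mu$ and $\chi$ required in Shulman's definition of a symmetric monoidal double category \cite{Shulman}. The whole argument is greatly streamlined by the observation, made when $\LLinRel$ was introduced, that $\LLinRel$ is \emph{degenerate}: any two 2-morphisms with the same frame coincide. As a consequence, every axiom of a symmetric monoidal double category that asserts the commutativity of a diagram of 2-morphisms --- the pentagon, triangle and hexagon identities, naturality of $\mu$ and $\chi$, and the coherence laws governing the interaction of $\chi$ with horizontal composition --- holds automatically once the 2-morphisms appearing in it are known to exist; and a globular 2-morphism is invertible precisely when the reversed inclusion of linear relations also holds. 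So the real content is a short list of existence checks.

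First I would note that $\LLinRel_0$, with its chosen direct sums and chosen $0$-dimensional unit, is nothing but the usual symmetric monoidal category of finite-dimensional real vector spaces. For $\LLinRel_1$ --- whose objects are linear relations and whose morphisms are the squares of Thm.\ \ref{thm:LLinRel} --- the tensor product of objects is the direct sum $R_1 \oplus R_2$ defined above, with unit the zero relation $\{0\} \subseteq 0 \oplus 0$, and the tensor product of morphisms is the $\alpha \oplus \alpha'$ whose existence was already noted. The associator, unitors and braiding of $\LLinRel_1$ I would take to be the 2-morphisms whose vertical-1-morphism legs are the associator, unitors and braiding of the underlying vector spaces in $\LLinRel_0$. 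For instance, given linear relations $R_i \subseteq V_i \oplus W_i$ for $i = 1,2,3$, the vector-space associator restricts to a bijection from $(R_1 \oplus R_2) \oplus R_3$ onto $R_1 \oplus (R_2 \oplus R_3)$: this supplies the inclusion required for the associator to be a legitimate 2-morphism of $\LLinRel$, and, being an equality, it makes that 2-morphism invertible. The unitors and braiding are dealt with in exactly the same way. With these in place, all the symmetric-monoidal-category axioms for $\LLinRel_1$ are automatic by degeneracy, and it is immediate from the definitions that $s$ and $t$ carry direct sums to direct sums and units to units on the nose, hence are strict symmetric monoidal.

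It remains to construct $\mu$ and $\chi$. The comparison $\mu$ is trivial: for finite-dimensional real vector spaces $V$ and $W$, both $u(V \oplus W)$ and $u(V) \oplus u(W)$ are the identity relation $1_{V \oplus W} \subseteq (V \oplus W) \oplus (V \oplus W)$, so $\mu$ is the identity 2-morphism. For $\chi$, given composable linear relations $R_1 \subseteq A_1 \oplus B_1$, $R_1' \subseteq B_1 \oplus C_1$, $R_2 \subseteq A_2 \oplus B_2$, $R_2' \subseteq B_2 \oplus C_2$, I would unwind the definition of composition of linear relations, $R' \odot R = \{(a,c) : \exists b,\ (a,b) \in R,\ (b,c) \in R'\}$, to see that, up to the canonical isomorphism $\hat\chi$ rearranging the direct summands, the composite $(R_1' \oplus R_2') \odot (R_1 \oplus R_2)$ equals the direct sum $(R_1' \odot R_1) \oplus (R_2' \odot R_2)$. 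Exactly as with Eq.\ (\ref{eq:chi}) in the proof of Thm.\ \ref{thm:MMark_symmetric_monoidal}, this shows that the globular 2-morphism $\chi$ with leg $\hat\chi$ exists, and since $\hat\chi$ is a bijection the reversed inclusion holds as well, so $\chi$ is a 2-isomorphism. I expect this verification --- that composition of linear relations commutes with direct sum up to the canonical shuffle of the summands --- to be the one step calling for a genuine, if entirely routine, computation; everything else is either a direct appeal to the symmetric monoidal structure already present on $\LLinRel_0$ or a free consequence of degeneracy. Once $\mu$ and $\chi$ are in hand, all the remaining diagrams in the definition of a symmetric monoidal double category commute automatically, completing the proof.
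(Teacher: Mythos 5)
Your proposal is correct and follows essentially the same route as the paper: exhibit the symmetric monoidal structures on $\LLinRel_0$ and $\LLinRel_1$, take $\mu$ and $\chi$ to be the evident comparison 2-isomorphisms, and let degeneracy absorb every coherence axiom. The only cosmetic difference is that, with the paper's definition of $R_1 \oplus R_2$ (which already interleaves the summands inside $V_1 \oplus V_2 \oplus W_1 \oplus W_2$), your shuffle $\hat\chi$ is the identity, so the paper can take $\chi$ to be literally the identity 2-morphism.
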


\begin{proof}
We have described $\LLinRel_0$ and $\LLinRel_1$ as symmetric monoidal categories.
The source and target functors $s,t \maps \LLinRel_1 \to \LLinRel_0$ are strict symmetric
monoidal functors.  The required globular 2-isomorphisms $\chi$ and $\mu$ are defined as
follows.   Given four horizontal 1-cells 
\[   R_1 \subseteq U_1 \oplus V_1, \quad R_2 \subseteq V_1 \oplus W_1, \]
\[ S_1 \subseteq U_2 \oplus V_2, \quad S_2 \subseteq V_2 \oplus W_2, \]
the globular 2-isomorphism $\chi \maps (R_2 \oplus S_2)(R_1 \oplus S_1) \Rightarrow (R_2 R_1) \oplus (S_2 S_1)$ is the identity 2-morphism
\[
\begin{tikzpicture}[scale=1.5]
\node (D) at (-4,0.5) {$U_1 \oplus U_2$};
\node (E) at (-1,0.5) {$W_1 \oplus W_2$};
\node (F) at (-4,-1) {$U_1 \oplus U_2$};
\node (A) at (-1,-1) {$W_1 \oplus W_2$.};
\node (B') at (-2.5,-0.25) {$$};
\path[->,font=\scriptsize,>=angle 90]
(D) edge node [above]{$(R_2 \oplus S_2)(R_1 \oplus S_1)$}(E)
(E) edge node [right]{$1$}(A)
(D) edge node [left]{$1$}(F)
(F) edge node [above]{$(R_2 R_1) \oplus (S_2 S_1)$} (A);
\end{tikzpicture}
\]
The globular 2-isomorphism $\mu \maps u(V \oplus W) \Rightarrow u(V) \oplus u(W)$ is
the identity 2-morphism
\[
\begin{tikzpicture}[scale=1.5]
\node (D) at (-4,0.5) {$V \oplus W$};
\node (E) at (-2,0.5) {$V \oplus W$};
\node (F) at (-4,-1) {$V \oplus W$};
\node (A) at (-2,-1) {$V \oplus W$.};
\node (B') at (-3,-0.25) {$$};
\path[->,font=\scriptsize,>=angle 90]
(D) edge node [above]{$1_{V \oplus W} $}(E)
(E) edge node [right]{$1$}(A)
(D) edge node [left]{$1$}(F)
(F) edge node [above]{$1_V \oplus 1_W$} (A);
\end{tikzpicture}
\]
All the commutative diagrams in the definition of symmetric monoidal double category \cite{Shulman} can be checked straightforwardly.   In particular, all diagrams of
2-morphisms commute automatically because $\LLinRel$ is degenerate.  \end{proof}

Theorems \ref{thm:LLinRel} and \ref{thm:LLinRel2} could be proved more generally,
replacing linear relations with relations in an arbitrary regular category.  However, here we need these results only to set the stage for defining the symmetric monoidal double functor $\blacksquare \maps \MMark \to \LLinRel$.  We proceed as follows:
\begin{enumerate}
\item On objects: for a finite set $S$, we define $\blacksquare(S)$ to be the vector space $\R^S \oplus \R^S$.
\item On horizontal 1-cells: for an open Markov process $S \stackrel{i}{\rightarrow} (X,H) \stackrel{o}{\leftarrow} T$, we define its black-boxing as in Def.\ \ref{defn:black-boxing}:
\[ \blacksquare(S \stackrel{i}{\rightarrow} (X,H) \stackrel{o}{\leftarrow} T) = \]
\[ \{ (i^*(v),I,o^*(v),O) : \; v \in \R^X,  I \in \R^S, O \in \R^T \, 
\textrm{ and } \, H(v) + i_*(I) - o_*(O) = 0\}. \]
\item On vertical 1-morphisms: for a map $f \maps S\to S'$, we define $\blacksquare(f) \maps \R^S \oplus \R^S \to \R^{S'} \oplus \R^{S'}$ to be the linear map $f_* \oplus f_*$.
\end{enumerate}

What remains to be done is define how $\blacksquare$ acts on 2-morphisms of $\MMark$.   This describes the relation between steady state input and output concentrations and flows of a coarse-grained open Markov process in terms of the corresponding relation for the original process:

\begin{lem}
\label{lem:black-boxing_2-morphisms}
Given a 2-morphism 
\[
\begin{tikzpicture}[scale=1.5]
\node (D) at (-4,0.5) {$S$};
\node (E) at (-3,0.5) {$(X,H)$};
\node (F) at (-2,0.5) {$T$};
\node (G) at (-3,-1) {$(X',H')$};
\node (A) at (-4,-1) {$S'$};
\node (B) at (-2,-1) {$T'$,};
\path[->,font=\scriptsize,>=angle 90]
(D) edge node [left]{$f$}(A)
(F) edge node [right]{$g$}(B)
(D) edge node [above] {$i$}(E)
(F) edge node [above] {$o$}(E)
(A) edge node[left] [above] {$i'$} (G)
(B) edge node[left] [above] {$o'$} (G)
(E) edge node[left] {$p$}(G);
\end{tikzpicture}
\]
in $\MMark$, there exists a (unique) 2-morphism
\[
\begin{tikzpicture}[scale=1.5]
\node (D) at (-4.5,0.5) {$\blacksquare(S)$};
\node (E) at (-1.5,0.5) {$\blacksquare(T)$};
\node (F) at (-4.5,-1) {$\blacksquare(S')$};
\node (A) at (-1.5,-1) {$\blacksquare(T')$};
\path[->,font=\scriptsize,>=angle 90]
(D) edge node [above]{$\blacksquare(S \stackrel{i}{\rightarrow} (X,H) \stackrel{o}{\leftarrow} T)$}(E)
(E) edge node [right]{$\blacksquare(g)$}(A)
(D) edge node [left]{$\blacksquare(f)$}(F)
(F) edge node [above]{$ \blacksquare(S' \stackrel{i'}{\rightarrow} (X',H') \stackrel{o'}{\leftarrow} T')$} (A);
\end{tikzpicture}
\]
\vskip -0.5em \noindent 
in $\LLinRel$.
\end{lem}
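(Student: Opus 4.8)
The plan is to exploit the fact that $\LLinRel$ is a \emph{degenerate} double category: there is at most one 2-morphism filling any frame, so uniqueness is automatic and only existence must be checked. By item (iv) in the construction of $\LLinRel$, existence amounts to the single inclusion
\[
\big(\blacksquare(f) \oplus \blacksquare(g)\big)\,\blacksquare\big(S \stackrel{i}{\rightarrow} (X,H) \stackrel{o}{\leftarrow} T\big) \;\subseteq\; \blacksquare\big(S' \stackrel{i'}{\rightarrow} (X',H') \stackrel{o'}{\leftarrow} T'\big),
\]
where $\blacksquare(f) = f_* \oplus f_*$ and $\blacksquare(g) = g_* \oplus g_*$ act on $\R^S \oplus \R^S$ and $\R^T \oplus \R^T$ respectively, the first summand recording input (resp.\ output) probabilities and the second recording inflows (resp.\ outflows). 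So I would take an arbitrary element of the source black-box, which by Def.\ \ref{defn:black-boxing} has the form $(i^*(v), I, o^*(v), O)$ for some $v \in \R^X$ with $Hv + i_*(I) - o_*(O) = 0$; its image under $\blacksquare(f) \oplus \blacksquare(g)$ is $\big(f_*(i^*(v)),\, f_*(I),\, g_*(o^*(v)),\, g_*(O)\big)$, and the goal is to realize this 4-tuple as the boundary data of a steady state of the coarse-grained process.

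The candidate witness is $v' := p_*(v) \in \R^{X'}$, with inflows $f_*(I)$ and outflows $g_*(O)$. Three things then need verifying. First, the input probabilities are correct: $i'^*(p_*(v)) = f_*(i^*(v))$. This is exactly the Beck--Chevalley identity of Lemma \ref{lem:beck-chevalley} applied to the left-hand pullback square of the given 2-morphism. Second, symmetrically, $o'^*(p_*(v)) = g_*(o^*(v))$ from the right-hand pullback square. Third, $v'$ is a steady state: using the hypothesis $H' p_* = p_* H$ together with $i'_* f_* = p_* i_*$ and $o'_* g_* = p_* o_*$ (which follow from plain commutativity of the two squares), one computes
\[
H'(p_*(v)) + i'_*(f_*(I)) - o'_*(g_*(O)) = p_*\big(Hv + i_*(I) - o_*(O)\big) = p_*(0) = 0.
\]
Hence the image 4-tuple lies in the target black-box, which establishes the inclusion, and with it the (unique) 2-morphism.

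The only genuine subtlety is the first two points: for a merely commutative square of finite sets the identity $i'^* p_* = f_* i^*$ fails in general, and it is precisely the pullback condition built into Def.\ \ref{defn:coarse-graining} that, through Lemma \ref{lem:beck-chevalley}, forces the pushed-forward steady state $p_*(v)$ to restrict along $i'$ and $o'$ to the pushed-forward boundary data. Everything else is routine bookkeeping with pushforwards, pullbacks and the intertwining relation $H' p_* = p_* H$; I do not anticipate any further obstacle.
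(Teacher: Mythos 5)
Your proposal is correct and follows essentially the same route as the paper's proof: uniqueness from degeneracy of $\LLinRel$, the Beck--Chevalley identities of Lemma \ref{lem:beck-chevalley} applied to the two pullback squares to match the boundary data of $p_*(v)$, and the combination of plain commutativity with $H' p_* = p_* H$ to verify that $p_*(v)$ is a steady state with inflows $f_*(I)$ and outflows $g_*(O)$. Nothing is missing.
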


\begin{proof}  
Since $\LLinRel$ is degenerate, if there exists a 2-morphism of the claimed kind it is
automatically unique.  To prove that such a 2-morphism exists, it suffices to prove
\[    (i^*(v),I,o^*(v),O) \in V \; \implies \;  (f_* i^*(v),f_*(I), g_* o^*(v),g_*(O)) \in W \]
where
\[  V = \blacksquare(S \stackrel{i}{\rightarrow} (X,H) \stackrel{o}{\leftarrow} T) = \]
\[  \{ (i^*(v),I,o^*(v),O) : \; v \in \R^X,  I \in \R^S, O \in \R^T \, 
\textrm{ and } \, H(v) + i_*(I) - o_*(O) = 0\} \]
and
\[ W = \blacksquare(S' \stackrel{i'}{\rightarrow} (X',H') \stackrel{o'}{\leftarrow} T') = \]
\[ \{ (i'^*(v'),I',o'^*(v'),O') : \; v' \in \R^{X'},  I' \in \R^{S'}, O' \in \R^{T'} \, 
\textrm{ and } \, H'(v') + i'_*(I') - o'_*(O') = 0\} .\]

To do this, assume $  (i^*(v),I,o^*(v),O) \in V$, which implies that
\begin{equation}
  H(v) + i_*(I) - o_*(O) = 0 .    \label{eq:master_1}
\end{equation}
Since the commuting squares in $\alpha$ are pullbacks,  Lemma \ref{lem:beck-chevalley} implies that
\[
     f_* i^* = i'^* p_*    , \qquad
     g_* o^* = o'^* p_*  . 
\]
Thus
\[     (f_* i^*(v),f_*(I), g_* o^*(v),g_*(O)) = (i'^* p_*(v), f_*(I), o'^* p_*(v),g_*(O) )  \]
and this is an element of $W$ as desired if
\begin{equation}   
    H' p_*(v) + i'_* f_*(I) - o'_* g_*(O) = 0 .   \label{eq:master_2} 
\end{equation}
To prove Eq.\ \eqref{eq:master_2}, note that
\[
\begin{array}{ccl}
   H' p_*(v) + i'_* f_*(I) - o'_* g_*(O) &=& p_*H (v) + p_* i_*(I) - p_* o_* (O)  \\ 
                                                          &=& p_*(H(v) + i_*(I) - o_*(O)) 
\end{array}
\]
where in the first step we use the fact that the squares in $\alpha$ commute, together with the fact that $H' p_* = p_* H$.   Thus, Eq.\ \eqref{eq:master_1} implies Eq.\ \eqref{eq:master_2}. 
\end{proof}

The following result is a special case of a result by Pollard and the first author on black-boxing open dynamical systems \cite{BP}.   To make this paper self-contained we adapt the proof
to the case at hand:

\begin{lem}
\label{lem:black-boxing_symmoncat}
The black-boxing of a composite of two open Markov processes equals the composite of
their black-boxings.
\end{lem}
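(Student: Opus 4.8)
The plan is to unwind both sides into explicit subsets of $\R^S \oplus \R^S \oplus \R^U \oplus \R^U$ and then move steady states back and forth across the pushout used to define the composite. Write the two composable open Markov processes as $S \stackrel{i_1}{\rightarrow} (X,H) \stackrel{o_1}{\leftarrow} T$ and $T \stackrel{i_2}{\rightarrow} (Y,G) \stackrel{o_2}{\leftarrow} U$, and let $j \maps X \to X +_T Y$ and $k \maps Y \to X +_T Y$ be the pushout legs, so that $j o_1 = k i_2$ and $H \odot G = j_* H j^* + k_* G k^*$ as in Eq.\ (\ref{eq:odot}). Unwinding Def.\ \ref{defn:black-boxing}, the black-boxing of the composite consists of the tuples $\bigl((j i_1)^*(u),\, I,\, (k o_2)^*(u),\, O\bigr)$ with $u \in \R^{X +_T Y}$, $I \in \R^S$, $O \in \R^U$ and $(H \odot G)u + (j i_1)_*(I) - (k o_2)_*(O) = 0$. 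Composing the two black-box relations in $\LLinRel$ instead produces the tuples $\bigl(i_1^*(v),\, I,\, o_2^*(w),\, O\bigr)$ for which there exist $v \in \R^X$, $w \in \R^Y$ and a middle vector $m \in \R^T$ with $o_1^*(v) = i_2^*(w)$, $Hv + (i_1)_*(I) - (o_1)_*(m) = 0$ and $Gw + (i_2)_*(m) - (o_2)_*(O) = 0$; here $m$ plays simultaneously the role of the outflow of the first process and the inflow of the second. The lemma asserts that these two subsets coincide.

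Two structural facts drive the comparison, both consequences of $X +_T Y$ being a pushout of injections in $\FinSet$. First, $u \mapsto \bigl(j^*(u), k^*(u)\bigr)$ is a linear isomorphism of $\R^{X +_T Y}$ onto $\{(v,w) \in \R^X \oplus \R^Y : o_1^*(v) = i_2^*(w)\}$: it is injective because $j(X) \cup k(Y) = X +_T Y$, its image lies in that subspace because $j o_1 = k i_2$, and both spaces have dimension $|X| + |Y| - |T|$. Second, $[j_*, k_*] \maps \R^X \oplus \R^Y \to \R^{X +_T Y}$, $(a,b) \mapsto j_*(a) + k_*(b)$, is surjective with kernel exactly $\{\bigl((o_1)_*(m), -(i_2)_*(m)\bigr) : m \in \R^T\}$ — again a dimension count, using that $(o_1)_*$ and $(i_2)_*$ are injective. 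Granting these, the inclusion of the black-box of the composite into the composite relation goes: given a steady state $u$ of the composite, put $v = j^*(u)$ and $w = k^*(u)$, so $(j i_1)^*(u) = i_1^*(v)$, $(k o_2)^*(u) = o_2^*(w)$, and $o_1^*(v) = i_2^*(w)$ by the commuting square; since $(H \odot G)u = j_* H v + k_* G w$, the composite master equation becomes $j_*\bigl(Hv + (i_1)_*(I)\bigr) + k_*\bigl(Gw - (o_2)_*(O)\bigr) = 0$, so by the kernel fact there is $m \in \R^T$ with $Hv + (i_1)_*(I) = (o_1)_*(m)$ and $Gw - (o_2)_*(O) = -(i_2)_*(m)$, which are exactly the two piece master equations.

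For the reverse inclusion, given $v, w, m$ as above, the first structural fact yields a unique $u \in \R^{X +_T Y}$ with $j^*(u) = v$ and $k^*(u) = w$; then $(j i_1)^*(u) = i_1^*(v)$ and $(k o_2)^*(u) = o_2^*(w)$, and substituting the two piece master equations into $(H \odot G)u = j_* H v + k_* G w$ and using $j o_1 = k i_2$ collapses everything to $(H \odot G)u + (j i_1)_*(I) - (k o_2)_*(O) = (j o_1)_*(m) - (k i_2)_*(m) = 0$. Hence $u$ witnesses membership of the tuple in the black-box of the composite.

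The one genuinely delicate step is the extraction of the middle vector $m$ in the first inclusion: the black-box of the composite remembers only a single steady state on $X +_T Y$ together with the net in- and outflows at $S$ and $U$, so one must recover from it the flow through $T$ that makes the two halves composable and check its consistency on both sides. This is precisely where injectivity of the cospan legs is used — it is what makes $\R^{X +_T Y}$ the fibre product $\R^X \times_{\R^T} \R^Y$ and pins down the kernel of $[j_*, k_*]$ as a copy of $\R^T$ — and without it the composite master equation would not split into the two master equations for the pieces. Everything else reduces to routine linear algebra.
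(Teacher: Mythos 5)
Your proof is correct and follows essentially the same route as the paper's: both inclusions hinge on the same two facts --- that $(j^*,k^*)$ identifies $\R^{X+_T Y}$ with the compatible pairs $\{(v,w) : o_1^*(v) = i_2^*(w)\}$ (used to glue the two steady states into one on the pushout), and that the kernel of $[j_*,k_*]$ is exactly the image of $m \mapsto ((o_1)_*(m), -(i_2)_*(m))$ (used to extract the middle flow $O = I'$). The only difference is that you establish these by dimension counts exploiting injectivity of the cospan legs, whereas the paper derives them from the universal property of the pushout in $\FinSet$ and from the free vector space functor preserving colimits (hence turning the pushout into a coequalizer of vector spaces); the substance is identical.
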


\begin{proof}
 Consider composable open Markov processes  
\[         S \stackrel{i}\longrightarrow (X,H) \stackrel{o}\longleftarrow T, \qquad
           T \stackrel{i'}\longrightarrow (Y,G) \stackrel{o'}\longleftarrow U .\]
To compose these, we first form the pushout
\[
    \xymatrix{
      && X +_T Y \\
      & X \ar[ur]^{j} && Y \ar[ul]_{k} \\
      \quad S\quad \ar[ur]^{i} && T \ar[ul]_{o} \ar[ur]^{i'} &&\quad U \quad \ar[ul]_{o'}
    }
\]
Then their composite is
\[ S \stackrel{j i}{\longrightarrow} (X +_T Y , H \odot G) \stackrel{k o'}{\longleftarrow} U \]
where 
\[    H \odot G = j_* H j^* + k_* G  k^*  .\]

To prove that $\blacksquare$ preserves composition, we first show that 
\[ \blacksquare(Y,G) \; \blacksquare(X,H) \subseteq \blacksquare(X+_T Y,H \odot G) .\]  
Thus, given
\[     (i^*(v),I,o^*(v),O) \in \blacksquare(X,H), \qquad  ({i'}^*(v'),I',{o'}^*(v'),O') \in \blacksquare(Y,G) \]
with 
\[   o^*(v) = {i'}^*(v'), \qquad O = I' \]
we need to prove that 
\[     (i^*(v),I,{o'}^*(v'),O') \in \blacksquare(X +_T Y, H \odot G). \]
To do this, it suffices to find $w \in \R^{X +_T Y}$ such that 
\[   (i^*(v),I,{o'}^*(v'),O') = ((ji)^*(w), I, {(ko')}^*(w), O') \]
and $w$ is a steady state of $(X+_T Y,H \odot G)$ with inflows $I$ and outflows $O'$.

Since  $o^*(v) = {i'}^*(v'),$ this diagram commutes:
\[
    \xymatrix{
      && \R \\
      & X \ar[ur]^{v} && Y \ar[ul]_{v'} \\
       && T \ar[ul]^{o} \ar[ur]_{i'} &&
    }
\]
so by the universal property of the pushout there is a unique map $w \maps X +_T Y \to \R$ such that this commutes:
\begin{equation}
\label{eq:pushout}
    \xymatrix{
      && \R \\
     && X+_T\!Y \ar[u]^w \\      
      & X \ar@/^/[uur]^{v} \ar[ur]^{j} && Y \ar@/_/[uul]_{v'} \ar[ul]_{k} \\
       && T \ar[ul]^{o} \ar[ur]_{i'} &&
    }
\end{equation}
This simply says that because the functions $v$ and $v'$ agree on the `overlap' of our two
open Markov processes, we can find a function $w$ that restricts
to $v$ on $X$ and $v'$ on $Y$.

We now prove that $w$ is a steady state of the composite open Markov process with
inflows $I$ and outflows $O'$:
\begin{equation}
   \label{eq:steady_state_3}
   (H \odot G)(w) + (ji)_*(I) - (ko')_*(O') = 0.
\end{equation}
To do this we use the fact that $v$ is a steady state of $S \stackrel{i}\rightarrow (X,H) \stackrel{o}\leftarrow T$ with inflows $I$ and outflows $O$:
\begin{equation}\label{eq:steady_state_1}
   H(v) + i_*(I) - o_*(O) = 0
\end{equation}
and $v'$ is a steady state of $T \stackrel{i'}\rightarrow (Y,G) \stackrel{o'}\leftarrow U$ with inflows $I'$ and outflows $O'$:
\begin{equation}\label{eq:steady_state_2}
   G(v') + i'_*(I') - o'_*(O') = 0.
\end{equation}
We push Eq.\ \eqref{eq:steady_state_1} forward along $j$, push Eq.\
\eqref{eq:steady_state_2} forward along $k$, and sum them:
\[   j_*(H(v))  + (ji)_*(I) - (jo)_*(O) + k_*(G(v')) + (ki')_*(I') - (ko')_*(O') = 0. \]
Since $O = I'$ and $jo = ki'$, two terms cancel, leaving us with
\[     j_*(H(v))  + (ji)_*(I) + k_*(G(v')) - (ko')_*(O') = 0. \]
Next we combine the terms involving the infinitesimal stochastic operators $H$ and $G$, with the help of Eq.\ \eqref{eq:pushout} and the definition of $H \odot G$:
\begin{equation}
\label{eq:u}
   \begin{array}{ccl}
  j_*(H(v)) + k_*(G(v')) &=& (j_* H j^* + k_* G k^*)(w) \\
                                    &=& (H \odot G)(w)  .
\end{array}
\end{equation}
This leaves us with
\[         (H \odot G)(w) +  (ji)_*(I) - (ko')_*(O') = 0 \]
which is Eq.\ \eqref{eq:steady_state_3}, precisely what we needed to show.

To finish showing that $\blacksquare$ is a functor, we need to show that 
\[   \blacksquare(X+_T Y,H \odot G) \subseteq \blacksquare(Y,G) \; \blacksquare(X,H)  .\] 
So, suppose we have 
\[    ((ji)^*(w), I, {(ko')}^*(w), O') \in \blacksquare(X+_T Y,H \odot G) .\]
We need to show
\begin{equation}
\label{eq:composite}
  ((ji)^*(w), I, {(ko')}^*(w), O') = (i^*(v),I,o'^*(v'),O')
\end{equation}
where 
\[     (i^*(v),I,o^*(v),O) \in \blacksquare(X,H), \qquad  (i'^*(v'),I',o'^*(v'),O') \in \blacksquare(Y,G) \]
and
\[   o^*(v) = i'^*(v'), \qquad O = I' .\]

To do this, we begin by choosing
\[   v = j^*(w), \qquad v' = k^*(w) .\]
This ensures that Eq.\ \eqref{eq:composite} holds, and since $jo = ki'$, it also ensures that 
\[  o^*(v) = (jo)^*(w) = (ki')^*(w) = {i'}^*(v')  .\]
To finish the job, we need to find an element $O = I' \in \R^T$ such that $v$ is a steady state of $(X,H)$ with inflows $I$ and outflows $O$ and $v'$ is a steady state of $(Y,G)$ with inflows $I'$ and outflows $O'$.  Of course, we are given the fact that $w$ is a steady state of $(X+_T Y,H \odot G)$ with inflows $I$ and outflows $O'$.   

In short, we are given Eq.\ \eqref{eq:steady_state_3}, and we seek $O = I'$ such that Eqs.\ \eqref{eq:steady_state_1} and \eqref{eq:steady_state_2} hold.  Thanks to our choices of $v$ and $v'$,  we can use Eq.\ \eqref{eq:u} and rewrite Eq.\ \eqref{eq:steady_state_3} as
\begin{equation}
\label{eq:steady_state_3'}
  j_*(H(v) + i_*(I)) \; + \; k_*(G(v') - o'_*(O')) = 0 .  
\end{equation}
Eqs.\ \eqref{eq:steady_state_1} and \eqref{eq:steady_state_2} say that
\begin{equation}
\label{eq:steady_state_1'2'}
\begin{array}{lcl}
   H(v) + i_*(I) - o_*(O) &=& 0 \\ 
   G(v') + i'_*(I') - o'_*(O') &=& 0.
\end{array}
\end{equation}

Now we use the fact that 
\[
    \xymatrix{
      & X +_T Y \\
       X \ar[ur]^{j} && Y \ar[ul]_{k} \\
       & T \ar[ul]^{o} \ar[ur]_{i'} &
    }
\]
is a pushout.  Applying the `free vector space on a finite set' functor, which preserves colimits, this implies that
\[
    \xymatrix{
      & \R^{X +_T Y} \\
       \R^X \ar[ur]^{j_*} && \R^{Y} \ar[ul]_{k_*} \\
       & \R^T \ar[ul]^{o_*} \ar[ur]_{i'_*} &
    }
\]
is a pushout in the category of vector spaces.   Since a pushout is formed by taking first a coproduct and then a coequalizer, this implies that 
\[
     \xymatrix{
      \R^T \ar@<-.5ex>[rr]_-{(0,i'_*)} \ar@<.5ex>[rr]^-{(o_*,0)} && \R^X \oplus \R^{Y} \ar[rr]^{j_* + k_*}
   && \R^{X +_T Y}
}
\]
is a coequalizer.  Thus, the kernel of $j_* + k_*$ is the image of $(o_*,0) - (0,i'_*)$.   Eq.\ \eqref{eq:steady_state_3'} says precisely that 
\[    (H(v) + i_*(I), G(v') - o'_*(O')) \in \ker(j_* + k_*)  .\]
Thus, it is in the image of $o_* - i'_*$.  In other words, there exists some element $O = I' \in \R^T$
such that 
\[   (H(v) + i_*(I), G(v') - o'_*(O')) = (o_*(O), -i'_*(I')).\]
This says that Eqs.\ \eqref{eq:steady_state_1} and \eqref{eq:steady_state_2} hold, as desired.
\end{proof}

This is the main result of this paper:

\begin{thm}
\label{thm:main}
There exists a symmetric monoidal double functor $\blacksquare \maps \MMark \to \LLinRel$ with the following behavior:

\begin{enumerate}
\item Objects: $\blacksquare$ sends any finite set $S$ to the vector space $\R^{S} \oplus \R^{S}$.
\item Vertical 1-morphisms: $\blacksquare$ sends any map $f \maps S \to S'$ to the linear
map \hfill \break  $f_* \oplus f_* \maps \R^{S} \oplus \R^S \to \R^{S'} \oplus \R^{S'}$.
\item Horizontal 1-cells: $\blacksquare$ sends any open Markov process $S \stackrel{i}{\rightarrow} (X,H) \stackrel{o}{\leftarrow} T$ to the linear relation given in Def.\ \ref{defn:black-boxing}:
\[ \blacksquare(S \stackrel{i}{\rightarrow} (X,H) \stackrel{o}{\leftarrow} T) = \]
\[ \{ (i^*(v),I,o^*(v),O) : \; H(v) + i_*(I) - o_*(O) = 0 \textrm{ for some } I \in \R^S, v \in \R^X, O \in \R^T \}. \]
\item 2-Morphisms: $\blacksquare$ sends any morphism of open Markov processes
\[
\begin{tikzpicture}[scale=1.5]
\node (D) at (-4,0.5) {$S$};
\node (E) at (-3,0.5) {$(X,H)$};
\node (F) at (-2,0.5) {$T$};
\node (G) at (-3,-1) {$(X',H')$};
\node (A) at (-4,-1) {$S'$};
\node (B) at (-2,-1) {$T'$};
\path[->,font=\scriptsize,>=angle 90]
(D) edge node [left]{$f$}(A)
(F) edge node [right]{$g$}(B)
(D) edge node [above] {$i$}(E)
(F) edge node [above] {$o$}(E)
(A) edge node[above] {$i'$} (G)
(B) edge node[above] {$o'$} (G)
(E) edge node[left] {$p$}(G);
\end{tikzpicture}
\]
to the 2-morphism in $\LLinRel$ given in Lemma \ref{lem:black-boxing_2-morphisms}:
\[
\begin{tikzpicture}[scale=1.5]
\node (D) at (-4.5,0.5) {$\blacksquare(S)$};
\node (E) at (-1.5,0.5) {$\blacksquare(T)$};
\node (F) at (-4.5,-1) {$\blacksquare(S')$};
\node (A) at (-1.5,-1) {$\blacksquare(T')$.};
\path[->,font=\scriptsize,>=angle 90]
(D) edge node [above]{$\blacksquare(S \stackrel{i}{\rightarrow} (X,H) \stackrel{o}{\leftarrow} T)$}(E)
(E) edge node [right]{$\blacksquare(g)$}(A)
(D) edge node [left]{$\blacksquare(f)$}(F)
(F) edge node [above]{$ \blacksquare(S' \stackrel{i'}{\rightarrow} (X',H') \stackrel{o'}{\leftarrow} T')$} (A);
\end{tikzpicture}
\]
\end{enumerate}
\end{thm}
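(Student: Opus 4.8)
The plan is to assemble the machinery already built. First I record the two underlying functors. The functor of objects $\blacksquare_0 \maps \MMark_0 \to \LLinRel_0$ sends a finite set $S$ to $\R^S \oplus \R^S$ and a map $f \maps S \to S'$ to $f_* \oplus f_*$; it is a functor because pushforward is functorial, $(f'f)_* = f'_* f_*$ and $(1_S)_* = 1$. The functor of arrows $\blacksquare_1 \maps \MMark_1 \to \LLinRel_1$ sends an open Markov process $S \stackrel{i}{\rightarrow} (X,H) \stackrel{o}{\leftarrow} T$ to the linear relation of Def.\ \ref{defn:black-boxing}, which is genuinely a linear subspace of $\R^S \oplus \R^S \oplus \R^T \oplus \R^T$ since it is cut out by the linear equation $H(v) + i_*(I) - o_*(O) = 0$; and it sends a morphism of open Markov processes to the square supplied by Lemma \ref{lem:black-boxing_2-morphisms}. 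Because $\LLinRel$ is degenerate, $\blacksquare_1$ automatically respects vertical composition and vertical identities of 2-morphisms, so it is a functor. Finally $s \circ \blacksquare_1 = \blacksquare_0 \circ s$ and $t \circ \blacksquare_1 = \blacksquare_0 \circ t$, since the relation above is a subspace of $\blacksquare_0(S) \oplus \blacksquare_0(T)$.

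Next I supply the globular comparison 2-cells of the double functor. A direct computation gives $\blacksquare(u_\MMark(S)) = \{(v,I,v,I) : v,I \in \R^S\}$, which is exactly the identity relation $u_{\LLinRel}(\R^S\oplus\R^S)$, so the unit comparison is an identity. And Lemma \ref{lem:black-boxing_symmoncat} says precisely that $\blacksquare$ carries the composite of open Markov processes to the composite of the corresponding linear relations, as an honest equality of subspaces, so the composition comparison is an identity too. Hence $\blacksquare$ is in fact a strict double functor. Consequently every coherence axiom — naturality of the two comparisons, the associativity and unit coherence squares relating them to the associator and unitors of $\MMark$, and the equations governing horizontal composites of 2-morphisms — holds automatically, because any two parallel 2-morphisms in the degenerate double category $\LLinRel$ coincide.

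It remains to promote $\blacksquare$ to a symmetric monoidal double functor in the sense of \cite{Courser}. On objects the monoidal comparison is the canonical isomorphism $\blacksquare_0(S) \oplus \blacksquare_0(S') \cong \blacksquare_0(S+S')$ coming from $\R^{S+S'} \cong \R^S \oplus \R^{S'}$, together with $\blacksquare_0(\emptyset) = 0$; this makes $\blacksquare_0$ symmetric monoidal. On arrows, the open master equation for $(X_1 + X_2,\, H_1 \oplus H_2)$ decouples, so after the above reindexing of the feet the black-box relation of $M_1 \oplus M_2$ is literally $\blacksquare(M_1) \oplus \blacksquare(M_2)$; this is the required globular 2-isomorphism, forced to exist and be unique by degeneracy of $\LLinRel$. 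All remaining diagrams in the definition of symmetric monoidal double functor — compatibility of these comparisons with the braidings and with the structure maps $\chi$ and $\mu$ of the two double categories, and with horizontal composition — are diagrams of 2-morphisms in $\LLinRel$ and hence commute for free.

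The real content of the theorem lives entirely in Lemmas \ref{lem:black-boxing_2-morphisms} and \ref{lem:black-boxing_symmoncat}, which I am assuming; everything above is bookkeeping trivialized by the degeneracy of $\LLinRel$. I expect the main obstacle to be preservation of horizontal composition, i.e.\ Lemma \ref{lem:black-boxing_symmoncat}: one inclusion is a routine pushing-forward of the two steady-state equations along $j$ and $k$, but the reverse inclusion requires reconstructing a shared boundary flow $O = I'$ from a steady state of the composite, and for that one must know that applying the free-vector-space functor to the pushout $X +_T Y$ yields a coequalizer, so that $\ker(j_* + k_*) = \mathrm{im}(o_* - i'_*)$. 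Once that identity is available, the proof of Thm.\ \ref{thm:main} has no further difficulties.
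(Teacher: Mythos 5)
Your proposal is correct and follows essentially the same route as the paper's proof: define $\blacksquare_0$ and $\blacksquare_1$, invoke Lemmas \ref{lem:black-boxing_2-morphisms} and \ref{lem:black-boxing_symmoncat} for the 2-morphisms and for preservation of horizontal composition, and let the degeneracy of $\LLinRel$ dispose of all remaining coherence. Your explicit computation that $\blacksquare(u_{\MMark}(S))$ is the identity relation $\{(v,I,v,I)\}$, so that the comparison cells are identities and $\blacksquare$ is in fact strict, is a small sharpening of what the paper leaves as ``an evident choice,'' but it is not a different argument.
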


\begin{proof}
First we must define functors $\blacksquare_{0} \maps \MMark_0 \to \LLinRel_0$ and $\blacksquare_1 \maps \MMark_{1} \to \LLinRel_{1}$. The functor $\blacksquare_0$ is defined on finite sets and maps between these as described in (i) and (ii) of the theorem statement, while $\blacksquare_1$ is defined on open Markov processes and morphisms between these as described in (iii) and (iv).  Lemma \ref{lem:black-boxing_2-morphisms} shows that $\blacksquare_1$ is well-defined on morphisms between open Markov processes; given this is it easy to check that $\blacksquare_1$ is a functor.  One can verify that $\blacksquare_0$ and $\blacksquare_1$ combine to define a double functor $\blacksquare \maps \MMark \to \LLinRel$: the hard part is checking that horizontal composition of open Markov processes is preserved, but this was shown in Lemma \ref{lem:black-boxing_symmoncat}.  Horizontal composition of 2-morphisms is automatically preserved because $\LLinRel$ is degenerate.

To make $\blacksquare$ into a symmetric monoidal double functor we need to make
$\blacksquare_0$ and $\blacksquare_1$ into symmetric monoidal functors, which we do using these extra structures:
\begin{itemize}
\item an isomorphism in $\LLinRel_0$ between $\{0\}$ and $\blacksquare(\emptyset)$,
\item a natural isomorphism between $\blacksquare(S)  \oplus \blacksquare(S')$ and 
 $\blacksquare(S + S')$ for any two objects $S,S' \in \MMark_0$,
\item an isomorphism in $\LLinRel_1$ between the unique linear relation $\{0\} \to \{0\}$ and
$\blacksquare(\emptyset \to (\emptyset, 0_\emptyset) \leftarrow \emptyset)$, and
\item a natural isomorphism between 
\[
\blacksquare((S \to (X,H) \leftarrow T) \; \oplus \; \blacksquare(S' \to (X',H') \leftarrow T')  \]
and
\[  \blacksquare(S + S' \to (X+X',H \oplus H') \leftarrow T + T') \]
for any two objects $S \to (X,H) \leftarrow T$, $S' \to (X',H') \leftarrow T'$ of $\MMark_1$.
\end{itemize}
There is an evident choice for each of these extra structures, and it is straightforward to check
that they not only  make $\blacksquare_0$ and $\blacksquare_1$ into symmetric monoidal functors but also meet the extra requirements for a symmetric monoidal double functor listed in Shulman's
paper \cite{Shulman}.  In particular, all diagrams of 2-morphisms commute automatically because $\LLinRel$ is degenerate.
\end{proof}

\section{A bicategory of open Markov processes}
\label{sec:bicat}

In Thm.\ \ref{thm:MMark_symmetric_monoidal}, we constructed a symmetric monoidal double category $\MMark$ with
\begin{enumerate}
\item finite sets as objects,
\item maps between finite sets as vertical 1-morphisms, 
\item open Markov processes as horizontal 1-cells, and
\item morphisms of open Markov processes as 2-morphisms.
\end{enumerate}
Using the following result of Shulman \cite{Shulman}, we can obtain a symmetric monoidal bicategory $\bold{Mark}$ with
\begin{enumerate}
\item finite sets as objects,
\item open Markov processes as morphisms,
\item morphisms of open Markov processes as 2-morphisms.
\end{enumerate}
To do this, we need to check that the symmetric monoidal double category $\MMark$ is `isofibrant'---a concept we explain in the proof of Lemma \ref{lem:isofibrant}.  The bicategory $\bold{Mark}$ then arises as the `horizontal bicategory' of the double category $\MMark$.  

\begin{defn}
Let $\lD$ be a double category. Then the $\textbf{horizontal bicategory}$ of $\lD$, which we denote as $H(\lD)$, is the bicategory with
\begin{enumerate}
\item objects of $\lD$ as objects,
\item  horizontal 1-cells of $\lD$ as 1-morphisms, 
\item globular 2-morphisms of $\lD$ (i.e., 2-morphisms with identities as their source and target)  as 2-morphisms,
\end{enumerate}
and vertical and horizontal composition, identities, associators and unitors arising from those in 
$\lD$.
\end{defn}

\begin{thm}[\textbf{Shulman}] \label{Shulman}
Let $\lD$ be an isofibrant symmetric monoidal double category. Then $H(\bold{\lD})$ is a symmetric monoidal bicategory, where $H(\bold{\lD})$ is the horizontal bicategory of $\lD$.
\end{thm}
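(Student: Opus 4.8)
This is a theorem of Shulman \cite{Shulman}, so in practice one cites it; here I indicate the shape of the argument. The plan is to produce every piece of the symmetric monoidal bicategory structure on $H(\lD)$ as the ``horizontal shadow'' of the corresponding piece of the symmetric monoidal double category structure on $\lD$, and to check that each coherence axiom of a symmetric monoidal bicategory reduces to a statement about globular $2$-morphisms that already holds in $\lD$.

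First I would recall why $H(\lD)$ is a bicategory at all: its $1$-morphisms are the horizontal $1$-cells of $\lD$, its $2$-morphisms are the globular $2$-morphisms, composition of $1$-morphisms is $\odot$, and the associator and left/right unitors of $H(\lD)$ are the globular associator and unitor $2$-isomorphisms built into the pseudo double category $\lD$; the pentagon and triangle identities of $H(\lD)$ are precisely the coherence equations required of $\lD$. Next, for the tensor product: the monoidal double functor $\otimes \maps \lD \times \lD \to \lD$ restricts to a pseudofunctor $\otimes \maps H(\lD) \times H(\lD) \to H(\lD)$. It acts on objects and horizontal $1$-cells by the given tensor, and it carries globular $2$-morphisms to globular $2$-morphisms since $\otimes$ preserves vertical identities, so the boundary $1$-morphisms of the image are $1 \otimes 1 = 1$. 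The comparison $2$-cells that make $\otimes$ a \emph{pseudo}functor of bicategories are precisely the globular isomorphisms $\chi$, the compositor comparing $(M_2 \otimes N_2)\odot(M_1 \otimes N_1)$ with $(M_2 \odot M_1)\otimes(N_2 \odot N_1)$, and $\mu$, the unitor comparing the horizontal identity on $x \otimes y$ with the tensor of horizontal identities; the coherence axioms they must satisfy are among those imposed on a symmetric monoidal double category.

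The associativity, unit and braiding data of the monoidal bicategory come from the associator, unitors and braiding of $\lD$, which are transformations of double functors whose components are \emph{vertical} isomorphisms. This is where isofibrancy enters: it says that every vertical isomorphism of $\lD$ admits a companion, a horizontal $1$-cell together with connecting $2$-cells, and the companion of a vertical isomorphism is automatically an adjoint equivalence in $H(\lD)$. Applying this to the components of the double-categorical associator, unitors and braiding produces the adjoint-equivalence $1$-cells that a monoidal bicategory requires, and the pseudonaturality $2$-cells of these transformations descend to the corresponding invertible modifications. All the remaining higher data, the pentagonator, the unit $2$-cells, and the braiding and syllepsis modifications, are assembled from this companion structure together with the strict coherence equations of $\lD$, and each axiom of a symmetric monoidal bicategory becomes an equation among globular $2$-cells that holds because it holds in $\lD$.

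The main obstacle is organizational rather than conceptual: matching the long list of data and axioms of a symmetric monoidal bicategory with the (also long) list for a symmetric monoidal double category, and verifying that the companions of vertical isomorphisms interact correctly with $\odot$, with $\chi$, and with $\mu$. Carrying this out in full is the content of Shulman's paper \cite{Shulman}. In our setting the real work is instead to check the hypothesis, namely that $\MMark$ is isofibrant in the sense recalled in the proof of Lemma \ref{lem:isofibrant}.
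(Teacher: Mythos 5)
The paper does not prove this theorem at all—it is stated with attribution and a citation to Shulman, exactly as you do—so your proposal takes the same approach, and your accompanying sketch of how companions of vertical isomorphisms supply the adjoint-equivalence data for the associator, unitors and braiding is an accurate outline of Shulman's actual argument. No gaps; correctly deferring the full verification to \cite{Shulman} is what the paper itself does.
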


\begin{lem}
\label{lem:isofibrant}
The symmetric monoidal double category $\MMark$ is isofibrant.
\end{lem}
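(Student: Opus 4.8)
\emph{Proof sketch / plan.}
The plan is to unwind the notion of an \emph{isofibrant} double category from Shulman's paper \cite{Shulman} and then verify it by hand.  Recall that a pseudo double category $\lD$ is isofibrant precisely when the functor $(s,t) \maps \lD_1 \to \lD_0 \times \lD_0$ is an isofibration: given any object $M$ of $\lD_1$ with $s(M) = A$ and $t(M) = B$, and any vertical isomorphisms $f \maps A \to A'$, $g \maps B \to B'$, there must exist an object $M'$ of $\lD_1$ with $s(M') = A'$, $t(M') = B'$, together with an isomorphism $M \to M'$ in $\lD_1$ whose vertical legs (as a 2-morphism) are $f$ and $g$.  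Since an isomorphism in the category of arrows $\lD_1$ is the same thing as a vertically invertible 2-morphism, what I have to produce for $\lD = \MMark$ is: given an open Markov process $S \stackrel{i}{\rightarrow} (X,H) \stackrel{o}{\leftarrow} T$ and bijections $f \maps S \to S'$ and $g \maps T \to T'$ of finite sets (these are the vertical isomorphisms in $\MMark$), an open Markov process from $S'$ to $T'$ together with a vertically invertible 2-morphism to it having vertical legs $f$ and $g$.

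The construction I would use keeps the apex and the Hamiltonian untouched and merely relabels the feet: take $M'$ to be the open Markov process
\[ S' \stackrel{i f^{-1}}{\longrightarrow} (X,H) \stackrel{o g^{-1}}{\longleftarrow} T' , \]
and take the candidate 2-morphism $M \Rightarrow M'$ to be the triple $(f, 1_X, g)$.  Its two comparison squares have top edges $i$ and $o$, bottom edges $i f^{-1}$ and $o g^{-1}$, left and right vertical edges $f$ and $g$ on the feet, and $1_X$ on the apex; they commute on the nose because $(i f^{-1}) \circ f = i$ and $(o g^{-1}) \circ g = o$.  Since the apex map is $1_X$, the remaining requirement from Definition \ref{defn:coarse-graining}, namely $H' p_* = p_* H$, becomes the triviality $H = H$.

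All that is left is to verify that the two comparison squares are pullbacks and that $(f,1_X,g)$ is an isomorphism in $\MMark_1$, and both follow at once from $f$ and $g$ being bijections.  For the pullback property of the left square: given a finite set $Z$ with maps $a \maps Z \to S'$ and $b \maps Z \to X$ satisfying $(i f^{-1}) a = b$, the map $f^{-1} a \maps Z \to S$ is the unique map with $f \circ (f^{-1} a) = a$ and $i \circ (f^{-1} a) = b$, so the square is a pullback; the right square is identical with $(o,g,T)$ in place of $(i,f,S)$.  For invertibility, the triple $(f^{-1}, 1_X, g^{-1})$ is a 2-morphism $M' \Rightarrow M$ — its comparison squares are again pullbacks because $f^{-1}$ and $g^{-1}$ are bijections — and the two vertical composites are $(f^{-1} f, 1_X, g^{-1} g)$ and $(f f^{-1}, 1_X, g g^{-1})$, i.e.\ the identity 2-morphisms of $M$ and $M'$.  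Hence $(f,1_X,g)$ is an isomorphism in $\MMark_1$ lifting $(f,g)$, and $\MMark$ is isofibrant.  I do not expect any genuine obstacle in this argument; the only point requiring care is that a 2-morphism of open Markov processes is required to have \emph{pullback} squares, not merely commuting ones, so one must actually check those (easy) pullback conditions — and it is exactly there that the invertibility of $f$ and $g$ is used.
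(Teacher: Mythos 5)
Your verification is internally correct, but it proves a different statement from the one the lemma names, and the bridge between the two is asserted rather than established. Shulman's definition of ``isofibrant''---the hypothesis actually needed to invoke Thm.~\ref{Shulman}---is that every vertical 1-isomorphism has both a \emph{companion} and a \emph{conjoint}, and this is what the paper's proof supplies directly: for a bijection $f \maps S \to S'$ it exhibits the companion $S \stackrel{f}{\rightarrow} (S',0_{S'}) \stackrel{1}{\leftarrow} S'$ and the conjoint $S' \stackrel{1}{\rightarrow} (S',0_{S'}) \stackrel{f}{\leftarrow} S$ together with their unit and counit 2-morphisms, and checks the two required composite identities. What you verify instead is that $(s,t) \maps \MMark_1 \to \MMark_0 \times \MMark_0$ lifts pairs of vertical isomorphisms to vertically invertible 2-morphisms. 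That part is fine: transporting $S \stackrel{i}{\rightarrow} (X,H) \stackrel{o}{\leftarrow} T$ to $S' \stackrel{if^{-1}}{\rightarrow} (X,H) \stackrel{og^{-1}}{\leftarrow} T'$ via $(f,1_X,g)$ does give an invertible 2-morphism, the legs $if^{-1}$, $og^{-1}$ are still injections, the squares are pullbacks exactly because $f$ and $g$ are bijections, and $(1_X)_* H = H (1_X)_*$ is trivial. The gap is the opening sentence ``recall that \dots isofibrant precisely when $(s,t)$ is an isofibration'': that is not the definition, it is an equivalent characterization, and the equivalence is the actual content you would need.

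The equivalence does hold, so the gap is closable, but it deserves either a proof or a citation. One direction: from your invertible lift $\phi \maps u(S) \Rightarrow \hat{f}$ of the pair $(1_S,f)$ against the horizontal unit $u(S)$, take the companion unit to be $\phi$ and the counit to be the vertical composite $u(f)\circ\phi^{-1} \maps \hat{f} \Rightarrow u(S')$, which has legs $(f,1_{S'})$ as required; the vertical companion identity is then immediate, while the horizontal one follows from naturality of the left and right unitors together with the coherence fact that $\lambda_{u(S)} = \rho_{u(S)}$. Conjoints come from running the same argument on $f^{-1}$. Alternatively---and this is the cheaper route the paper takes---skip the general equivalence and just write down the companion and conjoint in $\MMark$ explicitly: since their apexes carry the zero Hamiltonian, verifying the two identities is a short computation with commuting squares of finite sets.
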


\begin{proof}
In what follows, all unlabeled arrows are identities. To show that $\MMark$ is isofibrant, we need to show that every vertical 1-isomorphism has both a companion and a conjoint \cite{Shulman}. Given a vertical 1-isomorphism $f \maps S \to S'$, meaning a bijection between finite sets, then a companion of $f$ is given by the horizontal 1-cell:
\[
\begin{tikzpicture}[scale=1.5];
\node (D) at (-4.5,.5) {$S$};
\node (E) at (-3,.5) {$(S',0_{S'})$};
\node (F) at (-1.5,.5) {$S'$};
\path[->,font=\scriptsize,>=angle 90]
(D) edge node [above] {$f$}(E)
(F) edge node [above] {$$}(E);
\end{tikzpicture}
\]
together with two 2-morphisms
\[
\begin{tikzpicture}[scale=1.5]
\node (D) at (-4,0.5) {$S$};
\node (E) at (-3,0.5) {$(S',0_{S'})$};
\node (F) at (-2,0.5) {$S'$};
\node (A) at (-4,-1) {$S'$};
\node (B) at (-2,-1) {$S'$};
\node (G) at (-3,-1) {$(S',0_{S'})$};
\node (D') at (-1,0.5) {$S$};
\node (E') at (0,0.5) {$(S,0_S)$};
\node (F') at (1,0.5) {$S$};
\node (A') at (-1,-1) {$S$};
\node (B') at (1,-1) {$S'$};
\node (G') at (0,-1) {$(S',0_{S'})$};
\path[->,font=\scriptsize,>=angle 90]
(D) edge node [above] {$f$}(E)
(F) edge node {$$}(E)
(D) edge node [left]{$f$}(A)
(F) edge node [right]{$$}(B)
(A) edge node {$$} (G)
(B) edge node {$$} (G)
(E) edge node[left] {$$}(G)
(D') edge node {$$} (E')
(F') edge node {$$} (E')
(D') edge node [left]{$$}(A')
(F') edge node [right]{$f$}(B')
(A') edge node [above] {$f$} (G')
(B') edge node {$f$} (G')
(E') edge node[left] {$f$}(G');
\end{tikzpicture}
\]
such that vertical composition gives
\[
\begin{tikzpicture}[scale=1.5]
\node (D') at (-1,-0.25) {$S$};
\node (E') at (0,-0.25) {$(S,0_S)$};
\node (F') at (1,-0.25) {$S$};
\node (A') at (-1,-1.5) {$S$};
\node (B') at (1,-1.5) {$S'$};
\node (G') at (0,-1.5) {$(S',0_{S'})$};
\node (A) at (-1,-2.75) {$S'$};
\node (B) at (0,-2.75) {$(S',0_{S'})$};
\node (C) at (1,-2.75) {$S'$};
\node (E) at (2,-1.5) {$=$};
\node (D) at (3,-0.5) {$S$};
\node (F) at (3,-2.5) {$S'$};
\node (G) at (4,-0.5) {$(S,0_S)$};
\node (H) at (5,-0.5) {$S$};
\node (I) at (4,-2.5) {$(S',0_{S'})$};
\node (J) at (5,-2.5) {$S'$};
\path[->,font=\scriptsize,>=angle 90]
(D') edge node {$$}(E')
(F') edge node {$$}(E')
(D') edge node [left]{$$}(A')
(F') edge node [right]{$f$}(B')
(A') edge node [above]{$f$} (G')
(B') edge node {$$} (G')
(E') edge node[left] {$f$}(G')
(A') edge node [left] {$f$}(A)
(A) edge node {$$} (B)
(C) edge node {$$} (B)
(B') edge node [right]{$$} (C)
(G') edge node [left]{$$} (B)
(D) edge node {$$} (G)
(D) edge node [left]{$f$} (F)
(F) edge node {$$} (I)
(J) edge node {$$} (I)
(H) edge node [right]{$f$} (J)
(H) edge node {$$} (G)
(G) edge node [left]{$f$} (I);
\end{tikzpicture}
\]
and horizontal composition gives
\[
\begin{tikzpicture}[scale=1.5]
\node (D') at (-1,0.5) {$S$};
\node (E') at (0,.5) {$(S,0_S)$};
\node (F') at (1,0.5) {$S$};
\node (A') at (-1,-1) {$S$};
\node (B') at (1,-1) {$S'$};
\node (G') at (0,-1) {$(S',0_{S'})$};
\node (A) at (2,0.5) {$(S',0_{S'})$};
\node (B) at (2,-1) {$(S',0_{S'})$};
\node (C) at (3,0.5) {$S'$};
\node (D) at (3,-1) {$S'$};
\node (E) at (3.5,-0.25) {$=$};
\node (F) at (4,0.5) {$S$};
\node (G) at (5,0.5) {$(S',0_{S'})$};
\node (H) at (6,0.5) {$S'$};
\node (I) at (4,-1) {$S$};
\node (J) at (5,-1) {$(S',0_{S'})$};
\node (K) at (6,-1) {$S'$};
\path[->,font=\scriptsize,>=angle 90]
(D') edge node {$$}(E')
(F') edge node {$$}(E')
(D') edge node [left]{$$}(A')
(F') edge node [right]{$f$}(B')
(A') edge node [above]{$f$} (G')
(B') edge node {$$} (G')
(E') edge node[left] {$f$}(G')
(F') edge node [above]{$f$} (A)
(C) edge node {$$} (A)
(C) edge node [right]{$$} (D)
(D) edge node {$$} (B)
(B') edge node {$$} (B)
(A) edge node [left] {$$} (B)
(F) edge node [above]{$f$} (G)
(H) edge node [above]{$$} (G)
(G) edge node [left]{$$} (J)
(F) edge node [left]{$$} (I)
(I) edge node [above]{$f$} (J)
(K) edge node {$$} (J)
(H) edge node [right]{$$} (K);
\end{tikzpicture}
\]
A conjoint of $f \maps S \to S'$ is given by the horizontal 1-cell
\[
\begin{tikzpicture}[scale=1.5];
\node (D) at (-4.5,.5) {$S'$};
\node (E) at (-3,.5) {$(S',0_{S'})$};
\node (F) at (-1.5,.5) {$S$};
\path[->,font=\scriptsize,>=angle 90]
(D) edge node [above] {$$}(E)
(F) edge node [above] {$f$}(E);
\end{tikzpicture}
\]
together with two 2-morphisms
\[
\begin{tikzpicture}[scale=1.5]
\node (D) at (-4,0.5) {$S'$};
\node (E) at (-3,0.5) {$(S',0_{S'})$};
\node (F) at (-2,0.5) {$S$};
\node (A) at (-4,-1) {$S'$};
\node (B) at (-2,-1) {$S'$};
\node (G) at (-3,-1) {$(S',0_{S'})$};
\node (D') at (-1,0.5) {$S$};
\node (E') at (0,0.5) {$(S,0_S)$};
\node (F') at (1,0.5) {$S$};
\node (A') at (-1,-1) {$S'$};
\node (B') at (1,-1) {$S$};
\node (G') at (0,-1) {$(S',0_{S'})$};
\path[->,font=\scriptsize,>=angle 90]
(D) edge node [above]{$$}(E)
(F) edge node [above]{$f$}(E)
(D) edge node {$$}(A)
(F) edge node [right]{$f$}(B)
(A) edge node [above]{$$} (G)
(B) edge node [above]{$$} (G)
(E) edge node[left] {$$}(G)
(D') edge node [above]{$$}(E')
(F') edge node [above]{$$}(E')
(D') edge node [left]{$f$}(A')
(F') edge node [right]{$$}(B')
(A') edge node [above]{$$} (G')
(B') edge node [above]{$f$} (G')
(E') edge node[left] {$f$}(G');
\end{tikzpicture}
\]
that satisfy equations analogous to the two above.
\end{proof}

\begin{thm} \label{markbicat}
$\bold{Mark}$ is a symmetric monoidal bicategory.
\end{thm}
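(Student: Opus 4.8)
The plan is to deduce this immediately from Shulman's theorem (Thm.\ \ref{Shulman}), since all the real work has already been done. By Thm.\ \ref{thm:MMark_symmetric_monoidal}, $\MMark$ is a symmetric monoidal double category, and by Lemma \ref{lem:isofibrant} it is isofibrant. Since $\bold{Mark}$ was defined to be the horizontal bicategory $H(\MMark)$, Thm.\ \ref{Shulman} then says precisely that $\bold{Mark}$ is a symmetric monoidal bicategory.

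The only thing I would spell out is the identification of the data, to confirm that the bicategory produced by this machine is the one described above. Unwinding the definition of the horizontal bicategory: its objects are the objects of $\MMark$, namely finite sets; its 1-morphisms are the horizontal 1-cells of $\MMark$, namely open Markov processes; and its 2-morphisms are the globular 2-morphisms of $\MMark$, that is, morphisms of open Markov processes in which the vertical legs $f \maps S \to S'$ and $g \maps T \to T'$ are identities. Composition of open Markov processes, the associator and unitors, and the vertical and horizontal composition of 2-morphisms are all inherited from $\MMark$, and the symmetric monoidal structure is transported from that of $\MMark$ by the recipe in Shulman's paper \cite{Shulman}; in particular the monoidal unit is the empty set and the tensor product is given by the `addition' on open Markov processes defined just before Thm.\ \ref{thm:MMark_symmetric_monoidal}.

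I do not expect any real obstacle here: the content lies entirely in the hypotheses of Shulman's theorem, which were verified in Thm.\ \ref{thm:MMark_symmetric_monoidal} and Lemma \ref{lem:isofibrant}. The one point worth a moment's care is simply bookkeeping --- checking that `globular 2-morphism of $\MMark$' really matches the intended notion of 2-morphism of $\bold{Mark}$, i.e.\ that requiring $f$ and $g$ to be identities in Def.\ \ref{defn:coarse-graining} forces the pullback squares to identify $S'$ with $S$ and $T'$ with $T$, so that such a 2-morphism is exactly a morphism of Markov processes $p \maps (X,H) \to (X',H')$ compatible with the (shared) input and output cospans.
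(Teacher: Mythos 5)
Your proposal is correct and is essentially identical to the paper's own proof: both simply apply Shulman's theorem (Thm.\ \ref{Shulman}) to the isofibrant symmetric monoidal double category $\MMark$, using Thm.\ \ref{thm:MMark_symmetric_monoidal} and Lemma \ref{lem:isofibrant}. Your extra remarks unwinding the horizontal bicategory's data and checking that globular 2-morphisms match the intended 2-morphisms of $\bold{Mark}$ are a harmless (and sensible) elaboration of the same argument.
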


\begin{proof}
This follows immediately from Thm.\ \cite{Shulman}: $\MMark$ is an isofibrant symmetric monoidal double category, so we obtain the symmetric monoidal bicategory $\bold{Mark}$ as the horizontal bicategory of $\MMark$.
\end{proof}

We can also obtain a symmetric monoidal bicategory $\bold{LinRel}$ from the symmetric monoidal double category $\LLinRel$ using this fact:

\begin{lem}
The symmetric monoidal double category $\LLinRel$ is isofibrant.
\end{lem}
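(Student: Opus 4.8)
The plan is to imitate the proof of Lemma \ref{lem:isofibrant}: a double category is isofibrant exactly when every vertical 1-isomorphism has both a companion and a conjoint \cite{Shulman}, so I must produce these for an arbitrary linear isomorphism $f \maps V \to W$. Recall that the identity horizontal 1-cell on $V$ is the linear relation $1_V = \{(v,v) : v \in V\} \subseteq V \oplus V$. I would take the companion of $f$ to be the graph
\[ \hat{f} = \{(v, f(v)) : v \in V\} \subseteq V \oplus W, \]
a horizontal 1-cell from $V$ to $W$, and the conjoint to be the transpose of the graph,
\[ \check{f} = \{(f(v), v) : v \in V\} = \{(w, f^{-1}(w)) : w \in W\} \subseteq W \oplus V, \]
a horizontal 1-cell from $W$ to $V$; when $f$ is invertible this is just the graph of $f^{-1}$.

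Next I would write down the four binding 2-morphisms. Following \cite{Shulman}, the two for the companion fill the frames with horizontal edges $(\hat f, 1_W)$ and $(1_V, \hat f)$ and vertical edges $(f, 1_W)$ and $(1_V, f)$ respectively (listing top edge, bottom edge; left map, right map). Since $\LLinRel$ is degenerate, each such 2-morphism exists iff the corresponding inclusion of relations holds, and both are immediate: $(f \oplus 1_W)\,\hat f = \{(f(v), f(v)) : v \in V\} \subseteq 1_W$ and $(1_V \oplus f)\,1_V = \{(v, f(v)) : v \in V\} \subseteq \hat f$ (the latter even an equality). Likewise, the two binding cells for the conjoint fill the frames with horizontal edges $(\check f, 1_W)$ and $(1_V, \check f)$ and vertical edges $(1_W, f)$ and $(f, 1_V)$, and exist because $(1_W \oplus f)\,\check f = \{(f(v), f(v)) : v \in V\} \subseteq 1_W$ and $(f \oplus 1_V)\,1_V = \{(f(v), v) : v \in V\} = \check f$. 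The companion and conjoint identities of \cite{Shulman} then hold automatically: each is an equality between two 2-morphisms with the same frame, and $\LLinRel$ has at most one 2-morphism per frame.

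I do not expect any real obstacle here. Unlike Lemma \ref{lem:isofibrant}, where the binding cells were maps of cospans and the companion/conjoint equations had to be checked by pasting pushout squares, degeneracy of $\LLinRel$ discharges every such equation for free; the only care needed — and it is minor bookkeeping — is to orient the frames of the four binding cells so that the relevant inclusions are the obviously true ones displayed above. In fact the same horizontal 1-cells and the same inclusions make sense for an arbitrary linear map $f$, so $\LLinRel$ is even a framed bicategory, but isofibrancy is all that Thm.\ \ref{Shulman} requires in order to produce the symmetric monoidal bicategory $\bold{LinRel}$ as the horizontal bicategory of $\LLinRel$.
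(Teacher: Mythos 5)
Your proof is correct and follows essentially the same route as the paper: take the companion of $f$ to be its graph and the conjoint to be the graph of $f^{-1}$, then observe that the binding cells exist because the requisite inclusions of relations hold and that the companion/conjoint identities are automatic since $\LLinRel$ has at most one 2-morphism per frame. You are in fact somewhat more explicit than the paper (which leaves the conjoint's binding cells and the final equations to the reader), and your closing observation that the same construction works for arbitrary linear maps is a correct bonus, though not needed.
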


\begin{proof}
Let $f \maps X \to Y$ be a linear isomorphism between finite-dimensional real vector spaces. Define $\hat{f}$ to be the linear relation given by the linear isomorphism $f$ and define 2-morphisms in $\LLinRel$
\[
\begin{tikzpicture}[scale=1.5]
\node (D) at (-4,0.5) {$X$};
\node (F) at (-2,0.5) {$Y$};
\node (A) at (-4,-1) {$Y$};
\node (B) at (-2,-1) {$Y$};
\node (D') at (0,0.5) {$X$};
\node (F') at (2,0.5) {$X$};
\node (A') at (0,-1) {$X$};
\node (B') at (2,-1) {$Y$};
\node (C) at (-3,-0.25) {$\alpha_{f} \Downarrow$};
\node (C') at (1,-0.25) {${}_{f}\alpha \Downarrow$};
\path[->,font=\scriptsize,>=angle 90]
(D) edge node [above]{$\hat{f}$}(F)
(D) edge node [left]{$f$}(A)
(F) edge node [right]{$1$}(B)
(A) edge node [above]{$1$} (B)
(D') edge node [above]{$1$}(F')
(D') edge node [left]{$1$}(A')
(F') edge node [right]{$f$}(B')
(A') edge node [above]{$\hat{f}$} (B');
\end{tikzpicture}
\]
where $\alpha_{f}$ and ${}_{f} \alpha$, the unique fillers of their frames, are identities. These two 2-morphisms and $\hat{f}$ satisfy the required equations, and the conjoint of $f$ is given by reversing the direction of $\hat{f}$, which is just $f^{-1} \maps Y \to X$. It follows that $\LLinRel$ is isofibrant.
\end{proof}

\begin{thm}
There exists a symmetric monoidal bicategory $\bold{LinRel}$ with
\begin{enumerate}
\item finite-dimensional real vector spaces as objects,
\item linear relations $R \subseteq V \oplus W$ as morphisms from $V$ to $W$,
\item inclusions $R \subseteq S$ between linear relations $R,S \subseteq V \oplus W$ as 2-morphisms.
\end{enumerate}
\end{thm}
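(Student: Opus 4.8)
The plan is to obtain $\bold{LinRel}$ as the horizontal bicategory of the symmetric monoidal double category $\LLinRel$, in exactly the same way that $\bold{Mark}$ was obtained from $\MMark$ in Thm.\ \ref{markbicat}. First I would invoke Theorems \ref{thm:LLinRel} and \ref{thm:LLinRel2} to record that $\LLinRel$ is a symmetric monoidal double category, together with the immediately preceding lemma, which says that $\LLinRel$ is isofibrant. Then Shulman's theorem (Thm.\ \ref{Shulman}) applies directly and tells us that $H(\LLinRel)$, the horizontal bicategory of $\LLinRel$, is a symmetric monoidal bicategory. So the definition is $\bold{LinRel} := H(\LLinRel)$, and the only thing left is to confirm that this bicategory has the three layers of structure claimed in the statement.

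This confirmation is a matter of unwinding the definition of the horizontal bicategory. Its objects are the objects of $\LLinRel$, namely finite-dimensional real vector spaces; its $1$-morphisms are the horizontal $1$-cells of $\LLinRel$, namely linear relations $R \subseteq V \oplus W$; and its $2$-morphisms are the globular $2$-morphisms of $\LLinRel$. A globular $2$-morphism from a linear relation $R \subseteq V \oplus W$ to a linear relation $S \subseteq V \oplus W$ is a square of $\LLinRel$ whose two vertical $1$-morphisms are identities. By item (iv) in the definition of $\LLinRel$, such a square exists exactly when $(1_V \oplus 1_W)\, R \subseteq S$, that is, exactly when $R \subseteq S$, and since $\LLinRel$ is degenerate it is then unique. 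Hence the $2$-morphisms of $H(\LLinRel)$ from $R$ to $S$ are precisely the inclusions $R \subseteq S$, as claimed. I would also note in passing that vertical and horizontal composition of these $2$-morphisms, inherited from $\LLinRel$, are just the evident (strictly associative and unital) composites of inclusions.

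There is essentially no obstacle here: the substance of the result was already established in Theorems \ref{thm:LLinRel} and \ref{thm:LLinRel2} and in the isofibrancy lemma, and the content of this theorem is only the translation from double-categorical to bicategorical language. The one point deserving a sentence of care is the identification of globular $2$-morphisms with inclusions of linear relations, which follows from the degeneracy of $\LLinRel$ exactly as recorded above.
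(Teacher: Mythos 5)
Your proposal is correct and follows exactly the paper's argument: apply Shulman's theorem (Thm.\ \ref{Shulman}) to the isofibrant symmetric monoidal double category $\LLinRel$ and take its horizontal bicategory. The extra paragraph identifying the globular $2$-morphisms with inclusions $R \subseteq S$ via degeneracy is a useful unwinding that the paper leaves implicit, but it is the same proof.
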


\begin{proof}
Apply Shulman's result, Thm.\ \ref{Shulman}, to the isofibrant symmetric monoidal double category $\LLinRel$ to obtain the symmetric monoidal bicategory $\bold{LinRel}$ as the horizontal edge bicategory of $\LLinRel$.
\end{proof}


Thus we have symmetric monoidal bicategories $\bold{Mark}$ and $\bold{LinRel}$, both of which come from discarding the vertical 1-morphisms of the symmetric monoidal double categories $\MMark$ and $\LLinRel$, respectively. Morally, we should be able to do something similar to the symmetric monoidal double functor $\blacksquare \maps \MMark \to \LLinRel$ to obtain a symmetric monoidal functor of bicategories $\blacksquare \maps \bold{Mark} \to \bold{LinRel}$.

\begin{con}
There exists a symmetric monoidal functor $\blacksquare \maps \bold{Mark} \to \bold{LinRel}$ that maps:
\begin{enumerate}
\item any finite set $S$ to the finite-dimensional real vector space $\blacksquare(S)=\R^S \oplus \R^S$,
\item any open Markov process $S \stackrel{i}{\rightarrow} (X,H) \stackrel{o}{\leftarrow} T$ to the linear relation from $\blacksquare(S)$ to $\blacksquare(T)$ given by the linear subspace 
\[ \blacksquare(S \stackrel{i}{\rightarrow} (X,H) \stackrel{o}{\leftarrow} T) = \]
\[  \{ (i^*(v),I,o^*(v),O) : \; H(v) + i_*(I) - o_*(O) = 0 \} \subseteq \R^S \oplus \R^S \oplus \R^T \oplus \R^T ,\]
\item any morphism of open Markov processes 
\[
\begin{tikzpicture}[scale=1.5]
\node (D) at (-4.2,-0.5) {$S$};
\node (A) at (-4.2,-2) {$S$};
\node (B) at (-1.8,-2) {$T$};
\node (E) at (-3,-0.5) {$(X,H)$};
\node (F) at (-1.8,-0.5) {$T$};
\node (G) at (-3,-2) {$(X',H')$};
\path[->,font=\scriptsize,>=angle 90]
(D) edge node[above] {$i_1$}(E)
(A) edge node[above] {$i'_1$} (G)
(B) edge node [above]{$o'_1$} (G)
(F) edge node [above]{$o_1$}(E)
(D) edge node [left]{$1_S$}(A)
(F) edge node [right]{$1_T$}(B)
(E) edge node[left] {$p$}(G);
\end{tikzpicture}
\]
to the inclusion
\[   \blacksquare(X,H) \subseteq \blacksquare(X',H') .\]
\end{enumerate}
\end{con}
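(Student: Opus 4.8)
The plan is to build $\blacksquare \maps \bold{Mark} \to \bold{LinRel}$ as the image of the symmetric monoidal double functor $\blacksquare \maps \MMark \to \LLinRel$ of Thm.\ \ref{thm:main} under the horizontal bicategory construction, using that $\bold{Mark} = H(\MMark)$ (Thm.\ \ref{markbicat}) and $\bold{LinRel} = H(\LLinRel)$. The first step is to upgrade the observation that $\blacksquare$ is a symmetric monoidal double functor to the statement that it is a \emph{strict} one. Lemma \ref{lem:black-boxing_symmoncat} already shows $\blacksquare(X +_T Y, H \odot G) = \blacksquare(Y,G)\,\blacksquare(X,H)$ on the nose, and a one-line computation shows $\blacksquare(u(S))$ is the diagonal of $(\R^S \oplus \R^S) \oplus (\R^S \oplus \R^S)$, i.e.\ the identity relation on $\blacksquare(S)$, so $\blacksquare$ strictly preserves units too. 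Hence its comparison $2$-cells for horizontal composition and units are identities, and (since $\LLinRel$ is degenerate) it automatically carries the associator and unitors of $\MMark$ to those of $\LLinRel$.

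Next I would invoke the functoriality of the horizontal bicategory construction on symmetric monoidal double functors. A double functor preserves identity vertical $1$-morphisms, hence sends globular $2$-morphisms to globular $2$-morphisms; so $\blacksquare$ restricts to an assignment on the objects, horizontal $1$-cells, and globular $2$-morphisms of $\MMark$ and $\LLinRel$, and the pseudofunctor compositor/unitor of $H(\blacksquare)$ are supplied by the (here identity) comparison cells of $\blacksquare$, while the $2$-cells witnessing that $H(\blacksquare)$ is symmetric monoidal are supplied by the structure isomorphisms $\chi$ and $\mu$ of $\blacksquare$ as a symmetric monoidal double functor. Every coherence axiom for $H(\blacksquare)$ — the pseudofunctor hexagon and unit laws, and the symmetric monoidal functor axioms — is an equation between $2$-morphisms of $\bold{LinRel}$, and $\bold{LinRel}$ has at most one $2$-morphism between any parallel pair of $1$-morphisms, so these hold automatically. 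Reading off the behaviour of $H(\blacksquare)$ then yields exactly items (i)–(iii): on objects it is $S \mapsto \R^S \oplus \R^S$; on horizontal $1$-cells it is the black-boxing relation of Def.\ \ref{defn:black-boxing}; and on a morphism of open Markov processes with identity vertical legs it is Lemma \ref{lem:black-boxing_2-morphisms} specialized to $f = 1_S$, $g = 1_T$, where the proof's identity $f_* i^* = i'^* p_*$ shows the resulting $2$-morphism in $\LLinRel$ is precisely the inclusion $\blacksquare(X,H) \subseteq \blacksquare(X',H')$.

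The main obstacle is the second step. The excerpt only imports Shulman's theorem (Thm.\ \ref{Shulman}) that the horizontal bicategory of an isofibrant symmetric monoidal double category \emph{is} a symmetric monoidal bicategory, not that the assignment $\lD \mapsto H(\lD)$ is functorial for symmetric monoidal double functors. Establishing this functoriality — unwinding Shulman's definitions to extract the pseudofunctor data and the monoidal comparison data of $H(\blacksquare)$ from the data of $\blacksquare$, and checking the full list of coherence conditions — is the genuine work, and is why the result is stated as a conjecture rather than a theorem. Two features make it tractable in the case at hand: $\blacksquare$ is strict, so the pseudofunctor comparison cells are identities; and $\bold{LinRel}$ is degenerate, so every coherence equation among $2$-morphisms holds for free. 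Thus the only mathematical inputs required are already available — preservation of horizontal composition (Lemma \ref{lem:black-boxing_symmoncat}), well-definedness on globular $2$-morphisms (Lemma \ref{lem:black-boxing_2-morphisms}), and compatibility with $\oplus$ on objects and horizontal $1$-cells (from the proof of Thm.\ \ref{thm:main}) — and what remains is bookkeeping against Shulman's definitions.
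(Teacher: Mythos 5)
The paper gives no proof of this statement: it is explicitly left as a conjecture, with only the remark that ``morally'' one should be able to transport the double functor $\blacksquare \maps \MMark \to \LLinRel$ through the horizontal bicategory construction. Your proposal is therefore not being measured against an actual argument in the paper, and it does not itself close the conjecture --- but it is an accurate and well-organized account of the intended route, and, importantly, you have correctly located the one genuine gap: the paper imports from Shulman only the statement that $H(\lD)$ is a symmetric monoidal bicategory when $\lD$ is isofibrant, not that $\lD \mapsto H(\lD)$ is functorial on symmetric monoidal double functors. That missing functoriality is exactly why the authors state this as a conjecture rather than a theorem, so your diagnosis matches the paper's own implicit reasoning.

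The checkable sub-claims you make are all correct: Lemma \ref{lem:black-boxing_symmoncat} gives preservation of horizontal composition on the nose; $\blacksquare(u(S))$ is the relation $\{(v,I,v,I)\}$, i.e.\ the identity relation on $\R^S \oplus \R^S$, so units are preserved strictly; $\blacksquare$ sends identity vertical 1-morphisms to identities and hence globular 2-morphisms to globular 2-morphisms; specializing Lemma \ref{lem:black-boxing_2-morphisms} to $f = 1_S$, $g = 1_T$ yields exactly the inclusion $\blacksquare(X,H) \subseteq \blacksquare(X',H')$; and the degeneracy of $\LLinRel$ (hence of $\bold{LinRel}$) makes every coherence equation among 2-morphisms automatic. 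One small terminological slip: the monoidal comparison data of the double functor $\blacksquare$ is the list of four isomorphisms given in the proof of Thm.\ \ref{thm:main}, not the globular isomorphisms $\chi$ and $\mu$ of the double categories themselves, which is what those symbols denote in Sections \ref{sec:MMark}--\ref{sec:black-boxing}. This does not affect the substance. In short: your proposal is the right strategy and honestly flags that the remaining work is to establish (or cite) the functoriality of the horizontal bicategory construction and then verify Shulman's coherence list; until that is done, the statement remains, as in the paper, a conjecture.
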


\subsection*{Acknowledgements}

We thank Tobias Fritz and Blake Pollard for many helpful conversations, Daniel Cicala
for the commutative cube, and Michael Shulman and other denizens of the $n$-Category
Caf\'e for helping us understand the importance of having cospans with monic legs.  
We thank the anonymous referee for many useful suggestions.
JB also thanks the Centre for Quantum Technologies, where some of this work was done.


\begin{thebibliography}{100}

\bibitem{And} D.\ Andrieux, Bounding the coarse graining error in hidden Markov dynamics, \emph{Appl.\ Math.\ Lett.\ }\textbf{24} (2012), 1734--1739. Available as \href{https://arxiv.org/abs/1104.1025}{arXiv:1104.1025}.

\bibitem{BE} J.\ C.\ Baez, J.\ Erbele, Categories in control, \textsl{Theory Appl.\ Categ.\ }\textbf{30} (2015), 836--881. Available at \href{https://arxiv.org/abs/1405.6881}{arXiv:1405.6881}.

\bibitem{BF} J.\ C.\ Baez and B.\ Fong, A compositional framework for passive linear networks, 
\textsl{Theory Appl.\ Categ.\ }\textbf{33} (2018), 1158--1222.  Available as \href{http://arxiv.org/abs/1504.05625}{arXiv:1504.05625}.

\bibitem{BFP} J.\ C.\ Baez, B.\ Fong and B.\ Pollard, A compositional framework for Markov processes, \textsl{J.\ Math.\ Phys.\ } \textbf{57} (2016), 033301. Available as \href{http://arxiv.org/abs/1508.06448}{arXiv:1508.06448}.

\bibitem{BP} J.\ C.\ Baez and B. \ Pollard, A compositional framework for reaction networks,
\textsl{Rev.\ Math.\ Phys.\ } \textbf{29} (2017), 1750028.   Available as \href{https://arxiv.org/abs/1704.02051}{arXiv:1704.02051}.

\bibitem{Benabou} J.\ B\'enabou, \textsl{Introduction to Bicategories}, Lecture Notes in Mathematics \textbf{47}, Springer, Berlin, 1967, pp.\ 1--77.

\bibitem{Buchholz} P.\ Buchholz, Exact and ordinary lumpability in finite Markov chains, \textsl{Applied Probability Trust} \textbf{31} (1994), 59--75.  Available at \href{http://citeseerx.ist.psu.edu/viewdoc/summary?doi=10.1.1.45.1632}{http://citeseerx.ist.psu.edu/viewdoc/summary?doi=10.1.1.45.1632}.

\bibitem{Brown1} R.\ Brown and C.\ B.\ Spencer, Double groupoids and crossed modules, 
\textsl{Cah.\ Top.\ G\'eom.\ Diff.} \textbf{17} (1976), 343--362.

\bibitem{Brown2} R.\ Brown, K.\ Hardie, H.\ Kamps and T.\ Porter, The homotopy double groupoid of a Hausdorff space, \textsl{Theory Appl.\ Categ.} \textbf{10} (2002), 71--93.

\bibitem{Panan} F.\ Clerc, H.\ Humphrey and P.\ Panangaden, Bicategories of Markov processes, in 
\textsl{Models, Algorithms, Logics and Tools}, Lecture Notes in Computer Science \textbf{10460}, Springer, Berlin, 2017, pp.\ 112--124.  Available at \href{http://www.cs.mcgill.ca/~prakash/Pubs/bicats_final.pdf}{http://www.cs.mcgill.ca/$\sim$prakash/Pubs/bicats$\underline{\;\;}$final.pdf}.

\bibitem{Courser}  K.\ Courser, A bicategory of decorated cospans, \textsl{Theory Appl.\ Categ.} \textbf{32} (2017), 995--1027.  Available as \href{https://arxiv.org/pdf/1605.08100v2.pdf}{arXiv:1605.08100}.

\bibitem{Ehresmann63} C.\ Ehresmann, Cat\'egories structur\'ees III: Quintettes et applications covariantes,  \textsl{Cah.\ Top.\ G\'eom.\ Diff.} \textbf{5} (1963), 1--22.

\bibitem{Ehresmann65} C.\ Ehresmann, \textsl{Cat\'egories et Structures,} Dunod, Paris, 1965.

\bibitem{EngelNagel} K.-J.\ Engel and R.\ Nagel, \textsl{One-Parameter Semigroups for Linear Evolution Equations}, Springer, Berlin, 1999.


\bibitem{FongThesis} B.\ Fong, \textsl{The Algebra of Open and Interconnected Systems}, Ph.D. thesis, University of Oxford, 2016.  Available as \href{https://arxiv.org/abs/1609.05382}{arXiv:1609.05382}.

\bibitem{Francesco} L.\ de Francesco Albasini, N.\ Sabadini and R.\ F.\ C.\ Walters, The compositional construction of Markov processes, \textsl{Appl.\ Cat.\ Str.\ }{\bf 19} (2011), 425--437. Available as \href{http://arxiv.org/abs/0901.2434}{arXiv:0901.2434}.

\bibitem{GP1} M.\ Grandis and R.\ Par\'e, Limits in double categories, \textsl{Cah.\ Top.\ G\'eom.\ Diff.} \textbf{40} (1999), 162--220.

\bibitem{GP2} M.\ Grandis and R.\ Par\'e, Adjoints for double categories, 
 \textsl{Cah.\ Top.\ G\'eom.\ Diff.} \textbf{45} (2004), 193--240.

\bibitem{LackSobocinski1} S.\ Lack and P.\ Soboci\'nski, Adhesive categories, in \textsl{International Conference on Foundations of Software Science and Computation Structures: FOSSACS 2004}, ed.\ I.\ Walukiewicz, Springer, Berlin, 2004, pp.\ 273--288.  Available at \href{http://users.ecs.soton.ac.uk/ps/papers/adhesive.pdf}{http://users.ecs.soton.ac.uk/ps/papers/adhesive.pdf}.

\bibitem{LackSobocinski2} S.\ Lack and P.\ Soboci\'nski, Toposes are adhesive, in
\textsl{Graph Transformations: ICGT 2006}, eds.\ A.\ Corradi \textsl{et al}, 
Lecture Notes in Computer Science \textbf{4178}, Springer, Berlin, pp.\ 184--198.
 Available at \href{http://users.ecs.soton.ac.uk/ps/papers/toposesAdhesive.pdf}{http://users.ecs.soton.ac.uk/ps/papers/toposesAdhesive.pdf}.

\bibitem{LS} E.\ Lerman and D.\ Spivak, An algebra of open continuous time dynamical systems and networks. Available as \href{http://arxiv.org/abs/1602.01017}{arXiv:1602.01017}.

\bibitem{Norris} J.\ R.\ Norris, \textsl{Markov Chains}, Cambridge U.\ Press, Cambridge, 1998. 

\bibitem{PollardThesis} B.\ S.\ Pollard, \textsl{Open Markov Processes and Reaction Networks}, 
Ph.D.\ thesis, University of California at Riverside, 2017.   Available as \href{https://arxiv.org/abs/1709.09743}{arXiv:1709.09743}.

\bibitem{Shulman} M.\ Shulman, Constructing symmetric monoidal bicategories. Available as \href{http://arxiv.org/abs/1004.0993}{arXiv:1004.0993}.

\bibitem{Stay} M.\ Stay,  Compact closed bicategories. \textsl{Theory Appl.\ Categ.} 
\textbf{31} (2016), 755--798.   Available as \href{http://arxiv.org/abs/1301.1053}{arXiv:1301.1053}.

\end{thebibliography}
\end{document}